\keywords{Cartesian bicategories, axiom of choice, string diagrams}
\tikzstyle{Black}=[fill=black, draw=black, shape=circle, inner sep=0, minimum size=6pt]
\tikzstyle{Small box}=[fill=white, draw=black, shape=chamfered rectangle, minimum size=8, chamfered rectangle corners=north east, chamfered rectangle xsep=2, chamfered rectangle ysep=2, tikzit shape=rectangle]
\tikzstyle{Small opbox}=[fill=white, draw=black, shape=chamfered rectangle, tikzit draw=white, tikzit fill={rgb,255: red,191; green,191; blue,191}, minimum size=8, chamfered rectangle corners=north west, chamfered rectangle xsep=2, chamfered rectangle ysep=2, tikzit shape=rectangle]
\tikzstyle{Red}=[fill=red, draw=black, shape=circle]
\tikzstyle{map small}=[fill=white, draw=black, shape=rounded rectangle, minimum width=23, minimum height=16, rounded rectangle west arc=none, tikzit shape=rectangle, tikzit draw=blue]
\tikzstyle{comap small}=[fill=white, draw=black, shape=rounded rectangle, minimum width=23, minimum height=16, rounded rectangle east arc=none, tikzit shape=rectangle, tikzit draw=red]
\tikzstyle{Medium box}=[fill=white, draw=black, shape=chamfered rectangle, minimum size=20, chamfered rectangle corners=north east, chamfered rectangle xsep=2, chamfered rectangle ysep=2, tikzit shape=rectangle]
\tikzstyle{Empty}=[fill=white, draw=black, shape=rectangle, dashed, minimum size=16, tikzit fill={rgb,255: red,191; green,191; blue,191}, tikzit draw=black]
\tikzstyle{White}=[fill=white, draw=black, shape=circle, inner sep=0, minimum size=6pt]
\tikzstyle{Vertex}=[fill=black, draw=black, shape=circle, inner sep=0, minimum size=3pt]
\tikzstyle{Edge}=[fill=white, draw=black, shape=rectangle, minimum size=6]
\tikzstyle{dashedge}=[dashed, -]
\tikzstyle{dasharrow}=[->, dashed]
\tikzstyle{twoheadarrow}=[<->]
\newcommand{\from}{\colon}
\newcommand{\seq}{\ensuremath{\mathrel{;}}}
\DeclareMathOperator{\Hom}{Hom}
\DeclareMathOperator{\Set}{\mathsf{Set}}
\DeclareMathOperator{\FinSet}{\mathsf{FinSet}}
\newcommand{\Relc}[1]{\Rel(#1)}
\DeclareMathOperator{\Rel}{\mathsf{Rel}}
\DeclareMathOperator{\LinRel}{\mathsf{LinRel}}
\DeclareMathOperator{\id}{id}
\newcommand{\MorphTame}{F}
\newcommand{\Bicat}{\mathcal{B}}
\newcommand{\Cat}{\mathcal{C}}
\newcommand{\Span}[1]{\mathsf{Span} \, #1}
\newcommand{\Spanleq}[1]{\mathsf{Span}^{\leq} #1}
\newcommand{\Spantilde}[1]{\mathsf{Span}^{\sim} #1}
\newcommand{\SpanS}[1]{\mathsf{Span}^{\mathcal{S}} #1}
\newcommand{\SpanI}[1]{\mathsf{Span}^{\mathcal{I}} #1}
\newcommand{\Cospantilde}[1]{\mathsf{Cospan}^{\sim} #1}
\newcommand{\mc}[1]{\mathcal{#1}}
\newcommand{\op}[1]{#1^{\opposite}}
\DeclareMathOperator{\opposite}{op}
\DeclareMathOperator{\Map}{Map}
\DeclareMathOperator{\Comap}{Comap}
\newcommand{\scale}[1]{\scalebox{0.7}{#1}}
\newcommand{\scalex}[1]{\scalebox{0.6}{#1}}
\newcommand{\stikzfig}[1]{\scale{\tikzfig{#1}}}
\newcommand{\sxtikzfig}[1]{\scalex{\tikzfig{#1}}}
\begin{document}

\title{Cartesian bicategories with choice}

\author[F.~Bonchi]{Filippo Bonchi}
\address{University of Pisa}
\thanks{This work is
supported by the Ministero dell'Universit\`a e della Ricerca Scientifica 
of Italy under Grant No.\ 201784YSZ5, PRIN2017 -- ASPRA
(\emph{Analysis of Program Analyses})}

\author[J.~Seeber]{Jens Seeber}
\address{University of Pisa}

\author[P.~Soboci{\'n}ski]{Pawe{\l} Soboci{\'n}ski}
\address{Tallinn University of Technology}
\thanks{Supported 
by the ESF funded Estonian IT Academy research measure
(project 2014-2020.4.05.19-0001).}

\begin{abstract}
Relational structures are emerging as ubiquitous mathematical machinery
in the semantics of open systems of various kinds.
Cartesian bicategories are a well-known categorical algebra of relations that has proved especially useful in recent applications.

The passage between a category and its bicategory of relations is an important question that has been widely studied for decades. We study an alternative construction that yields a cartesian bicategory of relations. Its behaviour is closely related to the axiom of choice, which itself can be expressed in the language of cartesian bicategories.
\end{abstract}

\maketitle

\section*{Introduction}
Cartesian bicategories (of relations) were introduced in~\cite{carboni1987cartesian} as a categorical algebra of relations,
and as an alternative to Freyd and Scedrov's allegories~\cite{freyd1990categories}. RFC Walters had a certain distaste for the approach through allegories;
he referred to the modular law of allegories as a \emph{formica mentale}, a
``complication which prevents thought''~\cite{formica}.

In recent years
cartesian bicategories
have received renewed attention by researchers interested in string-diagrammatic languages. Indeed, thanks to the compact closed structure induced by Frobenius bimonoids, cartesian bicategories have proved to be a powerful theoretical framework in the compositional studies of different kinds of feedback systems. For instance, signal flow graphs \cite{mason1953feedback}---circuit-like specifications of linear dynamical systems---form a cartesian bicategory~\cite{bonchi2017refinement}. Moreover, the fact that cartesianity only holds laxly makes them able to serve as ``resource-sensitive'' syntax, as outlined in~\cite{relationaltheories}, where free cartesian bicategories were proposed as a resource-sensitive generalisation of Lawvere theories.

Free cartesian bicategories were also used in~\cite{GCQ}, where we showed that their algebraic presentation can be seen as an equational characterisation of well-known logical
preorders, namely those arising from query inclusion of conjunctive queries (aka regular logic). The deep relationship between cartesian bicategories
and regular logic---already alluded to in~\cite{carboni1987cartesian}---was also recently touched upon by Fong and Spivak~\cite{fong2018graphical}.

In cartesian bicategories, it is important to distinguish between arbitrary morphisms---which can be thought of as relations---and a certain class of morphisms called \emph{maps}, which can be thought of as functions. A fundamental result~\cite[Theorem~3.5]{carboni1987cartesian} states that, for a cartesian bicategory $\mathcal{B}$ satisfying the property of \emph{functional completeness}, (i) the subcategory of maps (denoted by $\Map{\mathcal{B}}$) is regular and (ii) the category of relations over the category of maps ($\Relc{\Map{B}}$) is biequivalent to $\mathcal{B}$.
Unfortunately, this beautiful result is not relevant for free cartesian bicategories: for instance the categories obtained by the algebraic presentations in~\cite{relationaltheories} and~\cite{GCQ} do not arise from the $\Relc{\cdot}$ construction.

For this reason in~\cite{GCQ}, we needed to rely on an alternative construction $\Spantilde{}$ that we believe is of independent interest.
The construction has previously appeared in the literature~\cite{guitart1980relations}, but has thus far not received the attention that it merits.
 First, it requires less structure of the underlying category: while $\Relc{\cdot}$ requires a regular category, $\Spantilde{}$ requires merely the presence of weak pullbacks,
which satisfy the existence clause in the universal property of pullbacks, but not necessarily the uniqueness clause. Second, while in the category of sets and functions both constructions yield the usual category of relations, as we shall see, there are important cases in which they differ.

\medskip

Our first main contribution is an analogue of the aforementioned result for $\Spantilde{}$, namely that $\Spantilde{\Map{\mathcal{B}}}$ is biequivalent to $\mathcal{B}$. In this setting, Carboni and Walters' functional completeness can be relaxed to a weaker condition that we call \emph{having enough maps}, but an additional assumption is necessary: $\mathcal{B}$ has to satisfy
the \emph{axiom of choice}. Indeed, our first main result (Theorem~\ref{thm:Choice}) asserts that a cartesian bicategory $\mathcal{B}$ with enough maps satisfies the axiom of choice \emph{if and only if} $\mathcal{B}$ is biequivalent to $\Spantilde{\Map{\mathcal{B}}}$.

This characterisation motivates a closer look at the axiom of choice,
one of the best known---and most controversial---axioms of set theory~\cite{herrlich2006axiom}.
It has many ZF-equivalent formulations, some requiring only very basic concepts. One is:
\begin{center}
  Every total relation contains a map.
\end{center}
We observe that this formulation is natural to state in the language of cartesian bicategories. Another way of viewing our result, therefore,
is that cartesian bicategories with enough maps, satisfying the axiom of choice are precisely those that arise via the $\Spantilde{}$ construction.

\medskip

Given the innovations of
topos theory~\cite{elephant} in foundations of mathematics, the question of whether or not to accept the axiom of choice
is nowadays less absolute (and therefore less heated). Indeed, if a topos is a mathematical ``universe'', then it
holds in some and not in others, thus accepting/rejecting choice turns from a philosophical question into a practical matter.
Interpreting choice inside a category does not need the full power of the internal language of a topos -- it suffices if the category in
question captures basic properties of relations. Cartesian bicategories can therefore be seen as an amusing setting for the study of the axiom of choice. Indeed,
the advantage of a weaker language is a finer grained analysis: e.g. we shall see that properties well-known to be equivalent to choice in ZF (e.g. surjective functions split) are different as properties of cartesian bicategories.

\medskip

Our second main contribution is the introduction of a generalisation of the $\Spantilde{}$ construction, that we call $\SpanS{}$. Here $\mathcal{S}$ is a \emph{systems of covers}, roughly a class of maps satisfying certain closure properties. As for $\Spantilde{}$, we identify necessary and sufficient conditions ensuring that
a cartesian bicategory $\Bicat$ can then be reconstructed as $\SpanS{\Map(\Bicat)}$ (Theorem \ref{thm:ReconstructTame}). These conditions are summarised in the notion of \emph{tame} cartesian bicategory.

Our interest in this novel construction is twofold: on the one hand, it allows for handling cartesian bicategories that are freely generated not only from a signature, like those in~\cite{GCQ}, but also from a set of equations, e.g.\  from a relational algebraic presentation~\cite{relationaltheories}. As an example, we show that the prop $\mathbf{ERel}$ of equivalence relations~\cite{ZanasiThesis,lmcs:4796,coya2017corelations,fong2016algebra,bruni2001some}, corresponding to the theory of non-empty sets~\cite{relationaltheories}, can be obtained as a $\SpanS{}$, but not as a  $\Spantilde{}$.

On the other hand, looking at $\SpanS{}$ allows us to give a simpler proof of Theorem~\ref{thm:Choice}. Indeed, Theorem \ref{thm:ReconstructTame} ensures that $\Bicat \cong \SpanS{\Map(\Bicat)}$ for $\mathcal{S}$ being the class of surjective maps. Observing that if surjectives split then $\SpanS{}$ and  $\Spantilde{}$ coincide is now enough to conclude the statement of Theorem~\ref{thm:Choice}.

\emph{Structure of the paper.}
In Section \ref{sec:Cartesian}, we use cartesian categories  as a convenient starter to introduce the string diagrammatic language and, at the same time, some notions relevant for cartesian bicategories. We provide an overview of a few important concepts of cartesian bicategories and their maps in Sections~\ref{sec:CartBicat} and \ref{sec:Maps}.
In Section~\ref{sec:AC} we define the axiom of choice in cartesian bicategories, the property of ``having enough maps'' and discuss ramifications of this, including
a useful characterisation. In Section~\ref{sec:Span} we introduce the $\mathsf{Span}^{\sim}$ construction and prove several results that are relevant for the proof of Theorem~\ref{thm:Choice}.
We introduce tame cartesian bicategories in Section~\ref{sec:CartBicatAndMaps} and the $\SpanS{}$ construction in Section~\ref{sec:spanS}.  There we also show Theorem~\ref{thm:ReconstructTame} and we use it to provide a proof for Theorem~\ref{thm:Choice}.
In Section \ref{sec:relatedWork}, we compare the constructions $\Rel(\Cat)$ and $\SpanS{\Cat}$ and show that they coincide when $\mathcal{S}$ is the class of regular epis. In particular, $\Rel(\Cat)$ and $\Spantilde{\Cat}$ coincide whenever regular epis split.

We would like to thank Aleks Kissinger and the team behind \href{https://tikzit.github.io}{TikZiT}, which was used to create the diagrams in this paper.

\section{Cartesian categories}\label{sec:Cartesian}

We will use \emph{string diagrams} as an intuitive graphical notation for what is formally represented as morphisms in a
symmetric monoidal category $\mathcal{C}$ with monoidal product $\otimes$ and monoidal unit $I$, details for this can be found in~\cite{selinger2010survey}.
Here we would like to give an intuitive way to read string diagrams, one that doesn't require the machinery of category theory.

Intuitively, $\stikzfig{f}$ denotes a process $f$ that receives an input of type $X$ and produces output of type $Y$. We will also write this as
$f \from X \to Y$.
It is possible to talk about several inputs and several outputs by stacking wires, for example $\stikzfig{Bigf}$ receives inputs of type $A$ and $B$
respectively and produces output of type $C$ and $D$ respectively.
In other words, a compound type is formed by stacking wires and we will write such compound type formed from types $A$ and $B$ as $A \otimes B$.

A special process is given by $\stikzfig{IdX}$, which is the process that doesn't do anything -- the identity on type $X$. We now have several ways
to build larger and more interesting processes:
For $f\colon X \to Y$ and $g\colon Y \to Z$, the composition $\stikzfig{fseqg}$, which in symbols we will also denote as $f \seq g \colon X \to Z$.
This is the process that first applies $f$ and then applies $g$ on the result.
For $f\colon X \to Y$ and $g\colon Z \to W$, their parallel composition is $\stikzfig{f+g}$, in symbols denoted as
$f \otimes g \colon X \otimes Z \to Y \otimes W$. This is the process that executes $f$ and $g$ in parallel.
It is now possible to iterate these ways of constructing diagrams to form processes of arbitrary complexity.
If we want to change the order of inputs or outputs, we can use the symmetries $\stikzfig{Sym}$.
As a special case, there is also a type $I$ that corresponds to having no wires and an empty diagram, denoted $\stikzfig{Empty}$ that represents the process
of doing nothing to no input and obtaining no output.

The formal theory of symmetric monoidal categories ensures that we do not need to worry about how our diagrams are constructed.
If two diagrams have the same connectivity, they represent the same process. This allows us to use diagrammatic reasoning, that is a formal
manipulation of diagrams, that now behave like a two-dimensional analogue of the terms used in algebra.

If it is clear from the context how wires are labelled, we will declutter our diagrams and omit labels.

We use this string diagrammatic language to introduce cartesian categories, which are those symmetric monoidal categories where it is possible to
copy and discard, which are processes that will be represented by $\stikzfig{Copy}$ and $\stikzfig{Dis}$ respectively.

\begin{defi}\label{def:cartCat}
  A cartesian category is a symmetric monoidal category $(\mathcal{B},\otimes, I)$, where
  every object $X \in \mathcal{B}$ is equipped with morphisms
  \[
  \stikzfig{CopyX} \from X \to X \otimes X \quad \text{and} \quad \stikzfig{DisX} \from X \to I
  \]
such that
\medskip
  \begin{enumerate}
    \item $\stikzfig{CopyX}$ and $\stikzfig{DisX}$ form a cocommutative comonoid, that is they satisfy
      \begin{align*}
       \sxtikzfig{Assoc} &
  \;\;\;   \;\;\; \;\;\; \;\; \sxtikzfig{Comm} &
       \sxtikzfig{Unit} \\
      \end{align*}
    \item Each morphism $f \from X \to Y$ is a comonoid homomorphism, that is
    \begin{align*}
        & \sxtikzfig{MapComult}
        & &\sxtikzfig{MapDis}
    \end{align*}
    \item The choice of comonoid on every object is compatible with the monoidal structure in the sense that
    \begin{align*}
      & \sxtikzfig{CohDis}
      & & \sxtikzfig{CohCopy}
    \end{align*}
    and
    \begin{align*}
        & \sxtikzfig{DisI}
      & & \sxtikzfig{CopyI}
    \end{align*}
  \end{enumerate}
\end{defi}

\begin{rem}
  In many interesting examples of monoidal categories, for example $\Set$ equipped with the usual Cartesian product, the monoidal structure is not strictly
  associative and unital but only up to natural isomorphisms called the associators and unitors.
  In $\Set$ these isomorphisms are necessary to move and remove parenthesis, for example to identify
  the tuple $((x,y),(z,())) \in (X \times Y) \times (Z \times 1)$ with the tuple $(x,(y,z)) \in X \times (Y \times Z)$
  which are different objects, despite the obvious similarity.
  Nevertheless, in string diagrams these natural isomorphisms do not feature and the justification for that lies in the coherence theorem for monoidal
  categories~\cite{maclane1963natural}, which says that every monoidal category is (monoidally) equivalent to a strict one. Therefore, string diagrams
  formally describe the strictification of a monoidal category, with the coherence theorem ensuring that nothing essential is lost in the process.
\end{rem}

\begin{exa}
  The most prominent example of a cartesian category is the category $\Set$ of sets and functions, equipped with the cartesian product as monoidal product.
  For each set $X$, the comonoid structure is given by the diagonal function $X\to X\times X$ and the unique function $X\to 1$.
\end{exa}

That the comonoid structure on every object is respected by every morphism induces a categorical product. In fact, this characterises cartesian categories
which is an observation first made in~\cite{fox_coalgebras_1976}.

\begin{prop}
  Cartesian categories are equivalently those symmetric monoidal categories where $\otimes$ is the categorical product and the monoidal unit $I$ is terminal.
  \label{prop:CartCopy}
\end{prop}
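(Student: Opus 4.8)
The plan is to prove the two implications separately, since the statement is an equivalence of two definitions of ``cartesian category''. The substantial content lies in the forward direction, where the abstract universal property must be extracted from the comonoid axioms; the converse is a routine verification.

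For the forward direction, assume $\mathcal{B}$ is cartesian in the sense of Definition~\ref{def:cartCat}, and write $\delta_X \from X \to X \otimes X$ and $\epsilon_X \from X \to I$ for the copy and discard maps. First I would show that $I$ is terminal: the discard map $\epsilon_X$ provides an arrow $X \to I$, and given any $f \from X \to I$, the counit-preservation half of axiom~(2) forces $f \seq \epsilon_I = \epsilon_X$; since the coherence conditions~(3) identify $\epsilon_I$ with the empty diagram (the identity on $I$), this gives $f = \epsilon_X$, hence uniqueness. Next I would exhibit $X \otimes Y$ as the categorical product. Define the projections $\pi_1 = \id_X \otimes \epsilon_Y$ and $\pi_2 = \epsilon_X \otimes \id_Y$ (up to unitors), and the pairing of $f \from Z \to X$, $g \from Z \to Y$ as $\langle f, g\rangle = \delta_Z \seq (f \otimes g)$. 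The equation $\langle f,g\rangle \seq \pi_1 = f$ (and dually for $\pi_2$) follows by using counit-preservation $g \seq \epsilon_Y = \epsilon_Z$ from axiom~(2), functoriality of $\otimes$, and the comonoid counit law $\delta_Z \seq (\id_Z \otimes \epsilon_Z) = \id_Z$ from axiom~(1).

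The crucial point is uniqueness of the mediating morphism. The key lemma is $\delta_{X \otimes Y} \seq (\pi_1 \otimes \pi_2) = \id_{X \otimes Y}$, which I would derive from coherence condition~(3)---expressing $\delta_{X \otimes Y}$ as $\delta_X \otimes \delta_Y$ reorganised by symmetries---together with the counit laws, so that the discards in $\pi_1 \otimes \pi_2$ cancel the copies. Given any $h \from Z \to X \otimes Y$ with $h \seq \pi_1 = f$ and $h \seq \pi_2 = g$, I then compute, using that $h$ is a comonoid homomorphism (axiom~(2), so $h \seq \delta_{X \otimes Y} = \delta_Z \seq (h \otimes h)$) and functoriality of $\otimes$,
\[
h = h \seq \delta_{X \otimes Y} \seq (\pi_1 \otimes \pi_2) = \delta_Z \seq (h \otimes h) \seq (\pi_1 \otimes \pi_2) = \delta_Z \seq \big((h \seq \pi_1) \otimes (h \seq \pi_2)\big) = \delta_Z \seq (f \otimes g) = \langle f, g\rangle,
\]
which establishes uniqueness and hence the universal property of the product.

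For the converse, assume $\otimes$ is the categorical product and $I$ is terminal. Define $\delta_X = \langle \id_X, \id_X\rangle$ via the product universal property and $\epsilon_X$ as the unique arrow $X \to I$. Each axiom of Definition~\ref{def:cartCat} then follows from uniqueness of mediating morphisms: the comonoid laws~(1) are instances of the universal property of the product, the homomorphism conditions~(2) are exactly the naturality of the diagonal and of the terminal map, and the coherence conditions~(3) record that the product structure on $X \otimes Y$ assembles componentwise from those on $X$ and $Y$. I expect the main obstacle to be the uniqueness clause of the forward direction, and within it the lemma $\delta_{X \otimes Y} \seq (\pi_1 \otimes \pi_2) = \id$, since this is the one place where coherence condition~(3) and the counit laws must be combined with care over the symmetry and unitor isomorphisms; every other step is straightforward comonoid bookkeeping.
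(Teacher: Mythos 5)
Your proposal is correct and follows essentially the same route as the paper: the same projections $\id_X \otimes \epsilon_Y$ and $\epsilon_X \otimes \id_Y$, the same pairing $\delta_Z \seq (f \otimes g)$, and the same uniqueness computation combining the comonoid-homomorphism property of $h$ with the compatibility of the comonoid with $\otimes$ (your key lemma $\delta_{X\otimes Y}\seq(\pi_1\otimes\pi_2)=\id$ is exactly the content of the paper's step marked $*$). The converse direction is likewise handled as in the paper, by defining the structure from the universal properties and checking the axioms routinely.
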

\begin{proof}
  \begin{itemize}
    \item Let $\otimes$ be the product and $I$ terminal. Let $\stikzfig{DisX}$ be the unique morphism $X \to I$ and let $\stikzfig{CopyX}$ be the diagonal
          morphism $\Delta \from X \to X \otimes X$, which is the unique morphism that makes the diagram
          \[
            \begin{tikzcd}[sep=small]
              & X \arrow[swap, bend right]{ddl}{\id_X} \arrow{d}{\Delta} \arrow[bend left]{ddr}{\id_X} & \\
              & \arrow{dl}{\pi_1} X \otimes X \arrow[swap]{dr}{\pi_2} \\
              X & & X
            \end{tikzcd}
          \]
          commute, where $\pi_i$ are the projections out of the product. It is easy to check that $\stikzfig{DisX}$ and $\stikzfig{CopyX}$ satisfy the
          required axioms.
    \item Assume we have $\stikzfig{CopyX}$ and $\stikzfig{DisX}$ on every object $X$ as in Definition~\ref{def:cartCat}.
          Then $I$ is terminal because let $f \from X \to I$ be any morphism, then
          \[
            \stikzfig{Terminal}
          \]
          where we briefly made the no-wire type $I$ visible.
          That $X \otimes Y$ is the product of $X$ and $Y$ can be seen as follows:
          The projection $\pi_1 \from X \otimes Y \to X$ is given by \[\stikzfig{ProjX}\]
          and likewise the projection $\pi_2 \from X \otimes Y \to Y$ by \[\stikzfig{ProjY}\]
          Given morphisms $f \from T \to X$ and $g \from T \to Y$, consider the induced morphism $\alpha \from T \to X \otimes Y$ given by
          \[
            \stikzfig{Prodfg}
          \]
          It is straightforward to check that this makes the diagram
          \[
            \begin{tikzcd}[sep=small]
              & T \arrow[swap, bend right]{ddl}{f} \arrow{d}{\alpha} \arrow[bend left]{ddr}{g} & \\
              & \arrow{dl}{\pi_1} X \otimes Y \arrow[swap]{dr}{\pi_2} \\
              X & & Y
            \end{tikzcd}
          \]
          commute. We will therefore prove uniqueness. Assume that there is $h \from T \to X \otimes Y$ making the same diagram
          \[
            \begin{tikzcd}[sep=small]
              & T \arrow[swap, bend right]{ddl}{f} \arrow{d}{h} \arrow[bend left]{ddr}{g} & \\
              & \arrow{dl}{\pi_1} X \otimes Y \arrow[swap]{dr}{\pi_2} \\
              X & & Y
            \end{tikzcd}
          \]
          commute, then $\stikzfig{Prod2}$ and $\stikzfig{Prod3}$. Therefore we have
          \[
            \stikzfig{Prod4}
          \]
  where the step marked $*$ uses compatibility of the comonoid with the monoidal structure and the fact that $h$ is a comonoid homomorphism.
  This shows that $\alpha$ is unique and therefore $X \otimes Y$ is the product of $X$ and $Y$.
  \end{itemize}
\end{proof}

\section{Cartesian bicategories}
\label{sec:CartBicat}

Going from a functional setting to a relational one, we need to reevaluate our intuition about diagrams. While a function $\stikzfig{f}$
has clearly defined input and output, these notions do not in general make sense for relations. For that reason we will draw arbitrary morphisms as
$\stikzfig{R}$ in the following. The relational analogue of cartesian categories, cartesian bicategories, have axioms very similar
to the former, but with some important differences.

\begin{defi}\label{def:cartBicat}
  A cartesian bicategory is a symmetric monoidal category $(\mathcal{B},\otimes, I)$ enriched over the category of posets.
  Every object $X \in \mathcal{B}$ is equipped with morphisms
  \[
  \stikzfig{CopyX} \from X \to X \otimes X \quad \text{and} \quad \stikzfig{DisX} \from X \to I
  \]
such that
\medskip
  \begin{enumerate}
    \item $\stikzfig{CopyX}$ and $\stikzfig{DisX}$ form a cocommutative comonoid, that is they satisfy
      \begin{align*}
       \sxtikzfig{Assoc} &
  \;\;\;   \;\;\; \;\;\; \;\; \sxtikzfig{Comm} &
       \sxtikzfig{Unit} \\
      \end{align*}
    \item $\stikzfig{CopyX}$ and $\stikzfig{DisX}$ have right-adjoints $\stikzfig{CocopyX}$ and $\stikzfig{CodisX}$ respectively, that is
      \begin{align*}
      & \sxtikzfig{UnitCopy}
      & \sxtikzfig{CounitCopy} \\[5pt]
      & \sxtikzfig{UnitDis}
      & \sxtikzfig{CounitDis}
      \end{align*}
    \item The Frobenius law holds, that is
      \[
        \sxtikzfig{Frob}
      \]
    \item Each morphism $R \from X \to Y$ is a lax comonoid homomorphism, that is
    \begin{align*}
        & \sxtikzfig{LaxCopy}
        & &\sxtikzfig{LaxDis}
    \end{align*}
    \item The choice of comonoid on every object is compatible with the monoidal structure in the sense that
    \begin{align*}
      & \sxtikzfig{CohDis}
      & & \sxtikzfig{CohCopy}
    \end{align*}
    and
    \begin{align*}
        & \sxtikzfig{DisI}
      & & \sxtikzfig{CopyI}
    \end{align*}
  \end{enumerate}
\end{defi}

\begin{defi}
  A morphism of cartesian bicategories is a monoidal functor preserving the ordering and the chosen monoids and comonoids.
\end{defi}

\begin{rem}
  Definition~\ref{def:cartBicat} is a slight deviation from the terminology used in~\cite{carboni1987cartesian}. What we simply call a cartesian bicategory
  here is called a cartesian bicategory of relations in~\cite{carboni1987cartesian}. Moreover, in the original definition of~\cite{carboni1987cartesian}, property (5) is replaced by requiring the uniqueness of the comonoid/monoid. However, as suggested in~\cite{2017arXiv170600526P}, compatibility with the
  monoidal structure seems to be the property of primary interest.
\end{rem}

The archetypal example of a cartesian bicategory is the category of sets and relations $\Rel$, with cartesian product of sets, hereafter denoted by $\times$, as monoidal product and $1=\{\bullet\}$ as unit $I$. To be precise, $\Rel$ has sets as objects and relations $R \subseteq X \times Y$ as arrows $X \to Y$. Composition and monoidal product are defined as expected:
\[R \seq S=\{(x,z) \,| \, \exists y \text{ s.t. } (x,y)\in R\text{ and } (y,z)\in S\},\]
\[R\otimes S =\{\big((x_1,x_2)\,, \, (y_1,y_2)\big) \,|\, (x_1,y_1)\in R \text{ and } (x_2,y_2)\in S \}.\]
 For each set $X$, the comonoid structure is given by the diagonal function $X\to X\times X$ and the unique function $X\to 1$, considered as relations. That is $\stikzfig{CopyX} = \{\big(x, (x,x) \big) \, |\, x\in X\}$ and $\stikzfig{DisX}=\{(x,\bullet) \,|\, x\in X\}$. Their right adjoints are given by their opposite relations: $\stikzfig{CocopyX}=\{\big((x,x),x \big) \, |\, x\in X\}$ and $\stikzfig{CodisX}=\{(\bullet,x) \,|\, x\in X\}$.
 Following the analogy with $\Rel$, we will often call arbitrary morphisms of a cartesian bicategory relations.

There are many examples of cartesian bicategories that are somewhat similar to $\Rel$, for instance $\LinRel$,
the category of linear relations of vector spaces where the monoidal product is the direct sum of vector spaces,
for further details see~\cite{JoshConcur}.
Nevertheless, there are examples of cartesian bicategories that are significantly different, i.e. that are not a form of $\Rel$ with additional structure.
We will show some of those examples at the end of this section (Example \ref{ex:ERPER}), while $\Rel$ will serve to drive our intuition.

\medskip

We commence the exploration of the theory of cartesian bicategories with an elementary fact about the right adjoints.
\begin{lem}
 $\stikzfig{CocopyX}$ and $\stikzfig{CodisX}$ form a commutative monoid, that is
      \begin{align*}
        &\sxtikzfig{CoAssoc} &
        & \sxtikzfig{CoComm} &
        & \sxtikzfig{CoUnit}
      \end{align*}
\end{lem}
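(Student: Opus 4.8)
The plan is to derive the three monoid laws for the right adjoints purely from the corresponding comonoid laws by a duality argument, exploiting the uniqueness of adjoints in a poset-enriched setting. Write $\delta \from X \to X \otimes X$ and $\varepsilon \from X \to I$ for the copy and discard maps of axiom~(1), and $\delta^{\dagger}$, $\varepsilon^{\dagger}$ for their respective right adjoints supplied by axiom~(2). The key preliminary observation is that, in a category enriched over posets, right adjoints are unique when they exist; consequently the assignment sending a morphism to its right adjoint is well defined and enjoys strong functoriality properties that let us transport equations wholesale.

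First I would record the following facts about the operation $(-)^{\dagger}$ of taking right adjoints: it is contravariant for sequential composition, $(R \seq S)^{\dagger} = S^{\dagger} \seq R^{\dagger}$; it is strict for the monoidal product, $(R \otimes S)^{\dagger} = R^{\dagger} \otimes S^{\dagger}$; it fixes identities, $\id^{\dagger} = \id$; and it fixes symmetries, since each symmetry is its own inverse and an isomorphism is right adjoint to its inverse. Each of these is an immediate consequence of uniqueness: for instance $S^{\dagger} \seq R^{\dagger}$ is readily checked, using the unit and counit inequalities of $R \dashv R^{\dagger}$ and $S \dashv S^{\dagger}$, to be right adjoint to $R \seq S$, whence it coincides with $(R \seq S)^{\dagger}$.

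With these tools in hand the three monoid laws follow mechanically. Each of the comonoid laws of axiom~(1) is an \emph{equality} of morphisms; applying $(-)^{\dagger}$ to such an equality yields another equality, and the functoriality properties above rewrite the two sides precisely into the two sides of the desired monoid law. Concretely, associativity of $\delta$ dualises to associativity of $\delta^{\dagger}$; commutativity, which involves a symmetry, dualises to commutativity of $\delta^{\dagger}$ because symmetries are self-adjoint; and the unit law $\delta \seq (\id \otimes \varepsilon) = \id$ dualises, via $(\id \otimes \varepsilon)^{\dagger} = \id \otimes \varepsilon^{\dagger}$, to $(\id \otimes \varepsilon^{\dagger}) \seq \delta^{\dagger} = \id$, which is exactly the unit law for the monoid $(\delta^{\dagger}, \varepsilon^{\dagger})$.

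The only real work, and the step I would treat most carefully, is establishing the functoriality of $(-)^{\dagger}$ — in particular its compatibility with $\otimes$ and the self-adjointness of the symmetries — since everything else is then a purely formal transport of the comonoid equations along this contravariant, monoidal operation. Alternatively, one may avoid invoking uniqueness of adjoints altogether and prove each law directly in string diagrams, bending the relevant copy and discard wires around using the zig-zag identities encoded by the unit and counit inequalities of axiom~(2); this is more elementary but combinatorially heavier, and amounts to the same duality performed one diagram at a time.
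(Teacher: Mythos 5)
Your proposal is correct and follows essentially the same route as the paper: the paper's proof likewise invokes uniqueness of right adjoints to transport the comonoid equations to the monoid ones, and then illustrates the mechanism by working out the unit law explicitly in string diagrams (the ``elementary'' alternative you mention at the end).
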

\begin{proof}
  A straightforward way of proving these properties is via the uniqueness of adjoints. In that way, they follow directly from the fact that
  $\stikzfig{CopyX}$ and $\stikzfig{DisX}$ form a cocommutative comonoid. To spell out this abstract observation more concretely,
  here we show how it translates into a proof of the last property:
  \[
    \stikzfig{CoUnitProof1}
  \]
  and conversely
  \[
    \stikzfig{CoUnitProof2}
  \]
\end{proof}

One of the fundamental properties of cartesian bicategories that follows from the existence of the monoid and comonoid on every object is that
every local poset $\Hom_{\Bicat}(X,Y)$ allows to take the intersection of relations and has a top element.

\begin{lem}
  Let $\Bicat$ be a cartesian bicategory and $X,Y \in \Bicat$. The poset $\Hom_{\Bicat}(X,Y)$ has a top element given by $\stikzfig{Top}$ and the meet of relations
  $R,S \from X \to Y$ is given by
  \[ \stikzfig{Meet} \]
\end{lem}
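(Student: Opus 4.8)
The plan is to prove the two claims—that $\stikzfig{Top}$ is the top element and that $\stikzfig{Meet}$ is the binary meet—separately, relying throughout on the adjunctions from axiom (2) and the lax comonoid property from axiom (4). For the top element, I would show that for any $R \from X \to Y$ we have $R \leq \stikzfig{Top}$, where $\stikzfig{Top}$ denotes the composite $\stikzfig{DisX} \seq \stikzfig{CodisY}$ (discard everything on $X$, then freely produce on $Y$ via the right adjoint of discard). First I would observe that $\stikzfig{DisX}$ is automatically the top element of $\Hom_\Bicat(X,I)$: since any $R \from X \to I$ is a lax comonoid homomorphism, the lax counit inequality from axiom (4) gives $R \leq \stikzfig{DisX}$ directly, because $I$ is the unit and the discard on $I$ is the identity. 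Then I would factor an arbitrary $R \from X \to Y$ through $Y$ and discard to reduce to this case, using monotonicity of composition (which holds since $\Bicat$ is poset-enriched and composition is a monotone operation).

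For the meet, the strategy is to verify the universal property of the binary meet in the poset $\Hom_\Bicat(X,Y)$: namely that $R \wedge S \defeq \stikzfig{Meet}$ satisfies $R \wedge S \leq R$, $R \wedge S \leq S$, and that any $T$ with $T \leq R$ and $T \leq S$ also satisfies $T \leq R \wedge S$. The proposed meet diagram copies the input via $\stikzfig{CopyX}$, applies $R$ and $S$ in parallel to the two copies, and then merges the two $Y$-outputs back together via $\stikzfig{CocopyY}$, the right adjoint of copy. To prove $R \wedge S \leq R$, I would use the counit of the copy adjunction on the $Y$-side together with the comonoid unit law on the $X$-side, inserting a discard to collapse the $S$-branch; the lax discard inequality for $S$ supplies the needed comparison. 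By the symmetry built into the diagram (commutativity of both copy and cocopy from axiom (1) and the preceding lemma), the inequality $R \wedge S \leq S$ follows identically.

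The main obstacle I anticipate is the lower-bound half of the meet property: showing $T \leq R \wedge S$ whenever $T \leq R$ and $T \leq S$. The clean way is to precompose $T$ with the copy comonoid and exploit that $T$ itself is a lax comonoid homomorphism, so that $T \seq \stikzfig{CopyY} \geq \stikzfig{CopyX} \seq (T \otimes T)$ after using the lax copy inequality from axiom (4). Then I would apply the hypotheses $T \leq R$ and $T \leq S$ to the two parallel copies (monotonicity of $\otimes$), and finally use the unit of the cocopy adjunction to recombine, recovering the diagram for $R \wedge S$. The delicate point is getting the direction of each inequality right: the adjunctions supply inequalities in fixed directions (unit $\leq$ and $\geq$ counit), and the lax homomorphism laws supply others, so the argument must chain them so that every step points toward $T \leq R \wedge S$ rather than away from it. I expect the Frobenius law of axiom (3) to be unnecessary here, with all the work carried by the adjunctions, the (co)commutative (co)monoid laws, and laxness; the whole proof should be a short sequence of diagrammatic inequalities in each case.
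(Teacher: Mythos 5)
Your strategy is the same as the paper's: the top element is obtained from the unit of the discard adjunction together with the lax discard inequality, and the meet via the two projection inequalities (collapsing one branch with a discard and then applying the comonoid/monoid unit laws) plus the greatest-lower-bound argument. This matches the paper's three displayed derivations, and you are right that the Frobenius law plays no role here.

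The one genuine problem sits exactly at the point you yourself call delicate. Writing $\Delta$ for copy and $\nabla$ for its right adjoint, you assert the lax copy inequality as $T \seq \Delta_Y \geq \Delta_X \seq (T \otimes T)$. Axiom (4) gives the \emph{opposite} orientation, $T \seq \Delta_Y \leq \Delta_X \seq (T \otimes T)$: in $\Rel$ this is the inclusion of $\{(x,(y,y)) \mid (x,y)\in T\}$ into $\{(x,(y,z)) \mid (x,y)\in T \text{ and } (x,z)\in T\}$, and the orientation you wrote characterises single-valued morphisms, so it fails for general $T$. With your orientation the chain does not close: from $T \leq T \seq \Delta_Y \seq \nabla_Y$ and $\Delta_X \seq (T\otimes T) \leq T \seq \Delta_Y$ nothing follows about $T$ versus $\Delta_X \seq (T\otimes T) \seq \nabla_Y$. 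Fortunately the axiom hands you precisely the direction the argument needs, so the intended chain $T \leq T\seq\Delta_Y\seq\nabla_Y \leq \Delta_X\seq(T\otimes T)\seq\nabla_Y \leq \Delta_X\seq(R\otimes S)\seq\nabla_Y$ is correct once the inequality is reoriented --- but as written that step is unjustified. A minor further imprecision: in proving $R\wedge S\leq R$, the ingredient needed on the codomain side is the unit law of the monoid formed by $\nabla$ and the codiscard (established in the first lemma of the section on cartesian bicategories), not the counit of the copy adjunction; the rest of your ingredient list there is right.
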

\begin{proof}
  \begin{itemize}
    \item For any relation $R \from X \to Y$ we have
      \[\stikzfig{TopProof}\]
    \item Let $R,S \from X \to Y$. Then we have
      \[\stikzfig{MeetLeqR}\]
      and in the same way
      \[\stikzfig{MeetLeqS}\]
      If furthermore $T \leq R$ and $T \leq S$, then we have
      \[\stikzfig{TLeqMeet}\]
      and therefore $\stikzfig{Meet}$ is the meet of $R$ and $S$.
  \end{itemize}
\end{proof}

The existence of meets allows us to characterise inequalities through equalities. It is generally true in a poset with meets that $x \leq y$
if and only if $x \wedge y = x$, where $\wedge$ denotes the meet. This can for example be found in~\cite{lattices}.

\begin{prop}
  Let $\Bicat$ be a cartesian bicategory and $R,S \from X \to Y$ morphisms. We have $R \leq S$ if and only if
  \[
    \stikzfig{Meet=R}
  \]
  \label{prop:IneqEq}
\end{prop}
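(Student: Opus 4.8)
The plan is to deduce the statement from the general order-theoretic equivalence recalled immediately above, namely that in any poset equipped with binary meets one has $x \le y$ if and only if $x \wedge y = x$. The only real work is to transport this fact into the local poset $\Hom_{\Bicat}(X,Y)$ using the explicit description of meets established in the previous lemma, where the meet of $R$ and $S$ is realised by the diagram \stikzfig{Meet}.

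First I would recall from that lemma that $\Hom_{\Bicat}(X,Y)$ has all binary meets and that \stikzfig{Meet} is precisely $R \wedge S$; in particular this diagram is a lower bound of the pair $\{R,S\}$ (so \stikzfig{Meet}$\;\le R$ and \stikzfig{Meet}$\;\le S$) and it is the greatest such lower bound.

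For the forward implication, assuming $R \le S$, I would observe that $R$ is itself a lower bound of $\{R,S\}$, since $R \le R$ holds trivially and $R \le S$ by hypothesis; maximality of the meet then gives $R \le R \wedge S$. As the meet always satisfies $R \wedge S \le R$, antisymmetry of the poset yields the equality \stikzfig{Meet=R}. For the converse, assuming \stikzfig{Meet=R}, that is $R = R \wedge S$, I would use that the meet lies below $S$ to conclude $R = R \wedge S \le S$.

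I do not expect a genuine obstacle here: the argument is a verbatim instance of the semilattice equivalence, and its only categorical content is the appeal to the preceding lemma identifying the diagram \stikzfig{Meet} with the meet in each hom-poset. The poset-enrichment of $\Bicat$ guarantees that $\le$ is a genuine partial order on every $\Hom_{\Bicat}(X,Y)$, so antisymmetry is available, which is the one property on which the equivalence relies.
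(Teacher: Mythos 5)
Your proof is correct and follows exactly the route the paper takes: the paper gives no separate argument for this proposition but simply invokes the general fact that in a poset with meets $x \leq y$ iff $x \wedge y = x$, together with the preceding lemma identifying the displayed diagram as the meet of $R$ and $S$ in $\Hom_{\Bicat}(X,Y)$. You have merely spelled out that standard semilattice argument in full, which is precisely what the paper leaves implicit.
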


An immediate consequence of Proposition~\ref{prop:IneqEq} is that for any $R$ we have \[\stikzfig{MeetRR}\]
and therefore in particular $\stikzfig{Special}$. This is known as the special Frobenius law and is a very common companion of
the Frobenius in practical applications~\cite{ZanasiThesis, bonchi2017refinement}.
Another usual companion is the so-called bone equality, given by $\stikzfig{BoneEq}$. This however, does \emph{not} hold in all cartesian bicategories: for instance in $\Rel$, we have $\stikzfig{BoneEqX}$ if and only if the set $X$ is non-empty. If $X$ is empty, the left-hand side of the equality is the empty relation $\{\} \subseteq 1 \times 1 $, while the right-hand-side is the identity relation $\{(\bullet,\bullet)\}\subseteq 1 \times 1$.

\begin{lem}
  A morphism $F \from \Bicat \to \Bicat'$ of cartesian bicategories is faithful, in the sense of reflecting equality between morphisms,
  if and only if it reflects the ordering.
  \label{lem:FaithfulReflectOrder}
\end{lem}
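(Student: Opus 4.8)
The statement to prove is a biconditional: $F$ reflects equality if and only if $F$ reflects the ordering. The plan is to leverage Proposition~\ref{prop:IneqEq}, which translates the inequality $R \leq S$ into an equality between the morphism built from the meet and the morphism $R$. This is the key bridge, since it lets me convert a question about the ordering into a question about equality, exactly where the hypotheses of the lemma live.

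For the easy direction, suppose $F$ reflects the ordering. If $F(R) = F(S)$, then in particular $F(R) \leq F(S)$ and $F(S) \leq F(R)$, since equality in a poset implies the ordering both ways. Because $F$ reflects the ordering, I obtain $R \leq S$ and $S \leq R$, hence $R = S$ by antisymmetry of the local poset. So reflecting order implies reflecting equality, and this direction needs no string-diagrammatic work at all.

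For the converse, suppose $F$ reflects equality and assume $F(R) \leq F(S)$ in $\Bicat'$. The plan is to apply Proposition~\ref{prop:IneqEq} inside $\Bicat'$: the inequality $F(R) \leq F(S)$ is equivalent to the equation asserting that the meet-diagram built from $F(R)$ and $F(S)$ equals $F(R)$. The crucial observation is that $F$, being a morphism of cartesian bicategories, is a monoidal functor preserving the chosen monoids and comonoids; hence it commutes with the diagrammatic construction of the meet. Concretely, the meet-diagram formed from $F(R)$ and $F(S)$ is precisely $F$ applied to the meet-diagram formed from $R$ and $S$. Therefore the equation $\textrm{meet}(F(R),F(S)) = F(R)$ is the same as $F(\textrm{meet}(R,S)) = F(R)$. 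Since $F$ reflects equality, this yields $\textrm{meet}(R,S) = R$ in $\Bicat$, which by Proposition~\ref{prop:IneqEq} again is equivalent to $R \leq S$. Thus $F$ reflects the ordering.

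The main obstacle, and the only step requiring care, is justifying that $F$ commutes with the meet construction. This amounts to checking that $F$ preserves the copy and cocopy morphisms used in the meet-diagram $\stikzfig{Meet}$ and that it is monoidal, so that the whole composite diagram is sent to the corresponding composite; all of this is guaranteed by the definition of a morphism of cartesian bicategories, so the verification is routine but should be stated explicitly. Everything else reduces to elementary poset reasoning (antisymmetry and the fact that equality implies the ordering both ways), so I expect no genuine difficulty beyond this bookkeeping.
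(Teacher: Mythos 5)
Your proposal is correct and follows essentially the same route as the paper: the easy direction is elementary poset reasoning, and the converse uses Proposition~\ref{prop:IneqEq} to turn the inequality into an equality that $F$ can reflect. You usefully make explicit the step the paper leaves implicit, namely that $F$ commutes with the meet construction because it preserves the chosen (co)monoids and the monoidal structure.
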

\begin{proof}
  A morphism that reflects the ordering is faithful because the ordering is reflexive and antisymmetric.
  Conversely a faithful morphism reflects the ordering by Proposition~\ref{prop:IneqEq}.
\end{proof}
The Frobenius law (Property (3) in Definition~\ref{def:cartBicat}) gives a compact closed structure -- in other words, it allows us
to bend wires around. The cup of this compact closed structure is $\stikzfig{Cup}$, the cap analogously $\stikzfig{Cap}$ and the Frobenius implies
the snake equations:
\[
  \stikzfig{Snake}
\]
To appreciate the property that every morphism is a lax-comonoid homomorphism (Property (4) in Definition~\ref{def:cartBicat}), it is useful to spell out its meaning in $\Rel$: in the first inequality, the left and the right-hand side are, respectively, the relations \begin{equation}\label{eq:exlax1}\{\big(x,(y,y)\big) \,| \,  (x,y)\in R \} \quad \text{ and } \quad \{\big(x,(y,z)\big) \,| \,  (x,y)\in R \text{ and } (x,z) \in R \}\text{,}\end{equation}
while in the second inequality, they are the relations
\begin{equation}\label{eq:exlax2}\{(x,\bullet) \,| \, \exists y\in Y \text{ s.t. } (x,y)\in R \} \quad \text{ and } \quad  \{(x,\bullet) \,|\, x \in X  \}\text{.}\end{equation}
It is immediate to see that the two left-to-right inclusions hold for any relation $R\subseteq X \times Y$, while the right-to-left inclusions hold exactly when $R$ is a function: a relation which is single valued and total.
This observation justifies the following definition.

\begin{defi}  \label{defn:sv}
  Let $R$ be a morphism in a cartesian bicategory. We call $R$
  \begin{align*}
    \text{single valued if } & \stikzfig{Rsv} \\ \medskip
    \text{total if } & \stikzfig{Rtot} \\
\text{injective if } & \stikzfig{Rinj} \\
    \text{surjective if }&\stikzfig{Rsur} \\
  \end{align*}
\end{defi}
By translating the last two inequalities in $\Rel$, similarly to what we have shown in \eqref{eq:exlax1} and \eqref{eq:exlax2}, the reader can immediately check that these correspond to the usual properties of injectivity and surjectivity for relations. Moreover, since the converses of these inequalities hold in cartesian bicategories, the four inequalities are actually equalities.

We can characterise all the notions of Definition~\ref{defn:sv} equivalently in terms of \emph{opposite morphisms} $\op{R}$ which are defined for
any morphism $R$ as follows:\[\stikzfig{RopDef}\]

\begin{prop}
  Let $R$ be a morphism in a cartesian bicategory.
  \begin{align*}
    \stikzfig{RsvAlt} & \text{iff $R$ is single valued. } \\
    \hspace{-7pt} \stikzfig{RtotAlt} & \text{iff $R$ is total.} \\
    \stikzfig{RinjAlt} & \text{iff $R$ is injective.} \\
    \hspace{-7pt} \stikzfig{RsurAlt} & \text{iff $R$ is surjective.} \\
  \end{align*}
  In particular, $R$ is surjective iff $\op{R}$
 is total and $R$ is injective iff $\op{R}$ is single valued.
  \label{prop:svChar}
\end{prop}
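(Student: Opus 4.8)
The plan is to reduce the four equivalences to just two, and to prove those two by bending wires along the compact-closed structure. Throughout I write $\delta_X$ for copy, $\mu_X$ for its right adjoint, $\epsilon_X$ for discard and $\eta_X$ for its right adjoint. Spelled out, the defining inequalities of Definition~\ref{defn:sv} (each reversing a clause of the lax-comonoid law, Property~(4), or its mirror) read: single valued, $\delta_X \seq (R \otimes R) \leq R \seq \delta_Y$; total, $\epsilon_X \leq R \seq \epsilon_Y$; injective, $(R \otimes R) \seq \mu_Y \leq \mu_X \seq R$; surjective, $\eta_Y \leq \eta_X \seq R$. I will also use that, by the definition of $\op{R}$ together with Property~(3) and the snake equations, the opposite operation $\op{(-)}$ is a monotone involution with $\op{\op{R}} = R$, reversing composition as $\op{(R \seq S)} = \op{S} \seq \op{R}$, distributing over $\otimes$, and swapping the chosen structures as $\op{\delta_X} = \mu_X$ and $\op{\epsilon_X} = \eta_X$; these are all routine.

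First I would record that $\op{(-)}$ identifies the last two notions with the first two applied to $\op{R}$. Applying the monotone map $\op{(-)}$ to the injectivity inequality $(R \otimes R) \seq \mu_Y \leq \mu_X \seq R$ and using the rules above turns it into $\delta_Y \seq (\op{R} \otimes \op{R}) \leq \op{R} \seq \delta_X$, which is exactly single-valuedness of $\op{R}$; likewise surjectivity of $R$ becomes totality of $\op{R}$. Since $\op{(-)}$ is an order-isomorphism these are biconditionals, which already proves the ``in particular'' clause. The injective and surjective characterisations then follow from the single-valued and total ones applied to $\op{R}$ via $\op{\op{R}} = R$: for instance $R$ is injective iff $\op{R}$ is single valued iff $\op{\op{R}} \seq \op{R} \leq \id_X$ iff $R \seq \op{R} \leq \id_X$.

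It therefore remains to prove (A) $R$ single valued iff $\op{R} \seq R \leq \id_Y$, and (B) $R$ total iff $\id_X \leq R \seq \op{R}$. For the easy direction of (B), assuming $\id_X \leq R \seq \op{R}$ I post-compose with $\epsilon_X$: monotonicity of composition and $\epsilon_X = \id_X \seq \epsilon_X$ give $\epsilon_X \leq R \seq (\op{R} \seq \epsilon_X)$, and since $\epsilon_Y$ is the greatest element of $\Hom_{\Bicat}(Y,I)$ (by the Lemma computing tops) we have $\op{R} \seq \epsilon_X \leq \epsilon_Y$, hence $\epsilon_X \leq R \seq \epsilon_Y$. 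For the converse I would start from the counit law $\id_X = \delta_X \seq (\id_X \otimes \epsilon_X)$, substitute $\epsilon_X \leq R \seq \epsilon_Y$ to get $\id_X \leq \delta_X \seq (\id_X \otimes (R \seq \epsilon_Y))$, and then use the inequality $\delta_X \seq (\id_X \otimes (R \seq \epsilon_Y)) \leq R \seq \op{R}$, which holds for every $R$ and is proved by unfolding $\op{R}$ and simplifying with the Frobenius law and the snake equations. Case (A) I would treat symmetrically, using the cap, the adjunction units and counits of $\delta \dashv \mu$ (Property~(2)), or, more uniformly, by transposing the defining inequality along the cup and the cap: bending is built from tensoring and composing with these, all monotone and invertible by the snake equations, hence an order-isomorphism of hom-posets under which the two sides of the single-valued inequality become $\op{R} \seq R$ and $\id_Y$.

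The main obstacle is the bending calculations in (A) and (B): one must verify that the two sides of each defining inequality transpose to exactly the intended normal forms ($\op{R} \seq R$ and $\id_Y$, respectively $\id_X$ and $R \seq \op{R}$), which is precisely where the definition of $\op{R}$, the snake equations and the Frobenius law do the real work. A secondary point to watch is directionality: each biconditional must be obtained from an order-isomorphism (or from two genuinely monotone implications), not from a single one-way inequality, and it is exactly the top element $\epsilon_Y$ in (B) and the adjunction (co)units that supply the otherwise-missing direction.
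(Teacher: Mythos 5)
Your proposal is correct and follows essentially the same route as the paper: the paper likewise proves the single-valued and total characterisations directly (counit law plus totality for one direction, the Frobenius law and wire-bending for the other, with the top element $\stikzfig{DisX}$ supplying the easy direction), declares the injective and surjective cases ``analogous'', and obtains the final clause from $\op{(\op{R})}=R$. Your only departure is to make the ``analogous'' step explicit by transporting the inequalities along the monotone involution $\op{(-)}$, which is a tidy packaging of the same idea; the intermediate facts you assert without full computation (notably $\delta_X \seq (\id_X \otimes (R \seq \epsilon_Y)) \leq R \seq \op{R}$ and the transposition of the single-valued inequality) are exactly the diagram chains the paper carries out in its figures, and they do go through via special Frobenius, laxness and the snake equations as you indicate.
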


\begin{proof}
  We show the proofs for single valued and total. The proofs for injectivity and surjectivity are analogous. The last statement follows from the others and the fact that $\op{\left( \op{R} \right)} = R$
  \begin{itemize}
    \item Let $R$ be single valued. Then
      \[\stikzfig{RsvAltProof}\]
      Conversely, if $\stikzfig{RsvAlt}$, then by the Frobenius law one gets
      \[\stikzfig{RsvProof1}\]
      and from there
      \[\stikzfig{RsvProof2}\]
    \item Let $R$ be total. Then
      \[\stikzfig{RtotAltProof}\]
      Conversely, if $\stikzfig{RtotAlt}$, then
      \[\stikzfig{RtotProof}\]
  \end{itemize}
\end{proof}

\begin{exa}\label{ex:ERPER}
Recall that a prop is a strict symmetric monoidal category where the objects are the natural numbers and monoidal product on objects is addition.
The prop $\mathbf{ERel}$ of \emph{equivalence relations}~\cite{ZanasiThesis,lmcs:4796,coya2017corelations,fong2016algebra,bruni2001some} (also called the prop of \emph{corelations}) has objects natural numbers, where $n\in\mathbb{N}$ is thought of as the finite set $\{0,\dots, n-1\}$.
A morphism $n \to m$ is an equivalence relation on $n+m$. Composition of an equivalence relation on $n+m$ with one on $m+o$ is given by taking the smallest equivalence relation they generate on $n+m+o$ and restricting it to $n+o$. Monoidal product is given by disjoint union.

Another important example is the prop $\mathbf{PERel}$ of \emph{partial equivalence relations}. These are symmetric and transitive, but not necessarily reflexive, and have been used in the study of the semantics of higher order $\lambda$-calculi~\cite{jacobs1999categorical,streicher2012semantics} and quantum computations~\cite{jacobs2012coreflections,hasuo2017semantics}. In $\mathbf{PERel}$ a morphism $n \to m$ is a partial equivalence relation on $n+m$; composition similar to that in $\mathbf{ERel}$, taking the smallest induced partial equivalence relation. Again $\otimes$ is given by disjoint union. See~\cite[Definitions 2.52 and 2.63]{ZanasiThesis} for additional details.

Both $\mathbf{ERel}$ and $\mathbf{PERel}$ carry the structure of cartesian bicategories after taking into consideration their posetal enrichment. Here the ordering $\leq$ is the \emph{opposite} of set inclusion: $R\leq S$ iff $R\supseteq S$.
Note that for $\mathbf{PERel}$, we need some extra care. We consider partial equivalence relations $R,S\colon n \to m$ as equivalence relations $\bar{R},\bar{S}$ over $(n+m) \cup \{\bot\}$ and then take $R\leq S$ iff $\bar{R} \supseteq \bar{S}$. In particular, notice that the completely undefined partial equivalence relation is represented by the chaotic relation on  $(n+m) \cup \{\bot\}$, and is thus---according to this ordering---the least element in its homset.

To define the comonoid structure it is enough to consider the object $1$, since for arbitrary $n$ it is forced by compatibility with the monoidal structure (Property (5) in Definition \ref{def:cartBicat}). For both  $\mathbf{ERel}$ and $\mathbf{PERel}$ $ \stikzfig{Copy} \colon1 \to 2$ is the equivalence relation equating all the elements of the set $1+2$ and $\stikzfig{Dis} \colon1 \to 0$ equates the single element of the set $1$. The monoid structure $\stikzfig{Cocopy}\colon 2 \to 1$ and $\stikzfig{Codis}\colon 0 \to 1$ is defined in a similar way.
\end{exa}

\section{Maps}
\label{sec:Maps}

In the previous section, we introduced cartesian bicategories and showed some of their fundamental properties.
In this section we focus on a very important class of morphisms, the maps, that behave very much like functions behave in $\Rel$.

\begin{defi}
  A map in a cartesian bicategory is a morphism $f$ that is a comonoid homomorphism, i.e. is single valued and total.
\end{defi}

For maps it makes sense to imagine a flow of information from left to right. We will therefore write $\stikzfig{Map}$ to denote a map $f$
and $\stikzfig{fop}$ for its opposite. We will also refer to $\stikzfig{fop}$ as a comap.
The notation is suggestive of the fact that maps form a Cartesian category as we will see in Lemma~\ref{lem:MapCart}.
Note that we use lower-case letters for maps and upper-case for arbitrary morphisms.

\begin{exa}
 As expected, maps in $\Rel$ are exactly (graphs of) functions: this is easily verified by means of equations \eqref{eq:exlax1} and \eqref{eq:exlax2}.
 In $\LinRel$, maps are exactly linear maps between vector spaces.
\end{exa}

The original treatment of cartesian bicategories in~\cite{carboni1987cartesian} introduces maps as those morphisms
that admit a right-adjoint. We show below that this amounts to the same notion.

\begin{prop} \label{prop:MapRightAdj}
  A morphism $\stikzfig{Map}$ is a map if and only if it has a right adjoint -- a morphism $R$ such that
  $\stikzfig{SVAdj}$ and $\stikzfig{EntireAdj}$.
  In that case, necessarily \[\stikzfig{g=fop}\text{.}\]
\end{prop}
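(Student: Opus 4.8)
The plan is to prove both implications and then the identification $R=\op f$, treating the two directions asymmetrically: the forward direction is essentially a restatement of Proposition~\ref{prop:svChar}, while the converse requires a genuine argument from the bare adjunction data.

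\emph{The ``only if'' direction.} Suppose $f$ is a map, i.e.\ single valued and total. By Proposition~\ref{prop:svChar}, totality is exactly $\id_X \leq f \seq \op f$ and single-valuedness is exactly $\op f \seq f \leq \id_Y$. These are precisely the unit and counit of an adjunction $f \dashv \op f$, so $\op f$ is a right adjoint of $f$, already exhibiting the claimed witness.

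\emph{The ``if'' direction.} Suppose $f$ has some right adjoint $R$, with unit $\id_X \leq f \seq R$ and counit $R \seq f \leq \id_Y$. I would first derive totality. Writing $d$ for the discard (the comonoid counit) and using that every morphism is a lax comonoid homomorphism (Definition~\ref{def:cartBicat}(4)) applied to $R$, I postcompose the unit with $d_X$ to get $d_X = \id_X \seq d_X \leq (f \seq R) \seq d_X = f \seq (R \seq d_X) \leq f \seq d_Y$, which is totality. Next I derive single-valuedness. Writing $c$ for copy, I precompose with the unit, then apply lax copy to $R$, then the counit to each copied strand, obtaining $c_X \seq (f \otimes f) \leq (f \seq R) \seq c_X \seq (f \otimes f) = f \seq (R \seq c_X) \seq (f \otimes f) \leq f \seq c_Y \seq \big( (R \seq f) \otimes (R \seq f) \big) \leq f \seq c_Y$, which is single-valuedness. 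Hence $f$ is a map.

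\emph{Pinning down the witness.} Once $f$ is known to be a map, the first direction gives $f \dashv \op f$, and the standard zig-zag forces $R = \op f$: on one hand $R = R \seq \id_X \leq R \seq (f \seq \op f) = (R \seq f) \seq \op f \leq \op f$, using the unit of $f \dashv \op f$ and the counit of $f \dashv R$; symmetrically $\op f \leq R$. I expect the single-valuedness step to be the technical crux, and, more pointedly, the main conceptual obstacle is recognising that one must \emph{not} try to prove $R = \op f$ first and then read off map-ness: by the computation in Proposition~\ref{prop:svChar}, the inequalities $R \leq \op f$ and $\op f \leq R$ are themselves equivalent to totality and single-valuedness, so that route is circular. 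The correct order is to extract map-ness directly from the adjunction via the lax-comonoid axioms, and only afterwards invoke uniqueness of adjoints to identify $R$ with $\op f$.
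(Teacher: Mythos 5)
Your proof is correct and follows essentially the same route as the paper: the forward direction via Proposition~\ref{prop:svChar}, the converse by deriving totality and single-valuedness from the unit/counit together with the lax-comonoid axioms applied to $R$, and finally uniqueness of adjoints to identify $R$ with $\op{f}$. The only difference is presentational: you spell out the uniqueness-of-adjoints zig-zag explicitly, which the paper simply invokes.
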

\begin{proof}
  If $\stikzfig{Map}$ is a map, then $\stikzfig{fop}$ is a right-adjoint by Proposition~\ref{prop:svChar}.

  On the other hand, if $\stikzfig{Map}$ has a right-adjoint $R$, then it is a map since
  \[
    \stikzfig{MapComultProof}
  \]
  and
  \[
    \stikzfig{MapDisProof}
  \]
  Therefore, $\stikzfig{Map}$ is indeed a map and $\stikzfig{g=fop}$ by uniqueness of adjoints.
\end{proof}

Since by definition of cartesian bicategories $\stikzfig{Copy}$ has right-adjoint $\stikzfig{Cocopy}$ and $\stikzfig{Dis}$ has right-adjoint
$\stikzfig{Codis}$, we get that they are in fact maps.

\begin{prop}
  Let $\Bicat$ be a cartesian bicategory and $X \in \Bicat$ an object. Then $\stikzfig{CopyX}$ and $\stikzfig{DisX}$ are maps.
\label{prop:ComonoidMap}
\end{prop}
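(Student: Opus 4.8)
The plan is to reduce the statement to Proposition~\ref{prop:MapRightAdj}, which characterises maps as exactly those morphisms admitting a right adjoint. Since the comonoid structure on every object is, by fiat, equipped with right adjoints, essentially nothing remains to be done, and indeed the paragraph immediately preceding the statement already signals this reading.

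First I would recall that property~(2) of Definition~\ref{def:cartBicat} asserts precisely that $\stikzfig{CopyX}$ has right adjoint $\stikzfig{CocopyX}$ and that $\stikzfig{DisX}$ has right adjoint $\stikzfig{CodisX}$: the four inequalities collected there are exactly the unit and counit (zigzag) inequalities witnessing these two poset-enriched adjunctions. The only genuine point to observe is that these inequalities coincide in shape with the adjunction data $\stikzfig{SVAdj}$ and $\stikzfig{EntireAdj}$ demanded by Proposition~\ref{prop:MapRightAdj}; this is purely a matter of unwinding definitions, since in both cases we are looking at the standard triangle inequalities of an adjunction between morphisms of a cartesian bicategory.

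With this identification in hand, the conclusion is immediate: each of $\stikzfig{CopyX}$ and $\stikzfig{DisX}$ possesses a right adjoint, so Proposition~\ref{prop:MapRightAdj} yields at once that both are maps. As a bonus, the same proposition records that the right adjoint is necessarily the opposite morphism, so the monoid structure $\stikzfig{CocopyX}$, $\stikzfig{CodisX}$ produced by the earlier lemma is recovered as the opposite of the comonoid structure.

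I do not expect any real obstacle here: the statement is essentially a corollary of Proposition~\ref{prop:MapRightAdj} together with axiom~(2), and the whole content is the bookkeeping observation that the adjunctions postulated in the definition are of exactly the form the proposition requires. Should one prefer to avoid appealing to that proposition, the alternative would be to verify directly that $\stikzfig{CopyX}$ and $\stikzfig{DisX}$ are single valued and total in the sense of Definition~\ref{defn:sv}, but this is strictly more laborious and offers no additional insight.
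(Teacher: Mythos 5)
Your proposal is correct and is exactly the argument the paper gives: the sentence preceding the proposition already notes that axiom~(2) of Definition~\ref{def:cartBicat} equips $\stikzfig{CopyX}$ and $\stikzfig{DisX}$ with right adjoints, and the conclusion follows from Proposition~\ref{prop:MapRightAdj}. No further comment is needed.
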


The identity is a map, and maps are easily shown to be closed under composition, so they constitute a category.
\begin{defi}
  Given a cartesian bicategory $\mathcal{B}$, we define its category of maps, $\Map(\mathcal{B})$ to have the same objects of $\mathcal{B}$ and as morphism the maps of $\mathcal{B}$.
  Dually, we define its category of comaps, $\Comap(\Bicat)$ to have the same objects as $\Bicat$ and as morphisms the comaps of $\Bicat$.
\end{defi}

\begin{rem}
  Clearly $\Comap(\Bicat) \cong \op{\Map(\Bicat)}$, by taking a map to its opposite.
\end{rem}

By the following proposition, the ordering of $\mathcal{B}$ becomes trivial when restricted to maps.
\begin{prop}
  Let $f,g$ be maps such that $f \leq g$. Then $f = g$.
  \label{prop:mapCmp}
\end{prop}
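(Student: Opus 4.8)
The plan is to prove the reverse inequality $g \leq f$; together with the hypothesis $f \leq g$ and the antisymmetry of the local order this immediately yields $f = g$. The only tools needed are the two adjunction inequalities packaged by Proposition~\ref{prop:MapRightAdj}: since $f$ and $g$ are maps they have right adjoints $\op{f}$ and $\op{g}$, so in particular totality of $f$ gives $\id \leq f \seq \op{f}$ and single-valuedness of $g$ gives $\op{g} \seq g \leq \id$. In addition I will use that composition is monotone in each argument (part of the poset-enrichment) and that the opposite operation is monotone, that is, $f \leq g$ implies $\op{f} \leq \op{g}$.

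First I would record this monotonicity of $(-)^{\opposite}$. The opposite $\op{R}$ is obtained from $R$ purely by bending its wires with the fixed cups and caps of the compact closed structure, i.e. by tensoring with identities and pre/post-composing with fixed morphisms. Each of these operations preserves the local order in a poset-enriched symmetric monoidal category, so the assignment $R \mapsto \op{R}$ is monotone and $f \leq g$ at once gives $\op{f} \leq \op{g}$.

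The argument is then a short monotone chain. Beginning from the totality of $f$ (post-composed with $g$), then replacing $\op{f}$ by $\op{g}$, and finally applying the single-valuedness of $g$, one obtains
\[
g \;=\; \id \seq g \;\leq\; f \seq \op{f} \seq g \;\leq\; f \seq \op{g} \seq g \;\leq\; f \seq \id \;=\; f .
\]
Here the first inequality is $\id \leq f \seq \op{f}$ composed on the right with $g$, the second is $\op{f} \leq \op{g}$ composed on the left with $f$ and on the right with $g$, and the third is $\op{g} \seq g \leq \id$ composed on the left with $f$. Hence $g \leq f$, and combined with $f \leq g$ we conclude $f = g$. Note that the argument is asymmetric: it uses only the totality of $f$ and the single-valuedness of $g$.

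The only step that is not a completely mechanical poset manipulation is the monotonicity of the opposite, and even that is immediate once one reads off the definition of $\op{R}$ as a fixed composite in a poset-enriched setting. I therefore expect this to be the sole point worth spelling out; everything else follows routinely from the two adjunction inequalities of Proposition~\ref{prop:MapRightAdj} and the functoriality of composition.
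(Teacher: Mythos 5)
Your proof is correct and follows essentially the same route as the paper's: from $f \leq g$ deduce $\op{f} \leq \op{g}$ by monotonicity of the wire-bending that defines $(-)^{\opposite}$, then chain $g \leq f \seq \op{f} \seq g \leq f \seq \op{g} \seq g \leq f$ using totality of $f$ and single-valuedness of $g$. The paper's proof is just a terser diagrammatic rendering of exactly this argument.
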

\begin{proof}
  Since $\stikzfig{fg}$, also $\stikzfig{fopgop}$. Therefore
  \[
    \stikzfig{gleqf}
  \]
\end{proof}

By analogy with $\Rel$, we think of the comonoid on any object in a cartesian bicategory as giving a way to copy and discard. By definition, maps
respect these operations. Therefore, $\Map(\Bicat)$, which inherits the monoidal product from $\Bicat$ has the structure of a Cartesian category.

\begin{lem}
  For a cartesian bicategory $\Bicat$, the monoidal product $\otimes$ induces a product on $\Map(\Bicat)$. The monoidal unit $I$ becomes a terminal
  object in $\Map(\Bicat)$. In other words, $(\Map(\Bicat), \otimes, I)$ is a cartesian category.
  \label{lem:MapCart}
\end{lem}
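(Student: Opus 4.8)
The plan is to verify directly that $(\Map(\Bicat),\otimes,I)$ satisfies every clause of Definition~\ref{def:cartCat}. The category structure is already in hand—the identity is a map and maps are closed under composition—so the first task is to promote $\Map(\Bicat)$ to a \emph{symmetric monoidal} category, which requires closure under $\otimes$ together with the fact that the symmetries are maps. Rather than pushing the single-valuedness and totality conditions across a tensor by hand, I would route this through Proposition~\ref{prop:MapRightAdj}, which identifies maps with exactly the morphisms admitting a right adjoint. Right adjoints are preserved by the monoidal product in any poset-enriched symmetric monoidal category: if $f$ has right adjoint $R$ and $g$ has right adjoint $S$, then $f\otimes g$ has right adjoint $R\otimes S$, since tensoring the two adjunction inequalities and using functoriality of $\otimes$ gives $\mathrm{id}_{X\otimes Z}=\mathrm{id}_X\otimes\mathrm{id}_Z \leq (f\seq R)\otimes(g\seq S)=(f\otimes g)\seq(R\otimes S)$, and symmetrically for the counit. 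Hence the tensor of two maps is a map. The symmetries (and, in the strict diagrammatic setting, the trivial associators and unitors) are isomorphisms and therefore their own right adjoints, so they too are maps, and $\Map(\Bicat)$ inherits the symmetric monoidal structure of $\Bicat$.

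Next I would install the comonoid data. Each object $X$ carries its copy and discard morphisms, which are maps by Proposition~\ref{prop:ComonoidMap}, and they satisfy the cocommutative comonoid axioms of item~(1) of Definition~\ref{def:cartCat} simply because they already do so in $\Bicat$ by Property~(1) of Definition~\ref{def:cartBicat}. In the same way, the compatibility of this comonoid with the monoidal product required by item~(3) of Definition~\ref{def:cartCat} is verbatim Property~(5) of Definition~\ref{def:cartBicat}, inherited without further work. The conceptual heart of the argument is item~(2): every map must be a genuine comonoid homomorphism. Here I would exploit the very definition of a map as a single-valued and total morphism (Definition~\ref{defn:sv}). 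Property~(4) of Definition~\ref{def:cartBicat} already supplies, for \emph{any} morphism, the two lax inequalities witnessing the copy- and discard-compatibility; single-valuedness and totality are precisely the converse inequalities for these two conditions, as recorded in the discussion following equations \eqref{eq:exlax1} and \eqref{eq:exlax2}. Combining each lax inequality with its converse upgrades it to an equality, so every map is a strict comonoid homomorphism.

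With all clauses of Definition~\ref{def:cartCat} discharged, $\Map(\Bicat)$ is a cartesian category, which by Proposition~\ref{prop:CartCopy} is equivalently the statement that $\otimes$ is the categorical product with $I$ terminal. I expect the only genuinely delicate point to be the closure of maps under $\otimes$; a direct diagrammatic verification that single-valuedness and totality are preserved by the tensor is fiddly and repeatedly invokes Property~(5), which is exactly why I would prefer to obtain closure cheaply from the adjoint characterisation of Proposition~\ref{prop:MapRightAdj}. Every remaining ingredient is an inheritance of structure already established for $\Bicat$, with item~(2) being immediate once one reads off that ``map'' and ``comonoid homomorphism'' name the same morphisms.
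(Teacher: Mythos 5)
Your proof is correct and follows essentially the same route as the paper's: the comonoid on each object consists of maps (Proposition~\ref{prop:ComonoidMap}), every map is by definition a comonoid homomorphism, so Definition~\ref{def:cartCat} is satisfied and Proposition~\ref{prop:CartCopy} finishes the argument. The only difference is that you explicitly verify closure of maps under $\otimes$ via the right-adjoint characterisation of Proposition~\ref{prop:MapRightAdj}, a point the paper's proof leaves implicit; that verification is sound and a worthwhile addition.
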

\begin{proof}
  Since by Proposition~\ref{prop:ComonoidMap} the comonoid on every object consists of maps, every object in $\Map(\Bicat)$ is equipped with a comonoid
  structure. Every map, by definition, is a comonoid homomorphism respecting this structure, so that $\Map(\Bicat)$ is a cartesian category.
  By Proposition~\ref{prop:CartCopy}, $\otimes$ induces a product and $I$ is a terminal object.
\end{proof}

\begin{exa}
Recall the cartesian bicategories $\mathbf{ERel}$ of equivalence relations and $\mathbf{PERel}$ of partial equivalence relations from Example~\ref{ex:ERPER}.
In order to illustrate what maps are in these categories, it is convenient to write $[i]_R$ for the set $\{j \mid (i,j)\in R\}$. Both in $\mathbf{ERel}$ and $\mathbf{PERel}$ a morphism $R \from n \to m$ is
\begin{equation}\label{eq:totalERel}
\text{ total iff for all }i,j\in n, (i,j)\in R\text{ implies }i=j\text{, and}
\end{equation}
\vspace{-16pt}
\begin{equation}\label{eq:svERel}
\text{single valued iff for all }i\in m,\text{ either }[i]_R=\emptyset\text{ or there is }j\in n\text{ such that }(i,j)\in R.
\end{equation}
Thus $\stikzfig{Cocopy}$ is single valued but not total; $\stikzfig{Codis}$ is total but not single valued.
In $\mathbf{PERel}$, the undefined relation $0\to 1$, hereafter denoted $\stikzfig{Point}$, is both total and single valued.
\end{exa}

\section{Choice in Cartesian bicategories}
\label{sec:AC}
In Section~\ref{sec:CartBicat} we have recalled cartesian bicategories and in Section~\ref{sec:Maps} we have seen  their (cartesian) categories of maps. This raises a natural question: is it possible to reconstruct in some way a cartesian bicategory from its category of map?

In this paper we will face this problem by showing \emph{when} this is possible. It turns out that the answer is closely related to the axiom of choice.

\medskip

One of the many equivalent formulations of the axiom of choice in set theory is
\begin{center}
  Every total relation contains a map.
\end{center}
In a total relation every element in the domain is related to at least one element in the codomain. A map is obtained by choosing, for each element in the domain, exactly one related element in the codomain. This can be stated in the language of cartesian bicategories.

\begin{defi}[Choice]\label{defn:Choice}
  Let $\mathcal{B}$ be a cartesian bicategory. We say that $\mathcal{B}$ satisfies the axiom of choice (AC), or that $\mathcal{B}$ has choice, iff the following holds for any morphism $R \from X \to Y$:
  \begin{equation}\label{eq:AC}
  \stikzfig{Rtot} \text{ ($R$ is total)} \quad \text{implies} \quad \exists \text{ map } f \from X \to Y \text{ such that } \stikzfig{fleqR}  \tag{AC}
  \end{equation}
\end{defi}

Observe that the converse implication holds in any cartesian bicategory.
\begin{lem}
If $\stikzfig{fleqR}$ then $\stikzfig{Rtot}$.
\end{lem}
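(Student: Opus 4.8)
The plan is to derive the totality of $R$ from the totality of the map $f$ together with the monotonicity of composition. Since $f$ is by definition a map, it is in particular total; spelled out, this is the inequality $\stikzfig{DisX} \leq f \seq \stikzfig{DisY}$ between morphisms $X \to I$.

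Next I would exploit the hypothesis $f \leq R$. Because $\Bicat$ is enriched over posets, composition is monotone in each argument; post-composing both sides of $f \leq R$ with the discard $\stikzfig{DisY} \from Y \to I$ therefore yields $f \seq \stikzfig{DisY} \leq R \seq \stikzfig{DisY}$.

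Finally, chaining these two inequalities by transitivity of $\leq$ gives $\stikzfig{DisX} \leq f \seq \stikzfig{DisY} \leq R \seq \stikzfig{DisY}$, so that $\stikzfig{DisX} \leq R \seq \stikzfig{DisY}$, which is precisely the defining inequality $\stikzfig{Rtot}$ for $R$ being total. There is essentially no obstacle here: the whole argument is a two-step inequality chain. The only points requiring a little care are getting the direction of the totality inequality right---it is the reverse of the lax-discard inequality of Property~(4) that holds for every morphism---and recalling that monotonicity of composition is exactly what the poset-enrichment of a cartesian bicategory supplies.
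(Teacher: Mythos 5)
Your proof is correct and is essentially the paper's own argument: the paper likewise observes that a total morphism below $R$ forces $R$ to be total by composing with the discard and chaining the two inequalities. The only difference is presentational---the paper states the one-line inequality chain directly, while you spell out the monotonicity and transitivity steps.
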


\begin{proof} Obvious, since if $S$ is total and $S \leq R$, then $R$ is total: $\stikzfig{SEntire}$.
\end{proof}

\begin{exa}\label{ex:AC}~
   \begin{itemize}
   \item
   The usual axiom of choice implies that $\Rel$ satisfies~\eqref{eq:AC}.
   \item
   $\mathbf{ERel}$ is an example of a cartesian bicategory that does not satisfy~\eqref{eq:AC}. Recall from Example~\ref{ex:ERPER} that the ordering is the \emph{reverse} of inclusion. Therefore, for~\eqref{eq:AC} to hold would mean that every equivalence relation that satisfies~\eqref{eq:totalERel} could be included in one that satisfies both~\eqref{eq:totalERel} and~\eqref{eq:svERel}. Now consider $\stikzfig{Codis}\colon 0 \to 1$. As seen in Example \ref{ex:ERPER}, it is total, but not single valued. Since equivalence relations have to be reflexive, this is also the only morphism of type $0 \to 1$: clearly AC fails here.
   \item Interestingly, $\mathbf{PERel}$ \emph{does} satisfy~\eqref{eq:AC}. For example, $\stikzfig{Codis}\colon 0 \to 1$ is included, as an
   equivalence relation over $(0+1)\cup\{\bot\}$, in $\stikzfig{Point}$.
  \end{itemize}
\end{exa}

Another common formulation of the axiom of choice in set theory is the assertion that every surjective function $\pi \colon X \to Y$ splits, namely, there exists a function $\rho\colon Y \to X$ such that $\rho \seq \pi = id_Y$.
A standard categorification of the notion of surjectivity is the notion of epi(morphism): $\pi$ is epi iff $\pi \seq f=\pi \seq g$ entails $f=g$.
In order to clarify the picture and justify our Definition~\ref{defn:Choice}
we will now investigate epimorphisms in cartesian bicategories.

\begin{lem}
  Let $\stikzfig{MapPi}$ be a map in a cartesian bicategory $\mathcal{B}$. Then $\stikzfig{MapPi}$ is an epi in $\mathcal{B}$ if and only if it is surjective.
  \label{lem:EquivSurjective}
\end{lem}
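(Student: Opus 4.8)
The plan is to prove both implications of the biconditional, using the $\Rel$-intuition as a guide: a map $\pi$ should be epi precisely when it is surjective, and for maps (which are genuinely function-like) these notions must align. I would first recall that by Proposition~\ref{prop:svChar}, $\pi$ being surjective is equivalent to $\op{\pi}$ being total, i.e.\ to the inequality $\stikzfig{Cap} \leq \op{\pi} \seq \pi$ read in diagrammatic form, together with the fact that for a map $\pi$ the opposite $\op{\pi}$ is its right adjoint (Proposition~\ref{prop:MapRightAdj}). These give the two adjunction (in)equalities $\id \leq \op{\pi}\seq\pi$ (totality of $\op\pi$, i.e.\ surjectivity of $\pi$) and $\pi \seq \op{\pi} \leq \id$ (single-valuedness of $\op\pi$, which always holds since $\pi$ is a map, as $\op\pi$ is single valued iff $\pi$ is injective — careful here, this is the counit), so I must track which inequalities are free and which encode surjectivity.

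For the direction \emph{surjective implies epi}, I would take the standard adjunction argument. Suppose $\pi$ is surjective, so $\id \leq \op{\pi} \seq \pi$ by Proposition~\ref{prop:svChar}. Given morphisms $R, S$ with $\pi \seq R = \pi \seq S$, I would precompose with $\op{\pi}$ and use the adjunction counit $\pi\seq\op{\pi}\leq\id$ together with the surjectivity unit to sandwich $R$ and $S$: from $R \leq \op{\pi}\seq\pi\seq R = \op{\pi}\seq\pi\seq S$ and the reverse, one extracts $R = S$. The key move is that surjectivity lets us ``recover'' $R$ from $\pi\seq R$ up to the ordering, and the poset-enrichment collapses the resulting inequalities into an equality. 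Since cancellation holds for arbitrary morphisms $R,S$ (not just maps), this establishes that $\pi$ is epi in all of $\mathcal{B}$.

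For the converse, \emph{epi implies surjective}, the natural strategy is to find a canonical pair of morphisms that $\pi$ separates exactly when it fails to be surjective. Following the $\Rel$ picture, I expect the right test morphisms to be $\op{\pi}\seq\pi$ (the ``image'' relation) and $\id_Y$: one always has $\pi \seq (\op{\pi}\seq\pi) = \pi \seq \id_Y$ by the counit law for the adjunction (this is the single-valuedness of $\pi$), so epi-ness forces $\op{\pi}\seq\pi = \id_Y$, which is exactly surjectivity. \textbf{The main obstacle} is verifying that $\pi\seq\op{\pi}\seq\pi = \pi$ holds so that $\pi\seq(\op{\pi}\seq\pi) = \pi\seq\id_Y$ — this is one of the triangle/snake identities of the adjunction $\pi \dashv \op{\pi}$ and should follow from Proposition~\ref{prop:MapRightAdj} together with the zig-zag laws, but I would need to confirm it is an equality rather than merely an inequality, using that $\pi$ is a map (so both adjunction inequalities are available) and possibly Proposition~\ref{prop:IneqEq} to upgrade inequalities to equalities. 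Once that equality is in hand, epi-ness immediately yields $\op{\pi}\seq\pi = \id_Y$, i.e.\ $\pi$ is surjective by Proposition~\ref{prop:svChar}.
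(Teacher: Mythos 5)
Your proposal follows the paper's own argument in both directions: for one, surjectivity is converted via Proposition~\ref{prop:svChar} into $\id_Y \leq \op{\pi}\seq\pi$ and cancelled against $\pi\seq R=\pi\seq S$; for the other, epi-ness is applied to the test pair $\op{\pi}\seq\pi$ and $\id_Y$ via the identity $\pi\seq(\op{\pi}\seq\pi)=\pi\seq\id_Y$, which is exactly what the paper does. The one thing to repair is your bookkeeping of the adjunction $\pi\dashv\op{\pi}$: its unit is $\id_X\leq\pi\seq\op{\pi}$ (totality of $\pi$) and its counit is $\op{\pi}\seq\pi\leq\id_Y$ (single-valuedness of $\pi$); the inequality $\pi\seq\op{\pi}\leq\id_X$ that you label as the counit is instead injectivity of $\pi$, which does \emph{not} hold for a general map, contrary to your parenthetical claim. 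This matters for your first direction: the sandwich $R\leq\op{\pi}\seq\pi\seq R=\op{\pi}\seq\pi\seq S\leq S$ closes using $\op{\pi}\seq\pi\leq\id_Y$, not $\pi\seq\op{\pi}\leq\id$. With that substitution everything goes through; indeed, since $\pi$ is a map, single-valuedness and surjectivity combine by antisymmetry to give $\op{\pi}\seq\pi=\id_Y$ on the nose, streamlining the first direction to $R=\op{\pi}\seq\pi\seq R=\op{\pi}\seq\pi\seq S=S$, which is the paper's computation. Your ``main obstacle'' $\pi\seq\op{\pi}\seq\pi=\pi$ is indeed an equality: $\geq$ follows from the unit and $\leq$ from the counit, both available precisely because $\pi$ is a map.
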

\begin{proof}
  \begin{itemize}
    \item Let $\pi$ be an epi in $\mathcal{B}$. Since $\pi$ is a map, by Proposition \ref{prop:svChar}, $\stikzfig{PiPiopEntire}$ and therefore
      \[\stikzfig{PiPreUnit}\]
      Since $\stikzfig{MapPi}$ is epi, $\stikzfig{PiUnit}$ so $\stikzfig{ComapPi}$ is total, hence $\stikzfig{MapPi}$ is surjective by Proposition~\ref{prop:svChar}. \smallskip
    \item Assume $\stikzfig{MapPi}$ is surjective. Then $\stikzfig{PiAdj}$ by Proposition~\ref{prop:svChar}. \smallskip
      If now $R,S$ are morphisms such that $\stikzfig{pif=pig}$, then \[\stikzfig{f=g}\]
  \end{itemize}
\end{proof}

\begin{lem}
  Surjective maps split in any cartesian bicategory with choice.
  \label{lem:ChoiceSplitsEpis}
\end{lem}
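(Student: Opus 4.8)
The plan is to apply the axiom of choice not to $\pi$ directly but to its opposite $\op{\pi}$. Suppose $\pi \from X \to Y$ is a surjective map. First I would recall, via Proposition~\ref{prop:svChar}, that surjectivity of $\pi$ is equivalent to totality of $\op{\pi}$. Thus $\op{\pi} \from Y \to X$ is a total morphism, and the axiom of choice (Definition~\ref{defn:Choice}) supplies a map $\rho \from Y \to X$ with $\rho \leq \op{\pi}$. This $\rho$ is the candidate splitting; it then remains only to verify that $\rho \seq \pi = \id_Y$.

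The key step is to turn the inequality $\rho \leq \op{\pi}$ into the desired equality. Since composition is monotone in a poset-enriched category, $\rho \leq \op{\pi}$ gives $\rho \seq \pi \leq \op{\pi} \seq \pi$. Now $\pi$ is a map, hence single valued, and by Proposition~\ref{prop:svChar} single-valuedness is exactly the statement $\op{\pi} \seq \pi \leq \id_Y$. Chaining these two inequalities yields $\rho \seq \pi \leq \id_Y$.

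Finally I would upgrade this inequality to an equality using Proposition~\ref{prop:mapCmp}. Both $\rho$ and $\pi$ are maps, and maps are closed under composition, so $\rho \seq \pi$ is again a map; the identity $\id_Y$ is a map too. Since $\rho \seq \pi \leq \id_Y$ is then an inequality between two maps, Proposition~\ref{prop:mapCmp} forces $\rho \seq \pi = \id_Y$, which exhibits $\rho$ as a splitting of $\pi$.

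I do not expect any genuine obstacle here: the one idea that makes the argument work is realising that choice should be fed the total morphism $\op{\pi}$ rather than $\pi$ itself, and that the triviality of the order on maps (Proposition~\ref{prop:mapCmp}) is precisely what converts the resulting lax factorisation into a strict splitting. The only point requiring care is the bookkeeping of the composition direction, namely checking that it is $\op{\pi} \seq \pi$ (and not $\pi \seq \op{\pi}$) that is bounded above by the identity, so that the types line up as $\rho \seq \pi \from Y \to Y$.
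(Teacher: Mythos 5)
Your proof is correct and follows essentially the same route as the paper: apply choice to the total morphism $\op{\pi}$ to obtain a map $\rho \leq \op{\pi}$, deduce $\rho \seq \pi \leq \op{\pi} \seq \pi \leq \id_Y$ from single-valuedness of $\pi$, and upgrade to equality via Proposition~\ref{prop:mapCmp}. No gaps.
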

\begin{proof}
  Let $\pi \from X \to Y$ be a surjective map.
  Therefore, $\op{\pi} \from Y \to X$ is a total relation, so by (AC) there is a map $g \from Y \to X$ such that
  \[
    \stikzfig{EpiSplit1}
  \]
  Now we have
  \[
    \stikzfig{EpiSplit2}
  \]
  and since both the left hand side and the right hand side of that inequality are maps, we have by Proposition~\ref{prop:mapCmp} that $g \seq \pi = \id_{Y}$.
\end{proof}

\subsection{Cartesian bicategories with enough maps}

The converse of Lemma~\ref{lem:ChoiceSplitsEpis} does not hold in general.
The reason is that a general cartesian bicategory might not have enough maps to ``cover'' all its morphisms in a suitable sense.
In order to prove the converse, we need to assume a saturation property.

\begin{defi}
  We say a cartesian bicategory \emph{has enough maps} if for every morphism $R \from X \to I$
  there is a map $f \from Z \to X$ such that
  \[
    \stikzfig{EnoughMaps}
  \]
  \label{defn:enoughMaps}
\end{defi}

The intuition for this notion is the following: a morphism $R \from X \to I$ can be considered as a predicate on $X$. Then having enough maps ensures the existence of a function $f$ that picks out the subset of $X$ where $R$ holds.

\begin{exa}\label{ex:enoughmap} The description above shows that $\Rel$ has enough maps.
Also $\mathbf{ERel}$ and $\mathbf{PERel}$ have both enough maps. We briefly describe the construction for $\mathbf{ERel}$, the one for $\mathbf{PERel}$ is similar. For any morphism $R\colon n \to 0$ in $\mathbf{ERel}$, take $e$ to be the number of the equivalence classes of $R$. Choose a total ordering for these equivalence classes, so that for each $i\in e=\{0,\dots e-1\}$, we denote by $R_i$ the $i$-th equivalence class of $R$.
Then, define $f\colon e \to n$ as the equivalence on $e+n$
\[R \cup \{(i,j) \, |\, i\in e \text{ and } j\in R_i  \} \cup  \{(i,j) \, |\, j\in e \text{ and } i\in R_j  \}\text{.}\]
It is immediate to see that $f$ satisfies \eqref{eq:totalERel} and \eqref{eq:svERel} and that $ \stikzfig{EnoughMaps}$.
\end{exa}

\begin{rem}
  A similar property, \emph{functional completeness}, was already considered in \cite{carboni1987cartesian}.
  The important difference is that we don't require $f$ to be mono.
  Ours is a more general notion: every functionally complete cartesian bicategory
  also has enough maps.
  \label{rem:FuncComp}
\end{rem}

\begin{lem}\label{RFact}
  If a cartesian bicategory has enough maps, then for every morphism $R \from X \to Y$,
  there are maps $f \from Z \to X$ and $g \from Z \to Y$ such that
  \[
    \stikzfig{RFact}
  \]
  We call this a \emph{comap-map factorisation} of $R$.
\end{lem}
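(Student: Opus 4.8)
The plan is to reduce the general statement to the defining property of having enough maps (Definition~\ref{defn:enoughMaps}), which only speaks about morphisms whose codomain is $I$. The key idea is to use the compact closed structure furnished by the Frobenius law to bend the output wire of $R \from X \to Y$ around to the input side, producing a morphism $\hat R \from X \otimes Y \to I$, explicitly $\hat R = (R \otimes \id_Y) \seq \cap_Y$ where $\cap_Y$ is the cap. In $\Rel$ this $\hat R$ is exactly the predicate on $X \otimes Y$ that holds on the graph of $R$. Because bending is invertible by the snake equations, $R$ is recovered from $\hat R$ as $R = (\id_X \otimes \cup_Y) \seq (\hat R \otimes \id_Y)$, so it suffices to produce a suitable factorisation of $\hat R$ and then un-bend it.

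First I would apply the ``enough maps'' hypothesis to $\hat R \from X \otimes Y \to I$, obtaining a map $h \from Z \to X \otimes Y$ together with the identity $\hat R = \op h \seq t_Z$, where $t_Z \from Z \to I$ is the discard. Next I would exploit the fact, established in Lemma~\ref{lem:MapCart}, that $\Map(\Bicat)$ is a cartesian category with $\otimes$ as categorical product. Consequently the map $h$ factors through the projections as $h = \Delta_Z \seq (f \otimes g)$, where $f = h \seq \pi_1 \from Z \to X$ and $g = h \seq \pi_2 \from Z \to Y$ are again maps and $\Delta_Z$ is the copy comonoid. Taking opposites, using that $\op{(\cdot)}$ reverses composition, distributes over $\otimes$, and sends $\Delta_Z$ to its right adjoint (the multiplication), yields $\op h = (\op f \otimes \op g) \seq \op{\Delta_Z}$.

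Substituting this into the equation for $\hat R$ gives $\hat R = (\op f \otimes \op g) \seq \op{\Delta_Z} \seq t_Z$. Here the composite $\op{\Delta_Z} \seq t_Z$ — multiplication followed by counit — is precisely the cap $\cap_Z$, which is a standard Frobenius identity. Hence $\hat R = (\op f \otimes \op g) \seq \cap_Z$, and one checks (just as in the $\Rel$ computation above) that this is exactly the bent form of $\op f \seq g$. Un-bending via the snake equations therefore delivers $R = \op f \seq g$, which is the desired comap-map factorisation.

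The main obstacle is the compact-closed bookkeeping rather than any genuine difficulty: one must confirm that $\op{(\cdot)}$ behaves contravariantly on composition and monoidally on $\otimes$, that $\op{\Delta_Z}$ is the monoid multiplication, and that $\op{\Delta_Z} \seq t_Z = \cap_Z$. The conceptual crux — and the step worth flagging — is simply the realisation that, since ``enough maps'' constrains only morphisms into $I$, one is forced to bend $R$ into $\hat R$ first, after which the cartesian (product) structure of $\Map(\Bicat)$ automatically splits the resulting map $h$ into the two legs $f$ and $g$. Everything else is the routine invertibility of the naming/conaming correspondence.
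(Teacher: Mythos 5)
Your proposal is correct and follows essentially the same route as the paper's proof: bend $R$ into a morphism $X\otimes Y\to I$, apply the enough-maps hypothesis to obtain $h\from Z\to X\otimes Y$, split $h$ into $f=h\seq\pi_1$ and $g=h\seq\pi_2$ using that $h$ is a comonoid homomorphism and the comonoid is compatible with $\otimes$, and un-bend to recover $R=\op{f}\seq g$. The only difference is presentational (explicit term-level bookkeeping versus the paper's string-diagram computation).
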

\begin{proof}
  Since there are enough maps, there is a map $h \from Z \to X \otimes Y$ such that
  \[
    \stikzfig{RFact1}
  \]
  Let $\stikzfig{RFact2}$ and $\stikzfig{RFact3}$, then
  \[
    \stikzfig{RFact4}
  \]
  where the step marked $*$ uses compatibility of the comonoid with the monoidal structure and the fact that $h$ is a map.
  Therefore we have
  \[
    \stikzfig{RFact5}
  \]
\end{proof}

\begin{prop}
  A cartesian bicategory with enough maps satisfies~\eqref{eq:AC} iff surjective maps split.
  \label{prop:ACiffEpisSplit}
\end{prop}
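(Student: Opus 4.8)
The plan is to prove the two implications separately and to note that one of them is already established. The left-to-right direction---if $\Bicat$ satisfies \eqref{eq:AC} then surjective maps split---is exactly Lemma~\ref{lem:ChoiceSplitsEpis}, and it does not even require having enough maps. All the work therefore lies in the converse: assuming that $\Bicat$ has enough maps and that every surjective map splits, I would recover \eqref{eq:AC}.

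To this end I would fix a total morphism $R \from X \to Y$ and aim to exhibit a map $f \from X \to Y$ with $f \leq R$. The first step is to apply Lemma~\ref{RFact} to obtain a comap-map factorisation $R = \op{p} \seq q$ with $p \from Z \to X$ and $q \from Z \to Y$ maps. The central step---which I expect to be the main obstacle---is to deduce from totality of $R$ that $p$ is \emph{surjective}. I would run this argument through the defining (discarding) form of totality rather than the adjoint form: a map such as $q$ preserves discarding, so discarding after $R = \op{p}\seq q$ coincides with discarding after $\op{p}$; comparing this with the fact that discarding after $R$ returns discarding on $X$ shows that discarding after $\op{p}$ also returns discarding on $X$, i.e.\ that $\op{p}$ is total. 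By the final clause of Proposition~\ref{prop:svChar}, this is precisely surjectivity of $p$. The reason it is essential to route the argument this way is that the naive attempt using $\id_X \leq R \seq \op{R}$ only yields $\op{p}\seq p \leq R \seq \op{R}$, an inequality in the wrong direction from which surjectivity of $p$ cannot be concluded.

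With $p$ shown to be surjective, the hypothesis that surjective maps split provides a map $s \from X \to Z$ with $s \seq p = \id_X$. I would then define $f := s \seq q$, which is a map as a composite of maps. To verify $f \leq R$ it suffices to establish $s \leq \op{p}$, and this follows from the adjunction $p \dashv \op{p}$: using its unit $\id_Z \leq p \seq \op{p}$ together with $s \seq p = \id_X$ one computes $s = s \seq \id_Z \leq s \seq p \seq \op{p} = \op{p}$. Monotonicity of composition then gives $f = s \seq q \leq \op{p} \seq q = R$, completing the converse and hence the proposition. Apart from the surjectivity of $p$, every step is a routine manipulation of adjunctions and of the definition of maps.
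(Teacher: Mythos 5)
Your proposal is correct and follows essentially the same route as the paper: cite Lemma~\ref{lem:ChoiceSplitsEpis} for the forward direction, then for the converse take a comap-map factorisation of the total relation via Lemma~\ref{RFact}, show the comap leg is surjective by pushing totality through the discarding equation, split it, and transport the splitting below the comap using the unit of the map adjunction. The only difference is presentational: you spell out in words the two diagrammatic computations the paper leaves to its figures.
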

\begin{proof}
  By Lemma~\ref{lem:ChoiceSplitsEpis}, it suffices to prove that~\eqref{eq:AC} holds if surjective maps split.
  So let $R \from X \to Y$ be a total relation and take a comap-map factorisation $\stikzfig{RFact}$ with maps $f,g$.
  Since $R$ is total,
  \[
    \stikzfig{EpiSplit3}
  \]
  so $f$ is surjective. Since surjective maps split,
  there exists a map $h$ that is a pre-inverse of $f$, so $h \seq f = \id$. Then
  \[
    \stikzfig{EpiSplit4}
  \]
  and therefore $\stikzfig{hgR}$, so $R$ contains a map.
\end{proof}

\medskip

We can now provide a first answer to the question posed at the beginning of this section: all cartesian bicategories with enough maps that satisfy (AC) can be recovered from their category of maps.

\begin{thm}
  Let $\Bicat$ be cartesian bicategory with enough maps. Then $\Bicat$ satisfies the axiom of choice iff
    \[
      \Bicat \cong \Spantilde{\Map(\Bicat)}
    \]
  \label{thm:Choice}
\end{thm}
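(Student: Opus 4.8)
The plan is to prove the two implications separately. Since $\Bicat$ has enough maps, comap--map factorisations (Lemma~\ref{RFact}) will give a comparison morphism between $\Bicat$ and $\Spantilde{\Map(\Bicat)}$ that is always full; the axiom of choice is precisely the extra ingredient needed to make it faithful. Throughout I use that a morphism $X \to Y$ in $\Spantilde{\mc{C}}$ is (a $\sim$-class of) a span $X \xleftarrow{f} Z \xrightarrow{g} Y$ of maps, with the order given by one-sided domination: $(f,g) \leq (f',g')$ iff there is a map $u \from Z \to Z'$ with $u \seq f' = f$ and $u \seq g' = g$, and $\sim$ its symmetrisation.

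For the direction $\Leftarrow$, I would first record, as a fact about the construction itself, that $\Spantilde{\mc{C}}$ satisfies~\eqref{eq:AC} for \emph{every} $\mc{C}$ with weak pullbacks. Given a total $R = (l,r) \from X \to Y$, computing $R \seq \op{R}$ by the weak pullback $W$ of $r$ with itself (projections $p_1, p_2 \from W \to Z$), totality $\id_X \leq R \seq \op{R}$ unwinds to a map $u \from X \to W$ with $u \seq p_1 \seq l = \id_X$; thus $s = u \seq p_1$ is a section of the left leg, $s \seq l = \id_X$. The span $(\id_X,\, s \seq r)$ is then a map, and it is dominated by $(l,r)$ via $s$, so $R$ contains a map. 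Since~\eqref{eq:AC} is phrased purely in cartesian-bicategorical terms (total morphisms, maps, the order), it is invariant under isomorphism of cartesian bicategories; hence $\Bicat \cong \Spantilde{\Map(\Bicat)}$ forces $\Bicat$ to satisfy~\eqref{eq:AC}.

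For the direction $\Rightarrow$, I would define a morphism of cartesian bicategories $\Phi \from \Spantilde{\Map(\Bicat)} \to \Bicat$ that is the identity on objects and sends a span $(f \from Z \to X,\, g \from Z \to Y)$ to $\op{f} \seq g$. Monotonicity and well-definedness on $\sim$-classes are immediate: if $u$ is a map with $u \seq f' = f$ and $u \seq g' = g$, then $\op{f} \seq g = \op{f'} \seq (\op{u} \seq u) \seq g' \leq \op{f'} \seq g'$ using $\op{u} \seq u \leq \id$ for the single-valued $u$, and symmetry upgrades $\leq$ to $=$. Preservation of the monoidal product and of the (co)monoids is routine, and functoriality follows from the span calculus of $\Spantilde{}$ (comap--map composition matching weak-pullback composition). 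The morphism $\Phi$ is bijective on objects, hence essentially surjective, and it is \emph{full} exactly because $\Bicat$ has enough maps: Lemma~\ref{RFact} produces, for any $R$, maps $f,g$ with $R = \op{f} \seq g = \Phi(f,g)$. It remains to show $\Phi$ is faithful, which by Lemma~\ref{lem:FaithfulReflectOrder} means order-reflecting.

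This order-reflection is the step I expect to be the main obstacle, and it is exactly where~\eqref{eq:AC} enters. Suppose $\op{f} \seq g \leq \op{f'} \seq g'$; write $a = \langle f,g \rangle \from Z \to X \otimes Y$ and $b = \langle f',g' \rangle \from Z' \to X \otimes Y$ for the product maps (Lemma~\ref{lem:MapCart}). Since $\op{f} \seq g = \op{\pi_1} \seq (\op{a} \seq a) \seq \pi_2$ and likewise for $b$, the hypothesis reads $\op{\pi_1} \seq (\op{a} \seq a) \seq \pi_2 \leq \op{\pi_1} \seq (\op{b} \seq b) \seq \pi_2$; I would translate this into $\op{a} \seq a \leq \op{b} \seq b$ using that a coreflexive on $X \otimes Y$ is recoverable from its projection $\op{\pi_1} \seq (-) \seq \pi_2$, so the order on coreflexives is reflected by this projection. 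This recoverability, a purely order-theoretic fact to be set up alongside the construction, is the most delicate point. From $\op{a} \seq a \leq \op{b} \seq b$ and totality of $a$ one gets that $H = a \seq \op{b} \from Z \to Z'$ is total, since $H \seq \op{H} = a \seq (\op{b} \seq b) \seq \op{a} \geq a \seq (\op{a} \seq a) \seq \op{a} = (a \seq \op{a}) \seq (a \seq \op{a}) \geq \id_Z$. Now~\eqref{eq:AC} supplies a map $u \leq H$, whence $u \seq b \leq a \seq \op{b} \seq b \leq a$ (using $\op{b} \seq b \leq \id$); as $u \seq b$ and $a$ are both maps, Proposition~\ref{prop:mapCmp} gives $u \seq b = a$, i.e. $u \seq f' = f$ and $u \seq g' = g$, so $(f,g) \leq (f',g')$ and $\Phi$ is faithful. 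A technically lighter alternative would invoke $\Bicat \cong \SpanS{\Map(\Bicat)}$ for $\mc{S}$ the surjective maps (Theorem~\ref{thm:ReconstructTame}) and observe, via Proposition~\ref{prop:ACiffEpisSplit}, that~\eqref{eq:AC} makes surjectives split, so that $\SpanS{}$ collapses to $\Spantilde{}$.
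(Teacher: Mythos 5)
Your proof is correct in outline and, for the hard direction, takes a genuinely different route from the paper. The paper never proves faithfulness of the comparison functor directly: it routes everything through tameness, showing that enough maps plus~\eqref{eq:AC} imply tameness (Lemma~\ref{lem:ChoiceEntailsTame}), invoking the general reconstruction $\Bicat \cong \SpanS{\Map(\Bicat)}$ for $\mc{S}$ the surjective maps (Theorem~\ref{thm:ReconstructTame}), and then collapsing $\SpanS{}$ to $\Spantilde{}$ because~\eqref{eq:AC} makes surjectives split (Lemmas~\ref{lem:ChoiceSplitsEpis} and~\ref{lem:CospantildeSplit}) --- exactly the ``technically lighter alternative'' you mention in your last sentence. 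Your direct argument for order-reflection --- passing to the coreflexives $\op{a}\seq a$ and $\op{b}\seq b$, deducing totality of $a\seq\op{b}$, and extracting the mediating map from~\eqref{eq:AC} --- is in substance a fusion of the paper's Lemma~\ref{lem:FillSquare} (which likewise builds a total relation between the apexes and applies choice) with the comparison of spans; it buys a self-contained proof at the price of the unproven claim that a coreflexive on $X\otimes Y$ is recoverable from $\op{\pi_1}\seq(-)\seq\pi_2$. That claim is true (it is the standard Carboni--Walters order-isomorphism between $\Hom(X,Y)$ and coreflexives on $X\otimes Y$, via the Frobenius law), but it is the one genuinely nontrivial ingredient you would have to supply, and you rightly flag it as the delicate point. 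Your $\Leftarrow$ direction is a clean direct computation, where the paper instead combines Propositions~\ref{prop:MapSpan} and~\ref{prop:ACiffEpisSplit}. Two points you gloss over and should make explicit: (i) $\Map(\Bicat)$ must be shown to \emph{have} weak pullbacks for $\Spantilde{\Map(\Bicat)}$ to be defined at all --- this follows from enough maps plus~\eqref{eq:AC} by taking a comap--map factorisation of $h\seq\op{k}$ and applying Lemma~\ref{lem:FillSquare}, and is precisely what the paper packages into tameness; and (ii) functoriality of $\Phi$ is not mere ``span calculus'' but requires the identity $\op{p}\seq q = g\seq\op{h}$ for a weak pullback $(p,q)$ of the cospan $(g,h)$, i.e.\ the second half of the proof of Lemma~\ref{lem:ChoiceEntailsTame}.
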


Here, $\mathsf{Span}^{\sim}$ is the key construction that we will introduce in the next section. Observe that the above theorem not only states that this reconstruction is feasible, but it also characterises cartesian bicategories with choice amongst those with enough maps. The proof of this result will exploit a more general construction, called $\mathsf{Span}^{\mathcal{S}}$, that we will introduce in Section \ref{sec:spanS}.

\section{The \texorpdfstring{$\mathsf{Span}^{\sim}$}{Spantilde} construction}
\label{sec:Span}

Our starting observation is that in an arbitrary cartesian bicategory, commutative diagrams of maps give rise to inequalities in a very straightforward manner.

\begin{lem}
  Let $\mathcal{B}$ be a cartesian bicategory and
  \[
  \begin{tikzcd}[sep = tiny]
    & \arrow[swap]{dl}{f} A \arrow{dd}{\alpha} \arrow{dr}{g} & \\
    B & & C \\
    & \arrow{ul}{h} D \arrow[swap]{ur}{k} & \\
  \end{tikzcd}
  \]
  a commutative diagram of maps. Then $\stikzfig{Witness}$.
  \label{lem:Filler}
\end{lem}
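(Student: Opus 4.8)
The plan is to read the diagrammatic conclusion $\stikzfig{Witness}$ as the inequality $\op{f} \seq g \leq \op{h} \seq k$ between morphisms $B \to C$: on the left the comap $\op{f}$ followed by the map $g$, and on the right the comap $\op{h}$ followed by the map $k$. This is exactly the comap-map shape appearing in Lemma~\ref{RFact}, and it is the correct reading because in $\Rel$ a span $(f,g)$ denotes the relation $\op{f}\seq g$, and the commuting filler $\alpha$ exhibits the $(f,g)$-span as factoring through the $(h,k)$-span (send a witness $a$ to the witness $\alpha(a) = d$).

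First I would record the commutativity hypotheses algebraically as $\alpha \seq h = f$ and $\alpha \seq k = g$. Taking opposites and using that $\op{(-)}$ reverses composition in the compact closed structure (so $\op{(\alpha \seq h)} = \op{h}\seq\op{\alpha}$), I obtain $\op{f} = \op{h} \seq \op{\alpha}$. Substituting both identities into the left-hand side yields
\[ \op{f} \seq g \;=\; \op{h} \seq \op{\alpha} \seq \alpha \seq k. \]

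The key step is then to discharge the middle factor $\op{\alpha} \seq \alpha$. Since $\alpha$ is a map it is in particular single valued, so by Definition~\ref{defn:sv} (equivalently by Proposition~\ref{prop:svChar}) we have $\op{\alpha} \seq \alpha \leq \id_D$. Monotonicity of composition in the poset enrichment then gives $\op{h}\seq \op{\alpha}\seq\alpha\seq k \leq \op{h}\seq \id_D \seq k = \op{h}\seq k$, which is the desired inequality. Diagrammatically this is just sliding the filler $\alpha$ along the wire and collapsing $\op{\alpha}\seq\alpha$ using the single-valuedness inequality.

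I expect no genuine obstacle here: the only real content of the argument is choosing the correct property of $\alpha$ — its single-valuedness rather than its totality (note that totality is not used at all) — and keeping track of the contravariance of $\op{(-)}$ on composites within the compact closed structure. Everything else is routine bookkeeping.
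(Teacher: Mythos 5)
Your proof is correct and is essentially the paper's argument: the paper's one-line diagrammatic derivation likewise rewrites $\op{f}\seq g$ as $\op{h}\seq\op{\alpha}\seq\alpha\seq k$ using commutativity and then collapses $\op{\alpha}\seq\alpha\leq\id_D$ via the single-valuedness (counit) inequality for the map $\alpha$. Your observation that only single-valuedness, not totality, of $\alpha$ is needed is also accurate.
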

\begin{proof}
  \[
    \stikzfig{CommDiagMapProof}
  \]
\end{proof}

As a matter of fact, in a cartesian bicategory that satisfies~\eqref{eq:AC}, all inequalities are of this form, so we have
the following result, which is a converse of Lemma~\ref{lem:Filler}.

\begin{lem}
  Let $\mathcal{B}$ be a cartesian bicategory with choice and
  \[
  \begin{tikzcd}[sep = tiny]
    & \arrow[swap]{dl}{f} A \arrow{dr}{g} & \\
    B & & C \\
    & \arrow{ul}{h} D \arrow[swap]{ur}{k} & \\
  \end{tikzcd}
  \]
  a diagram of maps such that
  $  \stikzfig{Witness}$.
  Then there is a map $\omega \from A \to D$ such that the following diagram commutes.
  \[
  \begin{tikzcd}[sep = tiny]
    & \arrow[swap]{dl}{f} A \arrow{dd}{\omega} \arrow{dr}{g} & \\
    B & & C \\
    & \arrow{ul}{h} D \arrow[swap]{ur}{k} & \\
  \end{tikzcd}
  \]
  \label{lem:FillSquare}
\end{lem}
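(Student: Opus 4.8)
The plan is to produce the mediating map $\omega \from A \to D$ by first constructing a candidate morphism out of the given data and then upgrading it to a genuine map using choice. Since $\Bicat$ has choice (AC), the natural route is to exhibit a total relation $\Omega \from A \to D$ built from the hypothesis $\stikzfig{Witness}$, apply (AC) to extract a map $\omega \leq \Omega$, and then verify that $\omega$ makes both triangles commute. The obvious candidate for $\Omega$ is the composite obtained by pairing the two branches: intuitively, $\Omega$ should relate $a \in A$ to those $d \in D$ for which $h(d) = f(a)$ and $k(d) = g(a)$, which in the string-diagrammatic language is the relation $f \seq \op{h} \wedge g \seq \op{k}$, i.e.\ the composite of $f$ and $g$ against the comaps $\op{h}$ and $\op{k}$ glued together with a copy node on $A$ and a cocopy node on $D$.

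\textbf{Showing the candidate is total.} The first substantive step is to prove that this $\Omega$ is total, since that is precisely the hypothesis (AC) needs. Here is where the assumption $\stikzfig{Witness}$ does its work: that inequality says exactly that $f \seq \op{h}$ and $g \seq \op{k}$ agree in the relevant sense, and totality of $\Omega$ should follow by feeding in the fact that $f$ and $g$ are maps (hence total) together with the given inequality, using the characterisation of totality from Proposition~\ref{prop:svChar} and the adjunction properties of the comonoid. I would compute $\Omega \seq \op{\Omega}$ (or rather check $\stikzfig{Rtot}$ for $\Omega$) by diagrammatic manipulation, pushing the totality of $f$ and $g$ through and invoking $\stikzfig{Witness}$ at the key step.

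\textbf{Extracting and verifying the map.} Once $\Omega$ is shown total, (AC) yields a map $\omega \from A \to D$ with $\omega \leq \Omega$. It then remains to check that $\omega$ satisfies the two commuting conditions $\omega \seq h = f$ and $\omega \seq k = g$. From $\omega \leq \Omega$ and the definition of $\Omega$ one gets the inequalities $\omega \seq h \leq f$ and $\omega \seq k \leq g$ directly, since post-composing with $h$ (resp.\ $k$) and using $\op{h} \seq h \leq \id$ (as $h$ is a map) projects $\Omega$ onto $f$ (resp.\ $g$). But $\omega \seq h$ and $f$ are both maps, as are $\omega \seq k$ and $g$; hence by Proposition~\ref{prop:mapCmp} these inequalities are in fact equalities. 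This converts the two one-sided inequalities into the desired commutativity, completing the construction.

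\textbf{Anticipated obstacle.} I expect the main difficulty to lie in the totality computation for $\Omega$: getting the string-diagrammatic bookkeeping right so that the hypothesis $\stikzfig{Witness}$ plugs in at exactly the right place, and ensuring that the copy/cocopy interaction (via the Frobenius and special Frobenius laws) simplifies $\Omega \seq \op{\Omega}$ correctly. The final step using Proposition~\ref{prop:mapCmp} is routine once the inequalities are in hand, and the appeal to (AC) is immediate; so the heart of the argument is verifying that the candidate relation is total, where the given diagrammatic hypothesis must be used essentially.
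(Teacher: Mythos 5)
Your proposal matches the paper's proof essentially step for step: the paper likewise forms the candidate relation $R$ by pairing $f\seq\op{h}$ with $g\seq\op{k}$ via copy and cocopy, proves $R$ is total using the hypothesis $\stikzfig{Witness}$, extracts a map $\omega\leq R$ by (AC), and upgrades the resulting inequalities $\omega\seq h\leq f$ and $\omega\seq k\leq g$ to equalities via Proposition~\ref{prop:mapCmp}. You correctly located the only nontrivial work in the totality computation, which is exactly where the paper spends its diagrammatic effort.
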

\begin{proof}
  Consider $R \from A \to D$ given by
  \[
    \stikzfig{Witness2}
  \]
  One readily checks that $\stikzfig{WitnessProp1}$ and $\stikzfig{WitnessProp2}$.

  $R$ is total, since
  \[
    \stikzfig{Witness3}
  \]
  so by the axiom of choice, there is a map $\omega \leq R$.
  This satisfies
  \[
    \stikzfig{Witness4}
  \]
  and
  \[
    \stikzfig{Witness5}
  \]
  and since both side are maps we have equality by Proposition \ref{prop:mapCmp}.
\end{proof}

Lemma~\ref{lem:FillSquare} gives us a characterisation of inequalities in the presence of~\eqref{eq:AC}.
This motivates us to synthesise cartesian bicategories where the ordering is directly defined in this way.

\begin{defi}[$\mathsf{Span}$]
  \label{def:Span}
Let $\Cat$ be a finitely complete category. A \emph{span} from $X$ to $Y$ is a pair of arrows $X \leftarrow A \to Y$ in $\Cat$.

A morphism $\alpha \from (X \leftarrow  A \to Y) \Rightarrow (X \leftarrow  B \to  Y)$ is an arrow $\alpha \from A \to B$ in $\Cat$ s.t. the
following diagram commutes:
 \[
   \begin{tikzcd}[sep=tiny]
      & \arrow{dl} \arrow{dr} A \arrow{dd}{\alpha} & \\
      X & & Y \\
      & \arrow{ul} \arrow{ur} B & \\
   \end{tikzcd}
 \]
Two spans $X \leftarrow A \to Y$ and $X \leftarrow B \to Y$ are \emph{isomorphic} if $\alpha$ is an isomorphism.
For $X \in \Cat$, the \emph{identity span} is $X \xleftarrow{\id_X} X \xrightarrow{\id_X} X$. The composition of spans $X\leftarrow  A \xrightarrow{f} Y$
and $Y\xleftarrow{g}  B \to Z$ is $X \leftarrow A \times_{f,g} B \to Z$,
obtained by taking the pullback $A \times_{f,g} B$ of $f$ and $g$. This data defines the bicategory~\cite{benabou1967introduction} $\Span{\Cat}$:
the objects are those of $\Cat$, the arrows are spans and 2-cells are homomorphisms.
Finally, $\Span{\Cat}$ has monoidal product given by the product in $\Cat$, with unit the final object $1\in \Cat$.
\end{defi}

To avoid the complications that come with bicategories, such as composition being associative only up to isomorphism,
it is common to consider a \emph{category} of spans, where isomorphic spans are equated: let $\Spanleq{\Cat}$ be the monoidal category that has
isomorphism classes of cospans as arrows. Note that, when going from bicategory to category,
after identifying isomorphic arrows it is usual to simply discard the 2-cells.
Differently, we consider $\Spanleq{\Cat}$ to be locally preordered with $(X \leftarrow A \to Y) \leq (X \leftarrow B \to Y)$ if there exists a morphism
$\alpha \from (X \leftarrow A \to Y) \Rightarrow (X \leftarrow B \to Y) $. It is an easy exercise to verify that this (pre)ordering is well-defined and compatible with composition and monoidal product. Note that, in general, $\leq$ is a genuine preorder: i.e. it is possible that $(X \to A \leftarrow Y) \leq (X \to B \leftarrow Y) \leq (X \to A \leftarrow Y)$ without the cospans being isomorphic.

Since $\Spanleq{\Cat}$ is preorder enriched, rather than poset enriched, it is \emph{not} a cartesian bicategory.
However, one can transform a preorder enriched category into a poset enriched one with a simple construction:
for $\Spanleq{\Cat}$, one first defines $\sim=\leq \cap \geq$, namely
$(X \leftarrow A \to Y) \sim (X \leftarrow B \to Y)$ iff there exists
$\alpha \colon (X \leftarrow A \to Y) \Rightarrow (X \leftarrow B \to Y)$ and
$\beta \colon (X \leftarrow B \to Y) \Rightarrow (X \leftarrow A \to Y)$,
and then one takes equivalence classes of morphisms of $\Spanleq{\Cat}$ modulo $\sim$. It is worth observing that pullbacks are no longer necessary
to compose $\sim$-equivalence classes of spans: weak pullbacks are sufficient,
since non-isomorphic weak pullbacks of the same cospan all belong to the same $\sim$-equivalence class.
We therefore define the posetal category $\Spantilde{\Cat}$ which has the same objects as $\Cat$ and as morphisms $\sim$-equivalence classes of spans.
The order is defined as in $\Spanleq{\Cat}$.
Composition is given by weak pullbacks in $\Cat$. Identities, monoidal product and unit are as in $\Span{\Cat}$.

The construction of $\Spantilde{\Cat}$ has appeared in~\cite{guitart1980relations} under the name of $\mathsf{REL}(\Cat)$.

\begin{prop}
Let $\Cat$ be a category with finite products and weak pullbacks. Then $\Spantilde{\Cat}$ is a cartesian bicategory.
\end{prop}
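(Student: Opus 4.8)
The plan is to verify the five axioms of Definition~\ref{def:cartBicat} for $\Spantilde{\Cat}$, taking for granted from the preceding discussion that $\Spantilde{\Cat}$ is a poset-enriched symmetric monoidal category whose composition is computed by weak pullback (well-defined on $\sim$-classes), with monoidal product the product of $\Cat$ and unit the terminal object $1$. The organising device I would use is the \emph{graph embedding} $J \from \Cat \to \Spantilde{\Cat}$ sending each object to itself and each arrow $f \from X \to Y$ to the span $X \xleftarrow{\id} X \xrightarrow{f} Y$. A one-line weak-pullback computation (the pullback of $f$ along an identity is $X$) shows $J$ is a functor, and since $\Cat$-products compute the monoidal product of $\Spantilde{\Cat}$, $J$ is strong symmetric monoidal. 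I then set $\mathrm{Copy}_X := J(\Delta_X)$ and $\mathrm{Dis}_X := J({!}_X)$, where $\Delta_X \from X \to X\times X$ and ${!}_X \from X \to 1$ are the diagonal and the terminal map. Because $\Cat$ is a cartesian category, $(\Delta_X, {!}_X)$ is the canonical cocommutative comonoid on $X$ and these comonoids are compatible with products; as $J$ is strong monoidal it preserves this structure, so axioms (1) and (5) hold immediately.

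For axiom (2) I would first prove the general fact that every $J$-image has the reversed span as a right adjoint: for any $f \from X \to Y$ in $\Cat$, the opposite span $\op{J(f)} = (Y \xleftarrow{f} X \xrightarrow{\id} X)$ satisfies $\id_X \leq J(f)\seq\op{J(f)}$ and $\op{J(f)}\seq J(f) \leq \id_Y$. Here the unit $2$-cell is the diagonal $X \to X\times_Y X$ into the (weak) pullback of $f$ with itself, and the counit $2$-cell is $f$ itself. Specialising to $f = \Delta_X$ and $f = {!}_X$ identifies the required right adjoints $\mathrm{Cocopy}_X = \op{\mathrm{Copy}_X}$ and $\mathrm{Codis}_X = \op{\mathrm{Dis}_X}$. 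Axiom (3), the Frobenius law, is then a direct weak-pullback computation: both composites appearing in the law reduce—after the pullbacks collapse the equalised variables—to the single span $X\times X \xleftarrow{\Delta_X} X \xrightarrow{\Delta_X} X\times X$, so the two sides coincide as $\sim$-classes.

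Axiom (4), that every morphism is a lax comonoid homomorphism, is where the genuine inequalities (as opposed to the equalities inherited through $J$) appear, and is the conceptual heart of the statement. Writing a representative span $R = (X \xleftarrow{p} A \xrightarrow{q} Y)$, the lax-copy inequality $R\seq\mathrm{Copy}_Y \leq \mathrm{Copy}_X \seq (R\otimes R)$ compares the span $X \xleftarrow{p} A \xrightarrow{\langle q,q\rangle} Y\times Y$ with $X \leftarrow A\times_X A \to Y\times Y$ (legs $p\pi_1$ and $\langle q\pi_1, q\pi_2\rangle$), and the witnessing $2$-cell is the diagonal $\Delta_A \from A \to A\times_X A$; the lax-discard inequality $R\seq\mathrm{Dis}_Y \leq \mathrm{Dis}_X$ is witnessed by the leg $p \from A \to X$ itself.

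I expect the main obstacle to be bookkeeping rather than conceptual: one must carry out the adjunction and Frobenius computations keeping careful track of which map each (weak) pullback projects to, and must check throughout that every construction is invariant under the choice of weak pullback so that it descends to $\sim$-equivalence classes—the latter being exactly what legitimises working with weak rather than genuine pullbacks, as already noted when $\Spantilde{\Cat}$ was defined.
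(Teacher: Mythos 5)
Your proposal is correct and follows essentially the same route as the paper: choose $\op{\mathrm{Copy}_X}$ and $\op{\mathrm{Dis}_X}$ as the reversed spans and verify each axiom of Definition~\ref{def:cartBicat} by exhibiting a witnessing commutative diagram in $\Cat$ (diagonals into weak pullbacks for the units and the lax-copy law, the legs themselves for the counits and lax-discard), with the equational axioms inherited from the cartesian structure of $\Cat$. The paper merely states that each inequality is so witnessed and works one example, so your write-up is the same argument carried out in more detail, with the graph embedding $J$ serving as a convenient organisational device.
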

\begin{proof}

For $\stikzfig{CocopyX}$ we take the span $X\times X \leftarrow X \to X $ and for $\stikzfig{CodisX}$ we take $1 \leftarrow X \to X$.

With this information, one has only to check that the inequalities in Definition \ref{def:cartBicat} hold: each of them is witnessed  by a commutative diagrams in $\mathcal{C}$. As an example, we illustrate $\stikzfig{UnitDis}$.
    The left hand side is the span  $X \xleftarrow{\id_X} X \xrightarrow{\id_X} X$.
    The right hand side is the composition of $X \xleftarrow{\id_X} X \xrightarrow{!} 1$ and $1 \xleftarrow{!} X \xrightarrow{\id_X} X$. Since the product $X \xleftarrow{\pi_1} X \times X \xrightarrow{\pi_2} X$ is a pullback of $X \xrightarrow{!}1 \xleftarrow{!}X$,
    the composition turns out to be exactly the span $X \xleftarrow{\pi_1} X \times X \xrightarrow{\pi_2} X$.
    Now the diagonal $\Delta \from X \to X \times X$ makes the following diagram in $\mathcal{C}$ commute. Therefore $\Delta$ witnesses the inequality $\stikzfig{UnitDis}$.
    \[
      \begin{tikzcd}[sep=tiny]
        & \arrow[swap]{dl}{\id_X} X \arrow{dd}{\Delta} \arrow{dr}{\id_X} & \\
        X & & X \\
        & \arrow{ul}{\pi_1} X \times X \arrow[swap]{ur}{\pi_2} &
      \end{tikzcd}
    \]
\end{proof}

\begin{prop}
  Let $\Cat$ be a category with finite products and weak pullbacks. Then $\Map(\Spantilde{\Cat}) \cong \Cat$ and surjective maps in $\Spantilde{\Cat}$ are exactly split epis in $\Cat$.
  \label{prop:MapSpan}
\end{prop}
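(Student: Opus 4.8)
The plan is to exhibit the canonical identity-on-objects functor $J \from \Cat \to \Spantilde{\Cat}$ that sends an arrow $f \from X \to Y$ to the $\sim$-class of the span $X \xleftarrow{\id_X} X \xrightarrow{f} Y$, and to prove that it restricts to an isomorphism onto $\Map(\Spantilde{\Cat})$. Functoriality is routine: identities go to identity spans, and since a weak pullback of any $f$ against an identity is (up to $\sim$) trivial, one gets $J(f) \seq J(g) \sim J(f \seq g)$. I would first check that each $J(f)$ is a map. By Proposition~\ref{prop:MapRightAdj} it suffices to produce a right adjoint, and the obvious candidate is the reversed span $\op{J(f)} = (Y \xleftarrow{f} X \xrightarrow{\id_X} X)$: the unit $\id_X \leq J(f) \seq \op{J(f)}$ is witnessed by the diagonal $X \to X \times_{f,f} X$, and the counit $\op{J(f)} \seq J(f) \leq \id_Y$ by the arrow $f$ itself, both being ordinary commuting diagrams in $\Cat$.

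The heart of the argument is to show that \emph{every} map of $\Spantilde{\Cat}$ is $\sim$-equivalent to some $J(f)$. For this I would translate Definition~\ref{defn:sv} into $\Cat$. Composing a span $R = (X \xleftarrow{a} A \xrightarrow{b} Y)$ with its opposite and unfolding the weak pullbacks shows that $R$ is \emph{total} exactly when its left leg $a$ admits a section (is a split epi), and \emph{single valued} exactly when $b \circ p_1 = b \circ p_2$ for the projections $p_1,p_2 \from A \times_{a,a} A \to A$ of a weak pullback of $a$ with itself. Now, given a map $R$, totality supplies a section $r \from X \to A$ with $a \circ r = \id_X$, and I set $f := b \circ r$. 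The section $r$ is at once a span morphism $J(f) \Rightarrow R$, giving $J(f) \leq R$. For the reverse inequality I would propose the arrow $a$ itself as a span morphism $R \Rightarrow J(f)$; this is legitimate precisely when $f \circ a = b$, i.e.\ $b \circ r \circ a = b$.

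Establishing $b \circ r \circ a = b$ is the step I expect to be the main obstacle, and it is exactly where single-valuedness enters. Since $a \circ (r \circ a) = a = a \circ \id_A$, the weak pullback $A \times_{a,a} A$ furnishes an arrow $u \from A \to A \times_{a,a} A$ with $p_1 \circ u = r \circ a$ and $p_2 \circ u = \id_A$; precomposing the single-valuedness identity $b \circ p_1 = b \circ p_2$ with $u$ yields $b \circ r \circ a = b$. Hence $R \sim J(f)$, so $J$ hits every map. Faithfulness is immediate: a span morphism $J(f) \Rightarrow J(f')$ has apex an arrow $\alpha \from X \to X$ with $\id_X \circ \alpha = \id_X$ and $f' \circ \alpha = f$, forcing $\alpha = \id_X$ and $f = f'$ (this also recovers Proposition~\ref{prop:mapCmp} for these spans). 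Being identity-on-objects, surjective onto maps, and faithful, $J$ is an isomorphism $\Cat \cong \Map(\Spantilde{\Cat})$.

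Finally, for the characterisation of surjective maps I would combine the totality criterion with Proposition~\ref{prop:svChar}. Writing an arbitrary map as $J(p)$ for $p \from X \to Y$ in $\Cat$, surjectivity of $J(p)$ means totality of $\op{J(p)} = (Y \xleftarrow{p} X \xrightarrow{\id_X} X)$, whose left leg is $p$; by the (iff) totality criterion this holds precisely when $p$ is a split epi. Thus, under the isomorphism above, the surjective maps of $\Spantilde{\Cat}$ are exactly the split epis of $\Cat$.
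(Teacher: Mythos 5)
Your proposal is correct and follows essentially the same route as the paper: the same identity-on-objects functor sending $f$ to $X \xleftarrow{\id_X} X \xrightarrow{f} Y$, the same use of totality to extract a section of the left leg and thereby exhibit every map as $J(f)$, and the same identification of surjective maps with split epis via totality of the reversed span. The only notable local difference is that where the paper concludes $J(f)=R$ from $J(f)\leq R$ by invoking the discreteness of the order on maps (Proposition~\ref{prop:mapCmp}), you instead build the reverse span morphism explicitly out of single-valuedness; both steps are sound, and you also make explicit the faithfulness check that the paper leaves implicit.
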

\begin{proof}
Since $\Cat$ has finite products, it is endowed with a cartesian monoidal structure. This means in particular that  $\stikzfig{mapSpan2}$ for all $g$ in $\Cat$.

  Let $F \from \Cat \to \Spantilde{\Cat}$ be the identity on objects and mapping a morphism $f \from X \to Y$ to the span
  $X \xleftarrow{\id_X} X \xrightarrow{f} Y$. It is easy to check that $F$ is a monoidal functor.

  \noindent
  Since every morphism in $\Cat$ is a comonoid homomorphism,
  $F$ factors as $\Cat \xrightarrow{F'} \Map(\Spantilde{\Cat}) \rightarrow \Spantilde{\Cat}$.
To conclude that $F'$ is an isomorphism, it is enough to show that every $\Spantilde{\Cat}$ map is the $\sim$-equivalence class of some span $X \xleftarrow{\id_X} X \xrightarrow{f} Y$.

  Now, if $X \xleftarrow{f} Z \xrightarrow{g} Y$ is a map in $\Spantilde{\Cat}$, in particular
  \[\stikzfig{MapSpan}\]
  Therefore, by Definition of the ordering in $\Spantilde{\Cat}$, there is a morphism $h \from X \to Z$ such that
  \[
    \begin{tikzcd}[sep=tiny]
      & X \arrow[swap]{dl}{\id_X} \arrow{dd}{h} \\
      X \\
      & Z \arrow{ul}{f}
    \end{tikzcd}
  \quad
  \text{commutes, and therefore}
  \quad
    \begin{tikzcd}[sep=tiny]
      & X \arrow[swap]{dl}{\id_X} \arrow{dd}{h} \arrow{dr}{h \seq g} \\
      X & & Y \\
      & Z \arrow{ul}{f} \arrow[swap]{ur}{g}
    \end{tikzcd}.
  \]
  The two spans are thus equal in $\Spantilde{\Cat}$, since they are both maps.

\medskip

We can now prove the second part of the proposition.
  If $\pi \from X \to Y$ is a map in $\Cat$ such that $F(\pi)$ is surjective in $\Spantilde{\Cat}$, then we have
  \[\stikzfig{EpiSpan}
  \quad
  \text{and therefore there is $\iota \from Y \to X$ such that}
  \quad
    \begin{tikzcd}[sep=tiny]
      & Y \arrow{dr}{\id_Y} \arrow{dd}{\iota} \\
      & & Y \\
      & X \arrow[swap]{ur}{\pi}
    \end{tikzcd}
  \]
  so $\pi$ is a split epi. The converse direction that split epis are surjective maps is obvious.
\end{proof}

\begin{prop}
$\Spantilde{\mathcal{C}}$ has enough maps.
\end{prop}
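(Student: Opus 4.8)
The plan is to unwind the span calculus and exhibit the required map by hand. Recall from Definition~\ref{defn:enoughMaps} that $\Spantilde{\Cat}$ has enough maps precisely when, for every morphism $R \from X \to I$, there is a map $f \from Z \to X$ with $R = \op{f} \seq (\text{discard on } Z)$. So fix $R \from X \to I$ in $\Spantilde{\Cat}$. Since the monoidal unit $I$ is the terminal object $1 \in \Cat$, the right leg of any representing span is forced to be the unique arrow into $1$; thus $R$ is the $\sim$-class of a span of the shape $X \xleftarrow{a} A \xrightarrow{!} 1$, and all of its data is carried by the single $\Cat$-morphism $a \from A \to X$.

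First I would produce the candidate map. Take $Z := A$ and let $f \from A \to X$ be the map obtained by sending $a$ through the functor of Proposition~\ref{prop:MapSpan}, i.e. the $\sim$-class of $A \xleftarrow{\id_A} A \xrightarrow{a} X$; that proposition guarantees this is genuinely a map of $\Spantilde{\Cat}$. Next I would compute $\op{f}$ composed with the discard and check it returns $R$. The comap $\op{f}$ is represented by the flipped span $X \xleftarrow{a} A \xrightarrow{\id_A} A$, while the discard $A \to I$ is the span $A \xleftarrow{\id_A} A \xrightarrow{!} 1$. Their composite is computed by a weak pullback of $\id_A$ against $\id_A$, for which $A$ itself (with both projections the identity) is a valid choice; the resulting span is exactly $X \xleftarrow{a} A \xrightarrow{!} 1$, which is our chosen representative of $R$. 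Hence $R = \op{f}\seq(\text{discard on }A)$, and $\Spantilde{\Cat}$ has enough maps.

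The only point needing genuine verification is the identification of $\op{f}$ with the flipped span; everything else is a direct reading of the span composition. I expect this to be the main obstacle, though a mild one: it follows from the description of the cup and cap of the compact-closed structure on $\Spantilde{\Cat}$ given in the proof that $\Spantilde{\Cat}$ is a cartesian bicategory, and is also anticipated by the isomorphism $\Comap(\Bicat) \cong \op{\Map(\Bicat)}$, which here realises comaps precisely as reversed map-spans. Finally, I would remark that no subtlety about the equivalence relation $\sim$ arises: working with a fixed representative of $R$ we recover that same representative on the nose, so equality of $\sim$-classes is immediate.
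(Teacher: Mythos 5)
Your proof is correct and follows essentially the same route as the paper: represent $R \from X \to I$ by a span $X \leftarrow A \to I$, observe that the leg into $I$ is (necessarily) the discard, and read off the required map $f \from A \to X$ from the other leg. The only cosmetic difference is that you justify the right leg being the discard via terminality of $1$ in $\Cat$ and an explicit weak-pullback computation, whereas the paper invokes Proposition~\ref{prop:MapSpan} together with the uniqueness of maps into $I$ (Proposition~\ref{prop:mapCmp}); both are sound.
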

\begin{proof}
In a cartesian bicategory, for all $R\colon X \to I$ we have $R\leq  \stikzfig{DisX}$. In the special case when $R$ is a map $g\colon X \to I$, by Proposition~\ref{prop:mapCmp}, it holds that $g=\stikzfig{DisX}$. Now take a morphism $R\colon X \to I$ in $\Spantilde{\mathcal{C}}$. By definition, $R$ is a span $X \xleftarrow{f} A \xrightarrow{g} I$. Observe that by Proposition~\ref{prop:MapSpan}, both $f$ and $g$ are maps in $\Spantilde{\mathcal{C}}$. Therefore $g=\stikzfig{DisA}$ and $\Spantilde{\mathcal{C}}$ has enough maps.
\end{proof}

By Proposition \ref{prop:ACiffEpisSplit}, the two propositions above entail the following.

\begin{cor}\label{cor:spanAC}
$\Spantilde{\Cat}$  satisfies~\eqref{eq:AC}.
\end{cor}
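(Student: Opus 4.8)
The plan is to assemble the three results immediately preceding the statement. Proposition~\ref{prop:ACiffEpisSplit} tells us that a cartesian bicategory \emph{with enough maps} satisfies~\eqref{eq:AC} precisely when its surjective maps split, so the task reduces to verifying these two hypotheses for $\Spantilde{\Cat}$. I would therefore structure the argument as a straightforward chaining of the facts already in hand, rather than any fresh diagrammatic computation.

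The first hypothesis is discharged directly by the preceding proposition, which establishes that $\Spantilde{\Cat}$ has enough maps. For the second, I would invoke Proposition~\ref{prop:MapSpan}: under the isomorphism $\Map(\Spantilde{\Cat}) \cong \Cat$, the surjective maps of $\Spantilde{\Cat}$ are identified with the split epis of $\Cat$. A split epi has a section by definition, and transporting this section back along the isomorphism yields a map in $\Spantilde{\Cat}$ that pre-inverses the given surjective map. Hence surjective maps split in $\Spantilde{\Cat}$, and I can conclude by applying Proposition~\ref{prop:ACiffEpisSplit}.

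There is essentially no obstacle here beyond bookkeeping: the substantive content has already been carried out in Propositions~\ref{prop:ACiffEpisSplit} and~\ref{prop:MapSpan} together with the enough-maps proposition. The only point requiring mild care is the direction of the section, namely checking that the section guaranteed by the phrase ``split epi'' composes on the correct side to furnish a pre-inverse in the sense of ``surjective maps split''. This, however, is exactly the compatibility recorded in the second half of Proposition~\ref{prop:MapSpan}, where the splitting $\iota \seq \pi = \id$ is produced explicitly. Combining the two verified hypotheses with Proposition~\ref{prop:ACiffEpisSplit} then yields~\eqref{eq:AC} for $\Spantilde{\Cat}$.
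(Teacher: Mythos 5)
Your proof is correct and follows exactly the paper's argument: the paper derives this corollary by combining Proposition~\ref{prop:ACiffEpisSplit} with the two immediately preceding propositions (enough maps, and surjective maps being split epis in $\Cat$ under $\Map(\Spantilde{\Cat}) \cong \Cat$). Your extra care about transporting the section along the isomorphism is a detail the paper leaves implicit, but it is the same proof.
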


The dual of spans, called cospans, are similarly useful for us, so we can make similar definitions in that case.

\begin{defi}
  Let $\Cat$ be a category with finite coproducts and weak pushouts. Then we define
  \[
    \Cospantilde{\Cat} = \Spantilde{\op{\Cat}}
  \]
  Explicitly, $\Cospantilde{\Cat}$ has the same objects as $\Cat$ and a morphism $X \to Y$ is a cospan $X \rightarrow A \leftarrow Y$.
  We have $X \rightarrow A \leftarrow Y \leq X \rightarrow B \leftarrow Y$ if and only if there is a morphism \emph{in the opposite direction}, i.e.
  a morphism $\alpha \from B \to A$ such that
      \[
        \begin{tikzcd}[sep=small]
          & B \arrow{dd}{\alpha} & \\
          X \arrow{dr} \arrow{ur} & & Y \arrow{ul} \arrow{dl} \\
          & A & \\
        \end{tikzcd}
      \]
  \label{def:Cospantilde}
\end{defi}

\begin{exa}\label{ex:Frob}
Let $\FinSet$ be the category with natural numbers as objects and as morphisms functions (as in Example \ref{ex:ERPER}, natural numbers are regarded as finite sets). The category $\Spantilde{\op{\FinSet}} = \Cospantilde{\FinSet}$ satisfies~\eqref{eq:AC} by Corollary~\ref{cor:spanAC}.
This category is particularly relevant for different reasons. First, it is the cartesian bicategory on one object (see~\cite[Theorem 31]{GCQ}) or, using the terminology in \cite{relationaltheories}, it is the Carboni-Walters category freely generated by the empty Frobenius theory. Moreover, after forgetting its posetal enrichment, it is the PROP $\mathbf{Frob}$ of special Frobenius bimonoids which appears to be of fundamental importance in several works (e.g. in \cite{lack2004composing,bonchi2016rewriting}). Finally, the cartesian bicategory of equivalence relations, $\mathbf{ERel}$ from Example \ref{ex:ERPER}, can be obtained as a quotient of $\Cospantilde{\FinSet}$: to pass from cospans to equivalence relations, it suffices to equate $$\stikzfig{BoneEq}\text{.}$$

Since $\mathbf{ERel}$ does not satisfy~\eqref{eq:AC}, by Corollary \ref{cor:spanAC}, there is no category $\Cat$, such that $\mathbf{ERel}$ is $\Spantilde{\Cat}$.
Instead, $\mathbf{PERel}$ can be put in $\mathsf{Span}^{\sim}$ form: it is $\Spantilde{\op{\FinSet_p}} = \Cospantilde{\FinSet_p}$ for $\FinSet_p$ being defined as $\FinSet$ but with partial functions as morphisms. Indeed, as anticipated by Theorem \ref{thm:Choice}, any cartesian bicategory with enough maps that satisfies~\eqref{eq:AC} arises from the $\mathsf{Span}^{\sim}$ construction. In Section \ref{sec:spanS}, we will provide a proof of Theorem~\ref{thm:Choice} by exploiting the more general $\mathsf{Span}^{\mathcal{S}}$ construction. We will see in Example~\ref{ex:ERelSpan}, that  $\mathbf{ERel}$ can be put in $\mathsf{Span}^{\mathcal{S}}$ form.
\end{exa}

\section{Tame Cartesian bicategories}
\label{sec:CartBicatAndMaps}
Theorem~\ref{thm:Choice} provides sufficient conditions for a cartesian bicategory to be reconstructed from its maps by means of  $\mathsf{Span}^{\sim}$. In this section, we weaken those conditions, more precisely (AC), so that a cartesian bicategory can be reconstructed by means of the more general construction $\mathsf{Span}^{\mathcal{S}}$ that we will introduce in Section \ref{sec:spanS}.

We start with a simple observation.
\begin{lem}
Let $\mathcal{B}$ be a cartesian bicategory and consider the following diagram in $\Map(\mathcal{B})$.
  \begin{equation}\label{eq:square}
  \begin{tikzcd}[sep = tiny]
    & \arrow[swap]{dl}{f} A \arrow{dr}{g} & \\
    B \arrow[swap]{dr}{h} & & \arrow{dl}{k} C \\
    & D & \\
  \end{tikzcd}
  \end{equation}
  Then $\stikzfig{CommDiag} $ if and only if the diagram commutes.
  \label{lem:CommMaps}
\end{lem}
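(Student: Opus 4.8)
I read the string diagram $\stikzfig{CommDiag}$ as the inequality $\op{f} \seq g \leq h \seq \op{k}$ between the two parallel morphisms $B \to C$ obtained by reading the top of \eqref{eq:square} as a span and its bottom as a cospan. The plan is to prove the two implications separately. Throughout I will only use that $f$ and $k$ are maps, so that, by Proposition~\ref{prop:svChar}, the (co)unit inequalities $\id \leq f \seq \op{f}$, $\op{f} \seq f \leq \id$, $\id \leq k \seq \op{k}$ and $\op{k} \seq k \leq \id$ are available, together with Proposition~\ref{prop:mapCmp} to turn an inequality of maps into an equality.

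For the direction from commutativity to $\stikzfig{CommDiag}$, assume $f \seq h = g \seq k$ and compute
\[
\op{f} \seq g \;\leq\; \op{f} \seq g \seq k \seq \op{k} \;=\; \op{f} \seq f \seq h \seq \op{k} \;\leq\; h \seq \op{k},
\]
where the first step inserts the unit $\id \leq k \seq \op{k}$ ($k$ total), the middle equality substitutes $g \seq k = f \seq h$, and the last step applies the counit $\op{f} \seq f \leq \id$ ($f$ single valued). This direction needs no appeal to Proposition~\ref{prop:mapCmp}.

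For the converse, assume $\op{f} \seq g \leq h \seq \op{k}$ and derive
\[
g \seq k \;\leq\; f \seq \op{f} \seq g \seq k \;\leq\; f \seq h \seq \op{k} \seq k \;\leq\; f \seq h,
\]
using in turn $\id \leq f \seq \op{f}$ ($f$ total), the hypothesis inserted between $f$ and $k$, and $\op{k} \seq k \leq \id$ ($k$ single valued). Here lies the one genuinely non-formal point, and the step I would flag as the crux: the hypothesis is an inequality between honest relations and cannot be reversed on its own; the trick is to sandwich it between the maps $f$ and $k$ so that both sides of the resulting inequality, $g \seq k$ and $f \seq h$, are composites of maps and hence themselves maps. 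Proposition~\ref{prop:mapCmp} then promotes $g \seq k \leq f \seq h$ to $g \seq k = f \seq h$, which is exactly commutativity of \eqref{eq:square}. Everything else is routine bookkeeping with the adjunction (co)units of $f$ and $k$.
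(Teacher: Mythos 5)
Your proof is correct and follows essentially the same route as the paper: the forward direction is the standard unit/counit sandwich using that $f$ is single valued and $k$ is total, and the converse conjugates the hypothesis by $f$ and $k$ to obtain an inequality of maps, which Proposition~\ref{prop:mapCmp} promotes to the equality $f \seq h = g \seq k$. Nothing to add.
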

\begin{proof}
  If the diagram commutes, then
      \[
        \stikzfig{CommDiag2}
      \]
     Conversely, if $\stikzfig{CommDiag}$, then
      \[
        \stikzfig{CommDiag3}
      \]
      and since both sides are maps, they are equal by Proposition \ref{prop:mapCmp}.
\end{proof}

That means, the category of maps can ``see'' inequalities like the one in Lemma~\ref{lem:CommMaps} as commutative squares.
We are interested in those cartesian bicategories where maps can furthermore identify equalities. How does $\Rel$ do it?

\begin{prop}
  Let \eqref{eq:square} be a commutative diagram in $\Set$. Then $\stikzfig{CommEq}$ in $\Rel$ if and only if \eqref{eq:square} is a weak pullback in $\Set$.
\end{prop}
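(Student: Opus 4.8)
The plan is to unfold $\stikzfig{CommEq}$ into an equality of relations, strip away the inclusion that commutativity already supplies, and then read the remaining inclusion element-wise. Concretely, $\stikzfig{CommEq}$ asserts the equality $\op{f}\seq g = h\seq\op{k}$ of relations $B\to C$ in $\Rel$. Since \eqref{eq:square} commutes, Lemma~\ref{lem:CommMaps} gives the inclusion $\op{f}\seq g \leq h\seq\op{k}$ for free, so the equality holds if and only if the reverse inclusion $h\seq\op{k}\leq\op{f}\seq g$ holds. The whole proposition thus reduces to showing that this reverse inclusion is equivalent to \eqref{eq:square} being a weak pullback.

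First I would compute the two relations explicitly: $h\seq\op{k}=\{(b,c)\mid h(b)=k(c)\}$ and $\op{f}\seq g=\{(b,c)\mid \exists a\in A,\ f(a)=b \text{ and } g(a)=c\}$. Hence $h\seq\op{k}\leq\op{f}\seq g$ says precisely that for every pair $(b,c)$ with $h(b)=k(c)$ there is some $a\in A$ with $f(a)=b$ and $g(a)=c$. Since a pair $(b,c)$ with $h(b)=k(c)$ is exactly a cone from the terminal object $1$ over the cospan $B\xrightarrow{h}D\xleftarrow{k}C$, this is the existence clause of the weak-pullback universal property, tested at $T=1$.

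It remains to promote this pointwise statement to the universal property for arbitrary test objects. The forward direction is immediate: if \eqref{eq:square} is a weak pullback, instantiating the mediating-arrow property at $T=1$ yields the pointwise condition. For the converse, given a cone $u\from T\to B$, $v\from T\to C$ with $u\seq h=v\seq k$, the pointwise condition provides for each $t\in T$ an element $a_t\in A$ with $f(a_t)=u(t)$ and $g(a_t)=v(t)$; collecting these into $w\from T\to A$, $t\mapsto a_t$, gives $w\seq f=u$ and $w\seq g=v$, so \eqref{eq:square} is a weak pullback.

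The single non-formal step is this last assembly of a coherent family $(a_t)_{t\in T}$ into an actual function $w$, which is an application of the axiom of choice in $\Set$; this is the main (and only) obstacle. The cleanest way to isolate it is to note that, writing $P=B\times_D C$ for the genuine pullback and $m=\langle f,g\rangle\from A\to P$ for the canonical comparison map, the pointwise condition says exactly that $m$ is surjective, whereas \eqref{eq:square} is a weak pullback precisely when $m$ is a split epi; in $\Set$ these two properties coincide by the axiom of choice, and that coincidence is the crux of the proof.
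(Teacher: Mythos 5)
Your proposal is correct and follows essentially the same route as the paper: both reduce the extra inclusion $h\seq\op{k}\leq\op{f}\seq g$ to surjectivity of the comparison map $\langle f,g\rangle\from A\to P$ into the genuine pullback, and both invoke the axiom of choice in $\Set$ to identify surjectivity with being split epi, which is what the weak-pullback property amounts to. Your closing paragraph is in fact exactly the paper's argument.
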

\begin{proof}
  As Lemma~\ref{lem:CommMaps} shows, $\stikzfig{CommDiag}$ comes from the commutativity of the diagram. Let's unpack the reverse inequality in $\Rel$:
  If $\stikzfig{WeakPB}$, this means that whenever $h(b) = k(c)$, there is an $a \in A$ such that $b = f(a)$ and $c = g(a)$.
  If we let $P = \{(b,c) \mid h(b) = k(c)\}$ with the evident projections $P \to B$ and $P \to C$,
  it is well known that
  \[
    \begin{tikzcd}[sep = tiny]
      & \arrow[swap]{dl} P \arrow{dr} & \\
      B \arrow[swap]{dr}{h} & & \arrow{dl}{k} C \\
      & D & \\
    \end{tikzcd}
  \]
  is a pullback diagram in $\Set$. The maps $f$ and $g$ induce a unique map $\pi \from A \to P$ given by $\pi(a) = (f(a), g(a))$.
  Therefore, by the above discussion, we have $\stikzfig{WeakPB}$ if and only if $\pi$ is surjective.

  It remains to see that $\pi$ is surjective if and only if $A$ is a weak pullback. We will check both implications.
  \begin{itemize}
    \item If $\pi$ is surjective, by the axiom of choice it is a split epi, so there is a map $h \from P \to A$ with $h \seq \pi = \id_{P}$.
        It is straightforward to see that this gives $A$ the universal property of a weak pullback.
    \item Conversely, if $A$ is a weak pullback, there is an induced morphism $h \from P \to A$ and the composite $h \seq \pi$ is compatible with
        the projections $P \to B$ and $P \to C$. Therefore, by the uniqueness clause of the universal property of $P$, we have $h \seq \pi = \id_{P}$,
        so $\pi$ is a split epi and therefore surjective.
  \end{itemize}
\end{proof}

We will now restrict our attention to those cartesian bicategories that have an interplay between relations and maps that is similar to that of $\Rel$.

\begin{defi}
Call a cartesian bicategory $\Bicat$ \emph{tame} if it satisfies the following two conditions:
  \begin{itemize}
    \item $\Bicat$ has enough maps.
    \item $\stikzfig{WeakPB}$ if and only if diagram \eqref{eq:square} is a weak pullback in $\Map(\Bicat)$.
  \end{itemize}
\end{defi}

\begin{rem}
  As proven in~\cite{carboni1987cartesian}, a functionally complete cartesian bicategory allows for pullbacks of maps that satisfy $\stikzfig{WeakPB}$.
  It therefore follows that every functionally complete cartesian bicategory is tame.
\end{rem}

\begin{lem}
  Let $\Bicat$ be a cartesian bicategory with enough maps and choice. Then $\Bicat$ is tame.
  \label{lem:ChoiceEntailsTame}
\end{lem}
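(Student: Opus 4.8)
The plan is to verify the two defining conditions of tameness in turn. The first---that $\Bicat$ has enough maps---is part of the hypothesis, so the whole task reduces to the second condition: for a commutative square \eqref{eq:square} of maps, the inequality $\stikzfig{WeakPB}$, i.e. $h \seq \op{k} \leq \op{f} \seq g$, holds if and only if \eqref{eq:square} is a weak pullback in $\Map(\Bicat)$. Note that by Lemma~\ref{lem:CommMaps} commutativity of the square already supplies the reverse inequality $\op{f} \seq g \leq h \seq \op{k}$, so the genuine content is the single inequality $\stikzfig{WeakPB}$. I would prove the two implications separately; choice will enter only one of them, through Lemma~\ref{lem:FillSquare}.

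For the direction ``inequality $\Rightarrow$ weak pullback'', I would unfold the weak universal property directly. Given any competing cone, i.e. maps $u \from T \to B$ and $v \from T \to C$ with $u \seq h = v \seq k$, Lemma~\ref{lem:CommMaps} turns this commuting diagram into the inequality $\op{u} \seq v \leq h \seq \op{k}$, and composing with the standing hypothesis gives $\op{u} \seq v \leq \op{f} \seq g$. This is exactly the witness hypothesis of Lemma~\ref{lem:FillSquare} for the bowtie whose upper legs are $u,v$ and whose lower legs are $f,g$; since $\Bicat$ has choice, that lemma produces a map $w \from T \to A$ with $w \seq f = u$ and $w \seq g = v$. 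As $w$ is the required (not-necessarily-unique) mediating map, \eqref{eq:square} satisfies the weak universal property.

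For the converse ``weak pullback $\Rightarrow$ inequality'', I would use enough maps to get a handle on $h \seq \op{k}$. By the comap--map factorisation of Lemma~\ref{RFact} there are maps $p \from Z \to B$ and $q \from Z \to C$ with $h \seq \op{k} = \op{p} \seq q$. The key preliminary step is to show that $(p,q)$ is a cone over the square, that is $p \seq h = q \seq k$; this follows from $\op{p} \seq q = h \seq \op{k}$ by a short two-sided calculation, using totality of $p$ and $q$ together with single-valuedness of $h$ and $k$ (Proposition~\ref{prop:svChar}) to squeeze $q \seq k \leq p \seq h$ and, symmetrically after taking opposites, $p \seq h \leq q \seq k$. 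Feeding this cone into the assumed weak pullback yields a map $w \from Z \to A$ with $w \seq f = p$ and $w \seq g = q$, whence $h \seq \op{k} = \op{p} \seq q = \op{f} \seq (\op{w} \seq w) \seq g \leq \op{f} \seq g$, using single-valuedness of $w$ once more. This is precisely $\stikzfig{WeakPB}$.

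The main obstacle is less a calculation than a matter of assembling the right prior results: recognising that the inequality-to-pullback direction is a verbatim instance of the choice-dependent Lemma~\ref{lem:FillSquare}, and that in the converse the factorisation $\op{p} \seq q$ becomes usable only once one has verified that $(p,q)$ genuinely commutes with $h$ and $k$. This last point is the one place requiring care, since it needs both the total and the single-valued adjunction inequalities rather than just one; everything else is bookkeeping with the (co)unit laws of maps. It is worth flagging that choice is used only in the first implication, while the second relies solely on enough maps, so this lemma is exactly the tame-side half of the equivalence suggested by the earlier comparison with $\Rel$.
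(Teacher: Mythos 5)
Your proposal is correct and follows essentially the same route as the paper: the forward direction combines Lemma~\ref{lem:CommMaps} with the choice-dependent Lemma~\ref{lem:FillSquare} to produce the mediating map, and the converse uses the comap--map factorisation of Lemma~\ref{RFact} together with the weak universal property, closing with single-valuedness of the induced map. The only cosmetic difference is that you re-derive by hand the fact that the factorising legs form a cone, where the paper simply invokes Lemma~\ref{lem:CommMaps} again.
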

\begin{proof}
Consider diagram \eqref{eq:square} in $\Map{(\mathcal{B})}$. We need to prove that $\stikzfig{WeakPB}$ if and only if \eqref{eq:square} is a weak pullback.
  \begin{itemize}
    \item Assume $\stikzfig{WeakPB}$, we want to show that \eqref{eq:square} is a weak pullback.
      Given a commutative diagram of solid arrows below,
      \[
        \begin{tikzcd}[sep=small]
        & T \arrow[dotted]{d}{\omega} \arrow[swap, bend right]{ddl}{b} \arrow[bend left]{ddr}{c} & \\
        & A \arrow[swap]{dl}{f} \arrow{dr}{g} & \\
        B \arrow[swap]{dr}{h} & & \arrow{dl}{k} C \\
        & D &
      \end{tikzcd}
    \]
    we need to construct the dotted arrow. By Lemma~\ref{lem:CommMaps}, , we get
    \[
      \stikzfig{PBProof}
    \]
    and therefore by Lemma~\ref{lem:FillSquare} we get $\omega \from T \to A$ as desired.
  \item Let now~\eqref{eq:square} be a weak pullback diagram. We want to prove that $\stikzfig{WeakPB}$.
    By Lemma~\ref{RFact}, take $\stikzfig{hkFact}$ to be a comap-map factorisation with $\beta \from T \to B$ and $\gamma \from T \to C$.
    By Lemma~\ref{lem:CommMaps}, the external square of the following diagram commutes.
      \[
        \begin{tikzcd}[sep=small]
        & T \arrow[dotted]{d}{\alpha} \arrow[swap, bend right]{ddl}{\beta} \arrow[bend left]{ddr}{\gamma} & \\
        & A \arrow[swap]{dl}{f} \arrow{dr}{g} & \\
        B \arrow[swap]{dr}{h} & & \arrow{dl}{k} C \\
        & D &
      \end{tikzcd}
    \]
   Since \eqref{eq:square} is a weak pullback,
 there is $\alpha \from T \to A$ making the above commute.
    With this we get
    \[
      \stikzfig{hkProof}
    \]
  \end{itemize}
\end{proof}

\begin{exa} In Examples \ref{ex:AC} and \ref{ex:enoughmap} we have seen that both $\Rel$ and $\mathbf{PERel}$ have enough maps and choice. Therefore, by Lemma~\ref{lem:ChoiceEntailsTame} both $\Rel$ and $\mathbf{PERel}$ are tame. We will see later in Example~\ref{ex:ERelSpan}, that also $\mathbf{ERel}$ is tame.
\end{exa}

\begin{prop}
  For a tame cartesian bicategory $\Bicat$, there is a morphism of cartesian bicategories $\MorphTame \from \Spantilde{\Map(\Bicat)} \to \Bicat$.
  This $\MorphTame$ is identity on objects and full.
  \label{prop:MorphTame}
\end{prop}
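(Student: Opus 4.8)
The plan is to construct the functor $\MorphTame$ explicitly and then verify it has the required properties. I would define $\MorphTame$ to be the identity on objects, and to send a span $X \xleftarrow{f} Z \xrightarrow{g} Y$ in $\Map(\Bicat)$ to the composite relation $\op{f} \seq g \from X \to Y$ in $\Bicat$, i.e. the relation $\stikzfig{fopg}$ built from the comap of $f$ followed by the map $g$. This is the obvious candidate since it mimics exactly the comap-map factorisations produced by Lemma~\ref{RFact}, and it agrees with how $\Spantilde{\Cat}$ was designed to represent inequalities via commuting diagrams of maps.

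The first thing to check is that $\MorphTame$ is well-defined on $\sim$-equivalence classes. Here I would invoke the tameness hypothesis crucially: Lemma~\ref{lem:CommMaps} tells us that a morphism $\alpha$ of spans (a commuting diagram in $\Map(\Bicat)$) gives exactly an inequality $\op{f} \seq g \leq \op{f'} \seq g'$ between the associated relations, so that two isomorphic—or more generally $\sim$-equivalent—spans are sent to equal relations. This same lemma simultaneously shows that $\MorphTame$ preserves the ordering, since $\leq$ in $\Spantilde{\Map(\Bicat)}$ is witnessed by such span morphisms. Next I would verify functoriality: identities are clearly preserved, and for composition I must show that the comap-map factorisation of a weak pullback composite matches $\op{f}\seq g$ applied to each factor. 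This is where the \emph{second} tameness condition does the work: composition in $\Spantilde{\Map(\Bicat)}$ is by weak pullback, and the tameness equivalence $\stikzfig{WeakPB}$-iff-weak-pullback guarantees that the relation computed from the weak pullback span equals the composite of the two relations in $\Bicat$. I would also check preservation of the monoidal product and of the chosen (co)monoids, which is routine since the comonoid in $\Spantilde{\Map(\Bicat)}$ is the span $X \leftarrow X \to X\times X$ whose image under $\MorphTame$ is exactly the comonoid of $\Bicat$.

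Fullness is the most substantial remaining point, and it is precisely the payoff of \emph{having enough maps}. Given any relation $R \from X \to Y$ in $\Bicat$, Lemma~\ref{RFact} supplies a comap-map factorisation $R = \op{f}\seq g$ with maps $f \from Z \to X$ and $g \from Z \to Y$. The span $X \xleftarrow{f} Z \xrightarrow{g} Y$ is then a morphism of $\Spantilde{\Map(\Bicat)}$ whose image under $\MorphTame$ is exactly $R$, so $\MorphTame$ is full. Identity-on-objects is immediate from the construction.

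I expect the main obstacle to be the functoriality-on-composition step. The subtlety is that the weak pullback used to compose spans in $\Spantilde{\Map(\Bicat)}$ is only determined up to $\sim$, so I must argue that \emph{any} choice of weak pullback yields the correct composite relation in $\Bicat$; this reduces, via the second tameness axiom, to showing that the relation $\op{p}\seq q$ built from the weak pullback legs reproduces $\op{f}\seq g \seq \op{h} \seq k$ after the middle objects are glued. Carefully tracking which tameness direction is invoked—the ``if'' direction to go from a genuine weak pullback to the equality of composites—is where the argument must be most precise, and I would organise that verification around the comap-map factorisation of the intermediate composite so that Lemma~\ref{RFact} and the weak-pullback characterisation interlock cleanly.
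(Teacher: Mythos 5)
Your proposal is correct and follows essentially the same route as the paper: define $\MorphTame$ on a span $X \xleftarrow{f} Z \xrightarrow{g} Y$ as $\op{f}\seq g$, use the weak-pullback clause of tameness to get functoriality on composition, derive order preservation from the fact that a morphism of spans induces an inequality of the associated relations, and obtain fullness from enough maps via the comap-map factorisation of Lemma~\ref{RFact}. One small slip worth fixing: the result you want for well-definedness and order preservation is Lemma~\ref{lem:Filler} rather than Lemma~\ref{lem:CommMaps}, and it holds in every cartesian bicategory, so tameness is not actually needed at that step — only for composition and fullness.
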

\begin{proof}
  Let $\MorphTame(X \xleftarrow{f} A \xrightarrow{g} Y) = \stikzfig{fopg}$. Since $\Bicat$ is tame, $\Map(\Bicat)$ has well-behaved weak pullbacks
  and therefore $\MorphTame$ preserves composition. It is furthermore monoidal because products in $\Map(\Bicat)$ are induced by the monoidal product
  in $\Bicat$ and it preserves the ordering by Lemma~\ref{lem:Filler}.

\end{proof}

The only thing that prevents $\MorphTame \from \Spantilde{\Map(\Bicat)} \to \Bicat$ from being an isomorphism is that it is in general not faithful.
The reason for that is that the ordering $\Bicat$ might consist of more than just the inequalities that are mediated by morphisms as in Lemma~\ref{lem:Filler}.
We therefore need to characterise the inequalities in $\Bicat$ that we are currently overlooking.

\begin{lem}
  Let $\Bicat$ be a tame cartesian bicategory. Then $\stikzfig{ImageLeq}$ if and only if there is a commutative diagram in $\Map(\Bicat)$
  \[
    \begin{tikzcd}
      & A \arrow[swap]{dl}{f} \\
      X & P \arrow[swap]{u}{\pi} \arrow{d}{\alpha} \arrow{l} \\
      & B \arrow{ul}{g}
    \end{tikzcd}
  \]
  with $\pi$ surjective.
  \label{lem:ImageSurj}
\end{lem}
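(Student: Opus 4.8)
The statement compares the \emph{images} of two maps $f\colon A\to X$ and $g\colon B\to X$ with common codomain $X$: the inequality $\stikzfig{ImageLeq}$ says that the morphism $X\to I$ obtained from $\op f$ by post-composing with discard is below the one obtained from $\op g$ in the same way, i.e. the image predicate carved out by $f$ lies below that carved out by $g$. The plan is to prove the two implications separately, keeping every instance of ``total'' and ``surjective'' in \emph{discard form}: by Proposition~\ref{prop:svChar} a map $\pi$ is surjective exactly when $\op\pi$ is total, and a relation $S\colon U\to V$ is total exactly when the discard on $U$ is below $S$ post-composed with the discard on $V$. In this form the image predicates of the statement line up directly with the discards produced by the maps $\pi$ and $\alpha$, so both directions reduce to short calculations.

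For the ($\Leftarrow$) direction I would assume the displayed commutative diagram with $\pi$ surjective. Surjectivity of $\pi$, in discard form, reads: the discard on $A$ is below $\op\pi$ post-composed with the discard on $P$. Post-composing $\op f$ on the left and rewriting $\op f\seq\op\pi=\op{(\pi\seq f)}=\op{(\alpha\seq g)}=\op g\seq\op\alpha$ — the middle equality being commutativity of the square — bounds the left-hand side by $\op g\seq\op\alpha$ post-composed with the discard on $P$; since every morphism into $I$ is below the discard, $\op\alpha$ followed by that discard is below the discard on $B$, and $\stikzfig{ImageLeq}$ follows. This direction uses only the ambient cartesian bicategory structure, not tameness.

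The ($\Rightarrow$) direction carries the content. I would put $R=f\seq\op g\colon A\to B$ and first check that $R$ is total: pre-composing the hypothesis $\stikzfig{ImageLeq}$ with $f$ and using $\id_A\leq f\seq\op f$ (totality of $f$) shows the discard on $A$ is below $R$ post-composed with the discard on $B$, which is totality of $R$. As $\Bicat$ has enough maps, Lemma~\ref{RFact} yields a comap--map factorisation $R=\op\pi\seq\alpha$ with maps $\pi\colon P\to A$ and $\alpha\colon P\to B$; these are the legs of the required diagram. The square $\pi\seq f=\alpha\seq g$ commutes by Lemma~\ref{lem:CommMaps}, since $\op\pi\seq\alpha=R=f\seq\op g$ makes the relevant inequality of that lemma hold. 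Finally $\pi$ is surjective by the same one-line computation used in the proof of Proposition~\ref{prop:ACiffEpisSplit}: because $\alpha$ preserves discard, the discard on $A$ is below $R$ post-composed with the discard on $B$, which equals $\op\pi$ post-composed with the discard on $P$ — exactly surjectivity of $\pi$ in discard form.

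The step I expect to be most delicate is bookkeeping rather than any single calculation: the genuine definitions of total and surjective (via $\op R$, Definition~\ref{defn:sv}) agree with their discard forms only through Proposition~\ref{prop:svChar}, and it is precisely this agreement that lets ``$R$ total'' be both read off from and fed back into the image predicates of $f$ and $g$ without ever reasoning about coreflexives directly. Once everything is phrased with discards, tameness enters only through ``enough maps'' (Lemma~\ref{RFact}); the weak-pullback clause of tameness is not needed for this lemma.
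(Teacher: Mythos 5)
Your proof is correct, but the forward direction is organised differently from the paper's. The paper obtains the witnessing span by forming a weak pullback $P$ of $f$ and $g$ in $\Map(\Bicat)$: the square then commutes by construction, the tameness biconditional converts ``$P$ is a weak pullback'' into the relational equation $\op{\pi}\seq\alpha = f\seq\op{g}$, and surjectivity of $\pi$ follows from essentially the computation you give (bound $\op{\pi}$ post-composed with discard below by $f\seq\op{g}$ post-composed with discard, then use the hypothesis and totality of $f$). You instead produce the span directly as a comap--map factorisation of $R=f\seq\op{g}$ via Lemma~\ref{RFact}, recover commutativity of the square from Lemma~\ref{lem:CommMaps}, and read off surjectivity of $\pi$ from totality of $R$. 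The two routes hinge on the same equation $\op{\pi}\seq\alpha=f\seq\op{g}$; the difference is where it comes from. What your version buys is self-containedness and a sharper hypothesis: the paper's ``let $P$ be a weak pullback'' tacitly relies on $\Map(\Bicat)$ having weak pullbacks, which for a tame $\Bicat$ is itself derived from enough maps plus the tameness clause, whereas your argument never invokes the weak-pullback clause at all, so the lemma (in your formulation of the forward direction) holds in any cartesian bicategory with enough maps. The paper's version buys consistency with how the surrounding results (Proposition~\ref{prop:MorphTame}, Corollary~\ref{cor:IneqSpan}) are phrased, where the weak pullback of $f$ and $g$ is the canonical object to manipulate. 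The backward direction is the same one-line computation in both.
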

\begin{proof}
  Assume we have $\alpha$ and $\pi$ such that $\pi \seq f = \alpha \seq g$ with $\pi$ surjective. Then
  \[
    \stikzfig{Image1}
  \]

  Conversely, if $\stikzfig{ImageLeq}$, let
  \[
    \begin{tikzcd}[sep=tiny]
      & \arrow[swap]{dl}{\alpha} P \arrow{dr}{\pi} & \\
      A \arrow[swap]{dr}{g} & & B \arrow{dl}{f} \\
      & X &
    \end{tikzcd}
  \]
  be a weak pullback. This makes the required diagram commute and $\pi$ is surjective, because
  \[
    \stikzfig{Image2}
  \]
\end{proof}

\begin{cor}
  In a tame cartesian bicategory we have
  \[
    \stikzfig{Spanleq}
  \]
  if and only if there is a commutative diagram of maps
  \[
  \begin{tikzcd}[sep = small]
    & \arrow[swap]{dl}{f} A \arrow{dr}{g} & \\
    X & A' \arrow[swap]{u}{\pi} \arrow{d}{\alpha} &  Y\\
    & \arrow{ul}{h} B \arrow[swap]{ur}{k} & \\
  \end{tikzcd}
  \]
  with $\pi$ surjective.
  \label{cor:IneqSpan}
\end{cor}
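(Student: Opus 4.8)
The plan is to derive the corollary from Lemma~\ref{lem:ImageSurj} by applying it to the \emph{paired} maps built from the two legs of each span. By Lemma~\ref{lem:MapCart}, $\Map(\Bicat)$ is cartesian, so the legs $f\from A\to X$, $g\from A\to Y$ assemble into a single map $\langle f,g\rangle\from A\to X\otimes Y$, and likewise $\langle h,k\rangle\from B\to X\otimes Y$; writing $\pi_1,\pi_2$ for the projections, $\langle f,g\rangle\seq\pi_1=f$, $\langle f,g\rangle\seq\pi_2=g$, and analogously for $h,k$. The morphisms compared in the displayed inequality are $\op f\seq g$ and $\op h\seq k$ (cf.\ Proposition~\ref{prop:MorphTame}), so the corollary reads: $\op f\seq g\leq\op h\seq k$ iff the surjective-$\pi$ diagram exists.

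The whole statement reduces to the following translation between the span inequality and an inclusion of \emph{image coreflexives} on $X\otimes Y$:
\begin{equation}
  \op f\seq g \;\leq\; \op h\seq k \qquad\Longleftrightarrow\qquad \op{\langle f,g\rangle}\seq\langle f,g\rangle \;\leq\; \op{\langle h,k\rangle}\seq\langle h,k\rangle. \tag{$\star$}
\end{equation}
Granting $(\star)$, both directions of the corollary are immediate. The right-hand inequality is precisely the hypothesis of Lemma~\ref{lem:ImageSurj} applied to the maps $\langle f,g\rangle$ and $\langle h,k\rangle$, which produces a surjective $\pi\from A'\to A$ and a map $\alpha\from A'\to B$ with $\pi\seq\langle f,g\rangle=\alpha\seq\langle h,k\rangle$; since projections separate maps into a product, this single equation is equivalent to the pair $\pi\seq f=\alpha\seq h$ and $\pi\seq g=\alpha\seq k$, i.e.\ exactly the required commuting diagram. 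Conversely, given such a diagram the same pairing identity yields $\pi\seq\langle f,g\rangle=\alpha\seq\langle h,k\rangle$, and the forward direction of Lemma~\ref{lem:ImageSurj} returns the image inclusion, whence $(\star)$ gives $\op f\seq g\leq\op h\seq k$.

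It remains to establish $(\star)$, which is where the real content lies. Direction $(\Leftarrow)$ is one line: from $\langle f,g\rangle\seq\pi_1=f$ and $\langle f,g\rangle\seq\pi_2=g$ one gets $\op f\seq g=\op{\pi_1}\seq\bigl(\op{\langle f,g\rangle}\seq\langle f,g\rangle\bigr)\seq\pi_2$, and similarly for $h,k$, so monotonicity of composition transports the image inclusion to the span inequality. Direction $(\Rightarrow)$ is harder, since reconstructing a coreflexive on $X\otimes Y$ from its projection $\op{\pi_1}\seq(-)\seq\pi_2$ is not possible for arbitrary endomorphisms. I would resolve this by exhibiting the image coreflexives as \emph{graphs of relations}: for $S\from X\to Y$ set $G_S=\id_{X\otimes Y}\wedge(\chi_S\seq\top)$, where $\top\from I\to X\otimes Y$ is the top element and $\chi_S\from X\otimes Y\to I$ is the predicate cutting out $S$, obtained by turning the $Y$-leg of $S$ around (taking $\op S$) and contracting the two $X$-wires with a cap. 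Two facts then finish the argument: (i) $S\mapsto G_S$ is monotone, as it is built only from $\op{(-)}$, $\otimes$, $\wedge$, and fixed (co)units, all of which preserve the order; and (ii) $\op{\langle f,g\rangle}\seq\langle f,g\rangle=G_{\op f\seq g}$ (and likewise for $h,k$). Combining these, $\op f\seq g\leq\op h\seq k$ forces $\op{\langle f,g\rangle}\seq\langle f,g\rangle=G_{\op f\seq g}\leq G_{\op h\seq k}=\op{\langle h,k\rangle}\seq\langle h,k\rangle$, giving $(\Rightarrow)$.

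The main obstacle is identity (ii), the graph description of the image of a paired map; everything else is bookkeeping with the cartesian structure of $\Map(\Bicat)$ and monotonicity. I expect (ii) to unfold routinely once $\langle f,g\rangle$ is written as $\mathrm{copy}_A\seq(f\otimes g)$ and the maps $f,g$ are slid across copies (totality) and merged (single-valuedness), the only delicate point being to keep the two orientations of each Frobenius move straight. An alternative that avoids graphs is to construct $A'$ directly as a weak pullback of $\langle f,g\rangle$ and $\langle h,k\rangle$ in $\Map(\Bicat)$ (which exists by tameness), read off the commuting diagram, and verify surjectivity of the projection $\pi$ from $\op f\seq g\leq\op h\seq k$; but certifying that surjectivity still requires $(\star)$, so this merely repackages the same core step.
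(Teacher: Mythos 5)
Your reduction is the intended one: the paper offers no separate argument for this corollary, treating it as the instance of Lemma~\ref{lem:ImageSurj} obtained by replacing the two spans with the paired maps $\langle f,g\rangle \from A \to X\otimes Y$ and $\langle h,k\rangle \from B \to X\otimes Y$, exactly as you do, and your equivalence $(\star)$ is precisely the translation the paper leaves implicit. The only remark worth making is that your detour through graph-coreflexives --- identity (ii), which you flag as the main obstacle --- is avoidable: the compact closed structure coming from the Frobenius law lets you bend the $X$-wire, so that $\op{f}\seq g$ and the image of $\langle f,g\rangle$ determine each other via the sliding identity for $\op{(-)}$ and the snake equations, an order-embedding in both directions; this yields $(\star)$ in one line and makes the corollary an immediate consequence of the lemma.
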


Therefore, a tame cartesian bicategory is uniquely determined by its category of maps and the knowledge of which maps are surjective.
The class of surjective maps has a number of interesting properties:

\begin{prop}
  Let $\Bicat$ be a tame cartesian bicategory. Let $\mc{S}$ be the class of surjective maps in $\Bicat$.
  \begin{itemize}
    \item $\mc{S}$ contains identities.
    \item $\mc{S}$ is closed under composition.
    \item $\mc{S}$ is closed under products.
    \item $\mc{S}$ is closed under weak pullback.
    \item If $f \seq \pi \in \mc{S}$, then $\pi \in \mc{S}$.
  \end{itemize}
\end{prop}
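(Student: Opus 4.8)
The plan is to reduce every item to the identity $\op\pi\seq\pi=\id$ characterising surjective maps, together with one cancellation lemma for totality. First I would record that, for a map $\pi\from X\to Y$, surjectivity is equivalent to $\op\pi$ being total (Proposition~\ref{prop:svChar}); since the counit of the adjunction $\pi\dashv\op\pi$ gives $\op\pi\seq\pi\le\id_Y$, this is in turn equivalent to the equation $\op\pi\seq\pi=\id_Y$. I would use freely that $\op{(-)}$ is the contravariant monoidal involution supplied by the compact closed structure, so that $\op{(R\seq S)}=\op S\seq\op R$ and $\op{(R\otimes S)}=\op R\otimes\op S$.

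The first three closure properties are then formal. Identities are surjective because $\op{\id}\seq\id=\id$. If $\pi,\rho$ are surjective maps then $\op{(\pi\seq\rho)}\seq(\pi\seq\rho)=\op\rho\seq(\op\pi\seq\pi)\seq\rho=\op\rho\seq\rho=\id$, so $\mc S$ is closed under composition (alternatively, $\mc S$ is the class of epis by Lemma~\ref{lem:EquivSurjective}, and epis compose). For products, $\pi\otimes\rho$ is a map by Lemma~\ref{lem:MapCart}, and the interchange law gives $\op{(\pi\otimes\rho)}\seq(\pi\otimes\rho)=(\op\pi\seq\pi)\otimes(\op\rho\seq\rho)=\id\otimes\id=\id$.

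The last two items rest on the cancellation lemma: \emph{if $R\seq S$ is total, then $R$ is total}. I would prove this from the discard characterisation of totality (Definition~\ref{defn:sv}, cf. \eqref{eq:exlax2}): writing $!$ for the discard maps, totality of $R\seq S$ reads $!\le R\seq S\seq!$, while every morphism is a lax comonoid homomorphism, so $S\seq!\le!$; monotonicity then gives $!\le R\seq S\seq!\le R\seq!$, i.e. $R$ is total. The cancellation property for $\mc S$ follows at once: if $f\seq\pi\in\mc S$, then $\op\pi\seq\op f=\op{(f\seq\pi)}$ is total, hence $\op\pi$ is total by the lemma, i.e. $\pi$ is surjective.

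The hard part is closure under weak pullback, where tameness and the discard calculus must be combined. Given a cospan $B\xrightarrow{h}D\xleftarrow{k}C$ with $k\in\mc S$ and a weak pullback $(P,p,q)$ of it in $\Map(\Bicat)$, tameness (in the direction weak pullback $\Rightarrow$ the defining inequality) yields $h\seq\op k\le\op p\seq q$. Surjectivity of $k$ gives $\op k\seq!_C=!_D$ and, $h$ being a map, $h\seq!_D=!_B$; chaining these with the tameness inequality and $q\seq!_C\le!_P$ produces
\[
  !_B \;=\; h\seq!_D \;=\; h\seq\op k\seq!_C \;\le\; \op p\seq q\seq!_C \;\le\; \op p\seq!_P,
\]
so $\op p$ is total and $p$ is surjective. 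I expect the delicate points to be orienting the tameness inequality correctly (so that the projection $p$ of the pullback, rather than $q$, is the leg one concludes about) and exploiting that totality is detected solely by post-composition with $!$, which is precisely what allows the auxiliary map $q$ to be absorbed into the final inequality.
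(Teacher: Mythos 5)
Your proof is correct and follows essentially the same route as the paper: the first three items are the same formal verifications, and the two non-trivial items are established by the same kind of string-diagrammatic computation, using the tameness inequality $h\seq\op{k}\le\op{p}\seq q$ for the weak-pullback case and the characterisation of surjectivity via totality of the opposite (equivalently $\op{\pi}\seq\pi=\id$) throughout. Your detour through the small cancellation lemma ``$R\seq S$ total implies $R$ total'' for the last item is a cosmetic repackaging of the paper's direct computation, not a different argument.
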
\label{prop:surjectivetame}
\begin{proof}
  It is very straightforward to see that identities are surjective and that $\mc{S}$ is closed under composition and products.
  We will therefore prove the other two properties:
  \begin{itemize}
    \item Let
      \[
        \begin{tikzcd}[sep = tiny]
          & \arrow[swap]{dl}{f} A \arrow{dr}{\tau} & \\
          B \arrow[swap]{dr}{\pi} & & \arrow{dl}{k} C \\
          & D & \\
        \end{tikzcd}
      \]
      be a weak pullback diagram in $\Map(\Bicat)$ with $\pi$ surjective. We want to show that also $\tau$ is surjective.
      Since $\Bicat$ is tame, we have $\stikzfig{WeakPBpi}$.
      Therefore
      \[
        \stikzfig{tauSurj}
      \]
      which means that $\tau$ is surjective.
    \item Let now $f \seq \pi$ be surjective. We want to show that already $\pi$ is surjective. We have
      \[
        \stikzfig{piSurj}
      \]
      and therefore already $\pi$ is surjective.
  \end{itemize}
\end{proof}

\section{The \texorpdfstring{$\mathsf{Span}^{\mathcal{S}}$}{SpanS} construction}\label{sec:spanS}
In this section, we generalise the $\mathsf{Span}^{\sim}$ construction by taking inspiration from the characterisation of the ordering in a tame cartesian bicategory provided by Corollary~\ref{cor:IneqSpan}. We commence by generalising the properties of the class of surjective maps in a tame cartesian bicategory provided by
Proposition \ref{prop:surjectivetame}.

\begin{defi}
  Let $\Cat$ be a category with finite products and weak pullbacks.
  A class of morphisms $\mc{S}$ in $\Cat$ is called a system of covers if
  \begin{itemize}
    \item $\mc{S}$ contains identities.
    \item $\mc{S}$ is closed under composition.
    \item $\mc{S}$ is closed under products.
    \item $\mc{S}$ is closed under weak pullback.
    \item If $f \seq \pi \in \mc{S}$, then $\pi \in \mc{S}$.
  \end{itemize}
  A pair $(\Cat, \mc{S})$ where $\Cat$ has finite products and weak pullbacks and $\mc{S}$ is a system of covers is called a
  category with covers.
  \label{def:Cover}
\end{defi}

\begin{prop}
Let $\Bicat$ be a tame cartesian bicategory and $\mc{S}$ be the class of its surjective maps. Then $(\Map(\Bicat),\mc{S})$ is a category with covers.
\end{prop}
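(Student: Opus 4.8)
The plan is to verify the two clauses of Definition~\ref{def:Cover} for the pair $(\Map(\Bicat),\mc{S})$: first that $\Map(\Bicat)$ has finite products and weak pullbacks, and second that $\mc{S}$ is a system of covers. The second clause requires no new work, since the five conditions defining a system of covers are verbatim the five properties of the class of surjective maps established in Proposition~\ref{prop:surjectivetame}. Likewise, the existence of finite products is immediate: Lemma~\ref{lem:MapCart} shows that $(\Map(\Bicat),\otimes,I)$ is a cartesian category, so by Proposition~\ref{prop:CartCopy} the monoidal product $\otimes$ is the categorical product and $I$ is terminal. Thus the only substantive point, and the expected obstacle, is the construction of weak pullbacks in $\Map(\Bicat)$.

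For this I would fix an arbitrary cospan of maps $B \xrightarrow{h} D \xleftarrow{k} C$ and consider the relation $R = h \seq \op{k} \from B \to C$ in $\Bicat$, which is the relational incarnation of the pullback of $h$ and $k$. Because $\Bicat$ has enough maps, Lemma~\ref{RFact} furnishes a comap-map factorisation $R = \op{\beta} \seq \gamma$ with maps $\beta \from P \to B$ and $\gamma \from P \to C$. I then claim that the square with apex $P$, legs $\beta$ and $\gamma$, sitting over $h$ and $k$, is a weak pullback in $\Map(\Bicat)$. Indeed, by construction $\op{\beta} \seq \gamma = h \seq \op{k}$, so in particular $\op{\beta}\seq\gamma \leq h \seq \op{k}$ and the square commutes by Lemma~\ref{lem:CommMaps}; and this same equality is precisely the relational condition that, by tameness, is equivalent to the square being a weak pullback. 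Hence every cospan of maps admits a weak pullback, and $\Map(\Bicat)$ has weak pullbacks.

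Combining the three points, $\Map(\Bicat)$ is a category with finite products and weak pullbacks and $\mc{S}$ is a system of covers, so $(\Map(\Bicat),\mc{S})$ is a category with covers. The whole argument is essentially a bookkeeping assembly of earlier results; the one place where genuine care is needed is the weak-pullback construction, where the interplay between the comap-map factorisation (afforded by having enough maps) and the tameness biconditional does the real work of turning a relational equation into the universal property of a weak pullback.
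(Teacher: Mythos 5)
Your proof is correct and matches the argument the paper intends: the paper states this proposition without proof, treating it as the assembly of Lemma~\ref{lem:MapCart} (finite products) and Proposition~\ref{prop:surjectivetame} (the five closure properties), exactly as you do. Your one substantive addition --- the explicit construction of weak pullbacks in $\Map(\Bicat)$ by taking a comap-map factorisation of $h \seq \op{k}$ via Lemma~\ref{RFact} and invoking the tameness biconditional to certify the resulting square as a weak pullback --- is a detail the paper leaves implicit (it is silently relied upon in the proofs of Proposition~\ref{prop:surjectivetame} and Proposition~\ref{prop:MorphTame}), and you handle it correctly, including the observation via Lemma~\ref{lem:CommMaps} that the square commutes.
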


Conversely, given a category with covers $(\Cat, \mc{S})$, we can define a tame cartesian bicategory from it,
in a way that is inspired from Corollary~\ref{cor:IneqSpan}.

\begin{defi}
Let $(\Cat, \mc{S})$ be a category with covers.
The posetal category $\SpanS{\Cat}$ is defined in analogy to $\Spantilde{\Cat}$, with a different ordering, defined as
$(X \leftarrow A \to Y) \leq (X \leftarrow B \to Y)$ if there is a commutative diagram
  \[
  \begin{tikzcd}[sep = small]
    & \arrow[swap]{dl} A \arrow{dr} & \\
    X & A' \arrow[swap]{u}{\pi} \arrow{d}{\alpha} &  Y \\
    & \arrow{ul} B \arrow[swap]{ur} & \\
  \end{tikzcd}
  \]
with $\pi$ a cover. We define $\sim$ as the equivalence relation $s_1 \sim s_2$ if and only if $s_1 \leq s_2 \leq s_1$.
Now $\SpanS{\Cat}$ has the same objects as $\Cat$ and as morphisms $\sim$-equivalence classes of spans.
These equivalence classes are ordered via $\leq$.
Composition is given by weak pullbacks in $\Cat$. Identities, monoidal product and unit are as in $\Spantilde{\Cat}$.
\label{def:SpanS}
\end{defi}

\begin{lem}
  If $(\Cat, \mc{S})$ is a category with covers, then $\SpanS{\Cat}$ is a tame cartesian bicategory.
  \label{lem:SpanSCart}
\end{lem}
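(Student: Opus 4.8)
The plan is to observe first that $\SpanS{\Cat}$ and $\Spantilde{\Cat}$ carry exactly the same underlying data: the same objects, the same spans as morphisms, composition by weak pullback, and the same monoidal product, unit and (co)monoid structure (Definition~\ref{def:SpanS}). The only difference is the ordering, and since $\mc{S}$ contains identities, every span morphism witnessing $s_1 \leq s_2$ in $\Spantilde{\Cat}$ also witnesses $s_1 \leq s_2$ in $\SpanS{\Cat}$ (take $\pi = \id$). Hence the ordering of $\SpanS{\Cat}$ \emph{contains} that of $\Spantilde{\Cat}$, and I would exploit this to obtain the cartesian bicategory axioms cheaply, once the poset-enriched symmetric monoidal structure is in place.

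The first step is to check that $\leq$ is a preorder compatible with $\seq$ and $\otimes$, which is precisely where the system-of-covers axioms (Definition~\ref{def:Cover}) enter. Reflexivity is immediate since $\id \in \mc{S}$. For transitivity, given witnesses $(\pi_1,\alpha_1)$ for $s_1 \leq s_2$ and $(\pi_2,\alpha_2)$ for $s_2 \leq s_3$, I would form a weak pullback of $\alpha_1$ and $\pi_2$; its projection over $\pi_2$ is a cover because covers are closed under weak pullback, and composing it with $\pi_1$ gives a cover (closure under composition) witnessing $s_1 \leq s_3$, the legs matching by a short chase through the weak-pullback square. Compatibility with composition and monoidal product uses closure under weak pullback and under products in the same manner, and antisymmetry is then forced by passing to $\sim$-equivalence classes. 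With the poset-enriched symmetric monoidal structure established, the axioms of Definition~\ref{def:cartBicat} transfer directly from $\Spantilde{\Cat}$: each axiom is a \emph{positive} statement (an inequality, or an equality $s_1 = s_2$ read as $s_1 \leq s_2 \leq s_1$), it already holds in $\Spantilde{\Cat}$, and enlarging the order preserves such statements, so no new diagrammatic computation is needed.

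The substantial part is tameness. I would first pin down $\Map(\SpanS{\Cat})$ in analogy with Proposition~\ref{prop:MapSpan}: the left leg of any map is forced to lie in $\mc{S}$ (the totality inequality produces a cover factoring through that leg, and the cancellation axiom $f \seq \pi \in \mc{S} \Rightarrow \pi \in \mc{S}$ then makes the leg itself a cover), while two $\Cat$-morphisms represent the same map exactly when they agree after precomposition with a common cover. With $\Map(\SpanS{\Cat})$ understood, ``enough maps'' follows as for $\Spantilde{\Cat}$: a morphism $X \to I$ is a span whose right leg into the terminal object $1$ is the unique discard, so it is a comap postcomposed with discard.

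The delicate point, and the step I expect to fight hardest with, is the weak-pullback characterisation. For a commuting square~\eqref{eq:square}, one direction is immediate: if it is a weak pullback one may take its apex as the weak pullback computing the composite, so the two spans in \stikzfig{WeakPB} literally coincide. The converse is the crux. There the inequality only supplies a section of the cone \emph{defined over a cover} $\pi \colon P' \to P$ of the weak pullback $P$ of the cospan, and to establish the genuine weak universal property of the apex I would lift an arbitrary competing cone through $\pi$, which is exactly what closure of covers under weak pullback enables. The real subtlety is that covers need not split, so this lifting descends only after the cover-local identification built into $\Map(\SpanS{\Cat})$; getting that bookkeeping exactly right is what makes the tameness verification nontrivial and is where the proof will stand or fall.
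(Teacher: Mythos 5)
Your first half --- showing that $\SpanS{\Cat}$ is a cartesian bicategory --- is exactly the paper's argument: since every $\Spantilde{}$-witness is a $\SpanS{}$-witness with $\pi=\id$, the new order contains the old one, so it suffices to check that $\leq$ is reflexive (identities are covers), transitive (closure of covers under weak pullback and composition), and compatible with $\otimes$ (closure under products) and with $\seq$ (the weak-pullback construction you sketch is precisely the paper's diagram chase through three weak pullbacks $P$, $Q$, $P'$); the axioms of Definition~\ref{def:cartBicat} then transfer from $\Spantilde{\Cat}$ because they are inequalities that survive enlarging the order. No complaints there.

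The genuine gap is that you do not actually prove tameness: you end by saying the converse direction of the weak-pullback characterisation ``is where the proof will stand or fall,'' which is an acknowledgement that the step is open, not a proof of it. A completed argument has to commit to something here. The paper's own resolution is short: enough maps is checked exactly as for $\Spantilde{\Cat}$ (the right leg of any span into $I$ is forced to be the discard), and for the second condition one observes that $h \seq \op{k}$ is \emph{by definition} represented by the span whose apex is a weak pullback of $h$ and $k$ in $\Cat$, so the equation $\op{f}\seq g = h\seq \op{k}$ unwinds directly against Definition~\ref{def:SpanS}. Your instinct that this hides an identification of $\Map(\SpanS{\Cat})$ with $\Cat$ --- the analogue of Proposition~\ref{prop:MapSpan}, which the paper proves only for $\Spantilde{}$ --- is not unreasonable, and your observation that the cancellation axiom forces the left leg of a total span to be a cover is a correct ingredient for such an analysis. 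But raising the difficulty is not the same as resolving it: as submitted, the proposal establishes only that $\SpanS{\Cat}$ is a cartesian bicategory, and the ``tame'' half of the lemma is missing.
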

\begin{proof}
  Since the ordering in $\SpanS{\Cat}$ is finer than that in $\Spantilde{\Cat}$, it suffices to prove that the former is indeed an
  ordering and compatible with composition and monoidal product of spans. The axioms of cartesian bicategories then follow from the fact that
  $\Spantilde{\Cat}$ is a cartesian bicategory.
  \begin{itemize}
    \item Let us first check that the defined ordering is indeed reflexive and transitive.
      It is reflexive because $\mc{S}$ contains identities, it is transitive because $\mc{S}$ is closed under weak pullback.
    \item The ordering is compatible with the monoidal product of spans because the product of covers is a cover. Let us check
      that it is also compatible with composition.
      Given morphisms $A,B \from X \to Y$, $C \from Y \to Z$ such that $A \leq B$, we will prove $A \seq C \leq B \seq C$.
      The proof that given $D$ with $C \leq D$, also $B \seq C \leq B \seq D$ is similar.

      So assume $X \leftarrow A \rightarrow Y \leq X \leftarrow B \rightarrow Y$ and $Y \leftarrow C \rightarrow Z$ are given, we want to show
      that also $A \seq C \leq B \seq C$.
      By assumption, we have a commutative diagram
      \[
        \begin{tikzcd}[sep = small]
          & A \arrow{dl} \arrow{dr} & \\
          X & \arrow{l} \arrow[swap]{u}{\pi} A' \arrow{d} \arrow{r} & Y \\
          & B \arrow{ul} \arrow{ur} &
        \end{tikzcd}
      \]
      with $\pi \in \mc{S}$.
      Now form the respective composites with $Y \leftarrow C \rightarrow Z$, so let
      \[
        \begin{tikzcd}[sep = small]
          & & \arrow{dl} P \arrow{dr} & & \\
          & A \arrow{dl} \arrow{dr} & & C \arrow{dl} \arrow{dr} & \\
          X & & Y & & Z
        \end{tikzcd}
      \]
      and
      \[
        \begin{tikzcd}[sep = small]
          & & \arrow{dl} Q \arrow{dr} & & \\
          & B \arrow{dl} \arrow{dr} & & C \arrow{dl} \arrow{dr} & \\
          X & & Y & & Z
        \end{tikzcd}
      \]
      with weak pullbacks $P$ and $Q$.
      We need to construct a diagram
      \[
        \begin{tikzcd}[sep = small]
          & P \arrow{dl} \arrow{dr} & \\
          X & \arrow{l} \arrow[swap]{u}{\pi'} P' \arrow{d}{\alpha} \arrow{r} & Y \\
          & Q \arrow{ul} \arrow{ur} &
        \end{tikzcd}
      \]
      Let $P'$ be a weak pullback
      \[
        \begin{tikzcd}[sep = small]
          P' \arrow{r}{\pi'} \arrow{d} & P \arrow{d} \\
          A' \arrow{r}{\pi} & A
        \end{tikzcd}
      \]
      Then $\pi' \in \mc{S}$ since covers are closed under weak pullback. This fits into a larger commutative diagram
      \[
        \begin{tikzcd}[sep = small]
          P' \arrow{r}{\pi'} \arrow{d} & P \arrow{d} \arrow{r} & C \arrow{dd} \\
          A' \arrow{d} \arrow{r}{\pi} & A \arrow{dr} & \\
          B \arrow{rr} & & Y
        \end{tikzcd}
      \]
      so there exists an induced morphism $\alpha \from P' \to Q$ by the weak universal property of $Q$.
      It is straightforward to check that the diagram
      \[
        \begin{tikzcd}[sep = small]
          & P \arrow{dl} \arrow{dr} & \\
          X & \arrow{l} \arrow[swap]{u}{\pi'} P' \arrow{d}{\alpha} \arrow{r} & Y \\
          & Q \arrow{ul} \arrow{ur} &
        \end{tikzcd}
      \]
      commutes.
    \item $\SpanS{\Cat}$ is furthermore tame: It is immediate to see that it has enough maps, and by definition of composition in $\SpanS{\Cat}$,
    $\stikzfig{WeakPB}$ if and only if diagram \eqref{eq:square} is a weak pullback in $\Cat$. Therefore, $\SpanS{\Cat}$ is tame.
  \end{itemize}
\end{proof}

\begin{thm}
  Let $\Bicat$ be a cartesian bicategory with enough maps and let $\mc{S}$ be the class of surjective maps.
  Then $\Bicat$ is tame if and only if
  \[
    \SpanS{\Map(\Bicat)} \cong \Bicat
  \]
  \label{thm:ReconstructTame}
\end{thm}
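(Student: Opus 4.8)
The plan is to prove the two implications separately, reusing the comparison morphism and the order-characterisation already developed in this section. The heart of the argument is that the ordering of $\SpanS{\Map(\Bicat)}$ has been engineered to match exactly the ordering of $\Bicat$ as described by Corollary~\ref{cor:IneqSpan}, so that a single identity-on-objects morphism can be upgraded to an isomorphism.

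I would first dispatch the direction in which $\SpanS{\Map(\Bicat)} \cong \Bicat$ is assumed and tameness of $\Bicat$ is to be deduced. For the statement to be meaningful at all, $(\Map(\Bicat),\mc{S})$ must be a category with covers, so that $\SpanS{\Map(\Bicat)}$ is defined. By Lemma~\ref{lem:SpanSCart} the category $\SpanS{\Map(\Bicat)}$ is then itself tame. Since tameness is phrased entirely in terms of maps, composition, the monoidal structure and the local ordering, and all of these are preserved and reflected by an isomorphism of cartesian bicategories, tameness transports along $\SpanS{\Map(\Bicat)} \cong \Bicat$ to $\Bicat$. This direction is essentially immediate.

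For the converse I would assume $\Bicat$ tame and build the isomorphism explicitly. By the proposition preceding Definition~\ref{def:SpanS}, together with the closure properties of surjective maps recorded in Proposition~\ref{prop:surjectivetame}, the pair $(\Map(\Bicat),\mc{S})$ is a category with covers, so $\SpanS{\Map(\Bicat)}$ is defined. As candidate I take $\MorphTame \from \SpanS{\Map(\Bicat)} \to \Bicat$ given on representatives by $\MorphTame(X \xleftarrow{f} A \xrightarrow{g} Y) = \op{f}\seq g$, the same assignment shown in Proposition~\ref{prop:MorphTame} to be monoidal, identity-on-objects, order-preserving and full on $\Spantilde{\Map(\Bicat)}$. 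Since identities are covers, the ordering of $\SpanS{\Map(\Bicat)}$ contains that of $\Spantilde{\Map(\Bicat)}$, so $\SpanS{\Map(\Bicat)}$ is a quotient of $\Spantilde{\Map(\Bicat)}$, and I would check that $\MorphTame$ descends along this quotient: the forward direction of Corollary~\ref{cor:IneqSpan} shows that $\MorphTame$ sends $\sim$-equivalent spans to equal morphisms, while composition, monoidal product and the chosen (co)monoids are computed on the same span data as before, so $\MorphTame$ remains a morphism of cartesian bicategories. Fullness is then inherited from Lemma~\ref{RFact}: every $R \from X \to Y$ admits a comap-map factorisation $R = \op{f}\seq g$, so that $R = \MorphTame(X \xleftarrow{f} Z \xrightarrow{g} Y)$.

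The crux, and the step I expect to be the main obstacle, is faithfulness, although Corollary~\ref{cor:IneqSpan} carries most of its weight. By Lemma~\ref{lem:FaithfulReflectOrder} it suffices to show $\MorphTame$ reflects the ordering. Given spans $s_1 = (X \xleftarrow{f} A \xrightarrow{g} Y)$ and $s_2 = (X \xleftarrow{h} B \xrightarrow{k} Y)$ with $\op{f}\seq g \leq \op{h}\seq k$ in $\Bicat$, the backward direction of Corollary~\ref{cor:IneqSpan} provides a commuting diagram of maps with a surjective $\pi$ on top, which is exactly the defining condition for $s_1 \leq s_2$ in $\SpanS{\Map(\Bicat)}$. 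Hence $\MorphTame$ reflects, and by the forward direction also preserves, the order, so it is faithful. A morphism of cartesian bicategories that is identity-on-objects, full and faithful is an isomorphism, yielding $\SpanS{\Map(\Bicat)} \cong \Bicat$. The only genuine care needed is to line up the $\SpanS{}$-ordering with the order-reflection condition and to confirm that $\MorphTame$ really descends to the coarser $\SpanS{}$; the geometric content behind both points has already been discharged in Corollary~\ref{cor:IneqSpan}, which in turn rests on Lemma~\ref{lem:ImageSurj} and the weak-pullback behaviour of tame cartesian bicategories.
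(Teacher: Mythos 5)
Your proposal is correct and follows essentially the same route as the paper: one direction via Lemma~\ref{lem:SpanSCart} and transport of tameness along the isomorphism, the other by descending the comparison morphism of Proposition~\ref{prop:MorphTame} to $\SpanS{\Map(\Bicat)}$ and using Corollary~\ref{cor:IneqSpan} together with Lemma~\ref{lem:FaithfulReflectOrder} to get order-reflection and hence faithfulness. Your additional care about why $(\Map(\Bicat),\mc{S})$ is a category with covers and why $\MorphTame$ descends to the quotient only makes explicit steps the paper leaves implicit.
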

\begin{proof}
  \begin{itemize}
    \item If $\Bicat \cong \SpanS{\Map(\Bicat)}$, then $\Bicat$ is tame by Lemma~\ref{lem:SpanSCart}.
    \item In the other direction, Proposition~\ref{prop:MorphTame} constructs a morphism of cartesian bicategories
    $\MorphTame \from \Spantilde{\Map(\Bicat)} \to \Bicat$ which is full
    and order-preserving.
    The ordering on $\SpanS{\Map(\Bicat)}$ is a refinement of that of $\Spantilde{\Map(\Bicat)}$ and by Corollary~\ref{cor:IneqSpan}, $\MorphTame$ preserves
    this refined ordering as well, giving us an induced morphism $\MorphTame' \from \SpanS{\Map(\Bicat)} \to \Bicat$.
    This morphism is order-reflecting, again by Corollary~\ref{cor:IneqSpan}, and therefore $\MorphTame'$ is faithful by Lemma~\ref{lem:FaithfulReflectOrder}.
    Hence, $\MorphTame'$ is an isomorphism.
  \end{itemize}
\end{proof}

\begin{rem}
  There is a connection between categories with covers as defined here and sites as defined in sheaf theory,
  i.e.\ categories equipped with a Grothendieck topology~\cite{maclane2012sheaves}.
  The important difference is that a Grothendieck topology consists of families of morphisms with common codomain,
  while a cover in our sense is just a single morphism.
  However, the axioms that we require for covers somewhat mirror the axioms for Grothendieck topologies.
\end{rem}

It is tempting to say that $\Spantilde{\Cat}$ is $\SpanS{\Cat}$ with $\mc{S}$ consisting only of identity morphisms. However, identities do not form a system
of covers, because from $\pi \seq h$ being identity, it does not necessarily follow that $\pi$ is an identity, but it means that $\pi$ is a split epi.
In fact, split epis are the smallest system of covers possible.

\begin{lem}
  Split epis are the smallest system of covers.
  \label{lem:splitwit}
\end{lem}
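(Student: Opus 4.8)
The plan is to establish two things, which together say that split epis are the \emph{smallest} system of covers: first, that the class of split epis is itself a system of covers, and second, that this class is contained in every system of covers. Throughout I recall that a morphism $\pi \from X \to Y$ is a split epi exactly when it admits a section, i.e.\ a map $s \from Y \to X$ with $s \seq \pi = \id_Y$.

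I would begin with the containment direction, as this is the genuinely ``smallest'' part and it is the slickest. Let $\mc{S}$ be an arbitrary system of covers and let $\pi$ be a split epi with section $s$. Since $\mc{S}$ contains identities, $s \seq \pi = \id_Y \in \mc{S}$; the final axiom of Definition~\ref{def:Cover} (if $f \seq \pi \in \mc{S}$ then $\pi \in \mc{S}$), applied with $f = s$, then forces $\pi \in \mc{S}$. Hence every split epi lies in every system of covers, which is the lower-bound half of the statement.

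Next I would check that the split epis satisfy the five conditions of Definition~\ref{def:Cover}. Four of these are direct manipulations of sections: identities are split by themselves; if $\pi,\tau$ are split by $s,t$ then $\pi\seq\tau$ is split by $t\seq s$, since $(t \seq s)\seq(\pi\seq\tau) = t \seq (s\seq\pi)\seq\tau = t\seq\tau = \id$; products of split epis are split by the product of their sections; and for right cancellation, if $f\seq\pi$ is split by $s$ then $s\seq f$ is a section of $\pi$, so $\pi$ is a split epi.

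The only non-formal step, and hence the main obstacle, is closure under weak pullback, where I must invoke the weak universal property rather than merely shuffle sections. Consider a weak pullback with legs $f \from A \to B$ and $\tau \from A \to C$ over the cospan $B \xrightarrow{\pi} D \xleftarrow{k} C$, with $\pi$ split by $s$; I want to show the opposite projection $\tau$ is again split. I would exhibit a cone over this cospan with apex $C$, taking the maps $k \seq s \from C \to B$ and $\id_C \from C \to C$: these commute because $(k \seq s)\seq\pi = k\seq(s\seq\pi) = k = \id_C\seq k$. The weak universal property of $A$ then yields a mediating map $j \from C \to A$ with $j \seq \tau = \id_C$, so $j$ is the required section of $\tau$. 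The subtlety here is precisely that a weak pullback supplies existence but not uniqueness of the mediating map; this is harmless, since a section need only exist. Combining this with the containment direction completes the proof.
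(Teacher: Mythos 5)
Your proposal is correct and follows essentially the same route as the paper: the minimality half via the axiom that $f \seq \pi \in \mc{S}$ implies $\pi \in \mc{S}$ applied to $s \seq \pi = \id$, and closure under weak pullback by building a cone out of the section and invoking the weak universal property to produce a section of the pulled-back leg. The only difference is that you spell out the remaining axioms (composition, products, right cancellation), which the paper leaves as straightforward.
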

\begin{proof}
  Let $\pi \from A \to B$ be a split epi, so there is $f \from B \to A$ such that $f \seq \pi = \id_{B}$.
  \begin{itemize}
  \item Let $\mc{S}$ be any system of covers, then
  $f \seq \pi = \id_{B} \in \mc{S}$ and therefore also $\pi \in \mc{S}$.
  \item It therefore remains to prove that split epis themselves form a system of covers.
        Almost all axioms are very straightforward to check so we will only verify the closure under weak pullbacks here.
  Let $g \from X \to B$ be any morphism and $P$ a weak pullback of $\pi$ and $g$.
  We get a commutative diagram
  \[
    \begin{tikzcd}
      B \arrow[bend left]{rr}{\id_B} \arrow{r}{f}& A \arrow{r}{\pi} & B \\
      & P \arrow{r}{\overline{\pi}} \arrow{u}{\overline{g}} & X \arrow{u}{g} \\
      X \arrow[dashed]{ur}{h} \arrow{uu}{g} \arrow[bend right]{urr}{\id_X} & &
    \end{tikzcd}
  \]
  where $h$ exists by the weak universal property of $P$, hence $\overline{\pi}$ is a split epimorphism.
  \end{itemize}
\end{proof}

\begin{lem}
  Let $\Cat$ be a category with products and weak pullbacks and let $\mc{S}$ be the class of split epis. Then
  \[
    \Spantilde{\Cat} \cong \SpanS{\Cat}
  \]
  \label{lem:CospantildeSplit}
\end{lem}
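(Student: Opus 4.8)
The plan is to realise the isomorphism as the assignment that is the identity on objects and on spans, thereby reducing the whole statement to a single claim: that the two orderings on spans coincide. Recall that, by construction, $\Spantilde{\Cat}$ and $\SpanS{\Cat}$ have the same objects, both take $\sim$-equivalence classes of spans as morphisms, and define identities, composition (via weak pullbacks), monoidal product and unit in exactly the same way. The \emph{only} place they can differ is the ordering used both to form the $\sim$-classes and to order the homsets. Hence it suffices to show that, for every pair of parallel spans $X \leftarrow A \to Y$ and $X \leftarrow B \to Y$, one has $(X \leftarrow A \to Y) \leq (X \leftarrow B \to Y)$ in $\Spantilde{\Cat}$ if and only if the same inequality holds in $\SpanS{\Cat}$. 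Once the orderings agree as relations on spans, the induced equivalences $\sim$ agree, the morphism sets agree, and all remaining structure matches on the nose, so the identity assignment is an isomorphism of cartesian bicategories.

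For the easy inclusion, suppose the inequality holds in $\Spantilde{\Cat}$, witnessed by a map $\alpha \from A \to B$ commuting with both span legs. I would produce a $\SpanS{}$-witness by taking $A' = A$, top leg $\pi = \id_A$ (an identity, hence a split epi, hence a member of $\mc{S}$), and bottom leg $\alpha$ itself. Commutativity of the resulting diagram is immediate, so the inequality holds in $\SpanS{\Cat}$.

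The converse is where the hypothesis $\mc{S} = \{\text{split epis}\}$ does the real work, and this is the step I expect to be the crux. Suppose the inequality holds in $\SpanS{\Cat}$, witnessed by some $A'$, a cover $\pi \from A' \to A$, and $\alpha \from A' \to B$, so that $\pi$ composed with a leg of $A$ equals $\alpha$ composed with the corresponding leg of $B$. Since $\pi$ is a split epi, I choose a section $s \from A \to A'$ with $s \seq \pi = \id_A$ (matching the convention of Lemma~\ref{lem:splitwit}), and claim that $s \seq \alpha \from A \to B$ is a $\Spantilde{}$-witness. Writing $a_X \from A \to X$ for a leg of $A$ and $b_X \from B \to X$ for the corresponding leg of $B$, the verification is the short computation
\[
  (s \seq \alpha) \seq b_X = s \seq (\alpha \seq b_X) = s \seq (\pi \seq a_X) = (s \seq \pi) \seq a_X = \id_A \seq a_X = a_X,
\]
and symmetrically for the $Y$-leg. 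Thus $s \seq \alpha$ commutes with both legs, establishing the inequality in $\Spantilde{\Cat}$.

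With both inclusions in hand the two orderings coincide, and the identity-on-objects, identity-on-spans functor is the desired isomorphism. The only subtlety to watch is the orientation of the section ($s \seq \pi = \id_A$, not $\pi \seq s = \id$), which is exactly what makes the converse computation collapse; notably, no weak-pullback manipulation is required, in contrast to the general $\SpanS{}$ arguments earlier in the paper.
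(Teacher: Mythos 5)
Your proof is correct, but it takes a genuinely different route from the paper's. The paper derives the statement in two lines from machinery already in place: since $\Spantilde{\Cat}$ has enough maps and satisfies choice, it is tame, so Theorem~\ref{thm:ReconstructTame} yields $\Spantilde{\Cat} \cong \SpanS{\Map(\Spantilde{\Cat})}$ for $\mc{S}$ the surjective maps, and Proposition~\ref{prop:MapSpan} identifies $\Map(\Spantilde{\Cat})$ with $\Cat$ and the surjective maps with the split epis of $\Cat$. You instead argue from the definitions: both categories have the same spans, composition, and monoidal structure, so it suffices to show the two preorders on spans coincide, which you do by inserting an identity cover in one direction and precomposing with a section $s$ of the split epi $\pi$ (with the correct orientation $s \seq \pi = \id$) in the other. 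Your computation $(s \seq \alpha) \seq h = s \seq \pi \seq f = f$ is exactly right, and the observation that the two $\sim$-quotients then agree finishes the job. What the paper's route buys is brevity and a conceptual placement of the lemma as an instance of the reconstruction theorem; what yours buys is self-containedness --- it needs neither tameness of $\Spantilde{\Cat}$ nor Theorem~\ref{thm:ReconstructTame}, only the minimality of split epis among covers in the spirit of Lemma~\ref{lem:splitwit}, and it makes transparent that no weak-pullback manipulation is involved. Both are valid; yours could even serve as an independent sanity check on the reconstruction theorem.
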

\begin{proof}
  Since $\Spantilde{\Cat}$ is tame, Theorem~\ref{thm:ReconstructTame} applies and gives $\Spantilde{\Cat} \cong \SpanS{(\Map(\Spantilde{\Cat}))}$.
  Therefore the claim follows from Proposition~\ref{prop:MapSpan}.
\end{proof}

We can use this to give a proof of Theorem~\ref{thm:Choice}. We have seen in Lemma~\ref{lem:ChoiceEntailsTame}
that enough maps together with the axiom of choice already imply tameness. Theorem~\ref{thm:Choice} will then follow from Theorem~\ref{thm:ReconstructTame}.

\begin{proof}[Proof of Theorem~\ref{thm:Choice}]
  Let $\Bicat$ be a cartesian bicategory with enough maps.
  \begin{itemize}
    \item If $\Bicat \cong \Spantilde{\Map(\Bicat)}$ then $\Bicat$ satisfies~\eqref{eq:AC} by Corollary~\ref{cor:spanAC}.
    \item Conversely, if $\Bicat$ satisfies~\eqref{eq:AC}, then by Lemma~\ref{lem:ChoiceEntailsTame}, $\Bicat$ is tame.
    Therefore Theorem~\ref{thm:ReconstructTame} applies and we get
    \[
      \Bicat \cong \SpanS{\Map(\Bicat)}
    \]
    where $\mc{S}$ is the class of surjective maps in $\Bicat$. Since $\Bicat$ satisfies~\eqref{eq:AC}, by Lemma~\ref{lem:ChoiceSplitsEpis}
    the class of surjectives $\mc{S}$ agrees with the class of split epis.
    Therefore we get $\SpanS{\Map(\Bicat)} \cong \Spantilde{\Map(\Bicat)}$ by Lemma~\ref{lem:CospantildeSplit}, which finishes the proof.
  \end{itemize}
\end{proof}

\begin{exa}
  Recall from Example~\ref{ex:Frob} the category $\FinSet$ of finite sets. This category has the interesting property that injective functions are
  closed under pushout. In fact, the class $\mc{I}$ of injective functions satisfies the dual of all properties of Definition~\ref{def:Cover}, in
  other words it is a system of covers on the opposite category $\op{\FinSet}$. The corresponding tame cartesian bicategory $\SpanI{\op{\FinSet}}$
  turns out to be $\mathbf{ERel}$, the cartesian bicategory of equivalence relations.
  \label{ex:ERelSpan}
\end{exa}

\section{Related work}\label{sec:relatedWork}

Another common example of cartesian bicategories, considered in \cite{carboni1987cartesian}, is the category of relations of
a regular category. The following definitions can be found in~\cite{butz1998regular}.

\begin{defi}
  Let $\mathcal{C}$ be a category. A kernel pair of a morphism $f \from X \to Y$ is a pair of

  $p_1, p_2 \from P \to X$ such that the diagram
  \[
    \begin{tikzcd}[sep=small]
      P \arrow{r}{p_1} \arrow[swap]{d}{p_2} & X \arrow{d}{f} \\
      X \arrow{r}{f} & Y
    \end{tikzcd}
  \]
  is a pullback. An epimorphism is regular if it is the coequaliser of some pair of morphisms.
  $\mathcal{C}$ is regular if it has finite limits, coequalisers of kernel pairs and regular epis are stable under pullback.
\end{defi}

Regular categories admit a well-behaved factorisation system, where every morphism factors as a
regular epi followed by a mono. The factorisation is used to define the cartesian bicategory of relations of a regular category.

\begin{defi}
  Given a regular category $\mathcal{C}$, let $\Relc{\mathcal{C}}$ be the category with the same objects as $\mathcal{C}$ and morphisms
  $X \to Y$ jointly mono spans, i.e.\ spans $X \xleftarrow{f} A \xrightarrow{g} Y$ such that the induced map $A \xrightarrow{\langle f, g \rangle} X \times Y$ is mono. For an arbitrary span,  $X \xleftarrow{f} A \xrightarrow{g} Y$, its \emph{image} is the jointly mono span given by taking the regular epi-mono factorisation of $A \xrightarrow{\langle f, g \rangle} X \times Y$.
  The composition of two jointly mono spans is given by first composing them as spans via pullback and then taking the image of the resulting span. The identity $X \to X$ is given by the jointly mono span $X \xleftarrow{\id_X} X \xrightarrow{\id_X} X$.
  Similar to $\Spantilde{\Cat}$, the categorical product of $\Cat$ induces a monoidal product on $\Relc{\Cat}$.
  Furthermore, the ordering is defined as for $\Spantilde{\Cat}$:
$(X \leftarrow A \to Y) \leq (X \leftarrow B \to Y)$ if there exists a morphism of spans
  $\alpha \from (X \leftarrow A \to Y) \Rightarrow (X \leftarrow B \to Y)$.
\end{defi}

This is the classical approach to regular categories. But they also fit nicely into the context of tame cartesian bicategories, because,
as it turns out, the class of regular epis is a system of covers. Fix a regular category $\Cat$,
then it is well known that identities are regular epis, in fact all split epis are, and that the class of regular epis is closed under
composition, product, pullbacks and whenever $f \seq \pi$ is a regular epi, so is $\pi$. From the stability under pullback it is easy to deduce
stability under weak pullback, as follows: If
\[
  \begin{tikzcd}[sep = tiny]
    & \arrow[swap, bend right]{ddl} A \arrow[dashed]{d}{\alpha} \arrow[bend left]{ddr}{\sigma} & \\
    & \arrow[swap]{dl} P \arrow{dr}{\tau} & \\
    B \arrow[swap]{dr}{\pi} & & \arrow{dl} C \\
    & D & \\
  \end{tikzcd}
\]
is a commutative diagram where $P$ is a pullback, $A$ a weak pullback and $\alpha$ the induced morphism from the universal property of $P$.
If $\pi$ is a regular epi, then so is $\tau$ because regular epis are stable under pullback and $\alpha$ is easily seen to be a split epi by the
universal property of $P$. Therefore $\tau$ is a regular epi as well. In other words, the class of regular epis form a system of covers.
This can be used to characterise $\Relc{\Cat}$ as follows:

\begin{lem}
  Let $\Cat$ be a regular category. Let $\mc{S}$ be the class of regular epis, then
  \[ \SpanS{\Cat} \cong \Relc{\Cat} \]
\end{lem}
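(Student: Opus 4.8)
The plan is to exhibit an identity-on-objects correspondence between the two posetal cartesian bicategories and show that it is an order-isomorphism on each hom-poset that also respects composition. In one direction, let $F \from \SpanS{\Cat} \to \Relc{\Cat}$ send a span $X \xleftarrow{f} A \xrightarrow{g} Y$ to its \emph{image}, i.e.\ the jointly mono span obtained from the regular epi--mono factorisation $A \xrightarrow{e} J \xrightarrow{m} X \times Y$ of $\langle f,g\rangle$. In the other direction, let $G \from \Relc{\Cat} \to \SpanS{\Cat}$ be the inclusion regarding a jointly mono span simply as a span. Since both assignments are identity on objects and both hom-sets carry an ordering, it suffices to prove that $F$ and $G$ are mutually inverse bijections on hom-posets, that they preserve and reflect $\leq$, and that they are functorial.

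The first key step is that every span is $\sim$-equivalent in $\SpanS{\Cat}$ to its image, so each $\sim$-class has a jointly mono representative. Write $\bar s$ for the image of $s = (X \xleftarrow{f} A \xrightarrow{g} Y)$, with regular epi $e \from A \to J$ and legs $f',g'$ of $J$ satisfying $e \seq f' = f$ and $e \seq g' = g$. The witness for $s \leq \bar s$ takes middle object $A' = A$, cover $\pi = \id_A$ and $\alpha = e$; the witness for $\bar s \leq s$ takes $A' = A$, cover $\pi = e$ (a regular epi, hence in $\mc{S}$) and $\alpha = \id_A$. Both diagrams commute by the defining equations of the factorisation, so $s \sim \bar s$. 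In particular $G \circ F$ is the identity up to $\sim$.

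The crux is to show that on jointly mono spans the ordering of $\SpanS{\Cat}$ and the ordering of $\Relc{\Cat}$ agree. If $s_1 \leq s_2$ via an honest span morphism $\alpha$ (the ordering in $\Relc{\Cat}$), then taking $\pi = \id$ together with the same $\alpha$ witnesses $s_1 \leq s_2$ in $\SpanS{\Cat}$, since identities are covers. The hard direction is the converse: suppose $s_1 \leq s_2$ in $\SpanS{\Cat}$, witnessed by a cover $\pi \from A' \to A_1$ and a morphism $\alpha \from A' \to A_2$ with $\pi \seq f_1 = \alpha \seq f_2$ and $\pi \seq g_1 = \alpha \seq g_2$. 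These equations say that the square with sides $\pi$, $\alpha$, $\langle f_1,g_1\rangle$ and $\langle f_2,g_2\rangle$ commutes; since $\pi$ is a regular epi and $\langle f_2,g_2\rangle$ is mono, the diagonal fill-in of the regular epi--mono factorisation system yields a unique $\beta \from A_1 \to A_2$ with $\pi \seq \beta = \alpha$ and $\beta \seq \langle f_2,g_2\rangle = \langle f_1,g_1\rangle$. The latter equation makes $\beta$ a span morphism $s_1 \Rightarrow s_2$, so $s_1 \leq s_2$ in $\Relc{\Cat}$. This orthogonality argument, which uses regularity essentially, is the main obstacle; by antisymmetry it also shows that jointly mono representatives are unique up to isomorphism, so $F$ and $G$ are mutually inverse order-isomorphisms on each hom-poset.

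Finally I would check functoriality. The identity span $X \xleftarrow{\id} X \xrightarrow{\id} X$ is already jointly mono and is fixed by both $F$ and $G$. For composition, recall that $\SpanS{\Cat}$ composes spans by weak pullback, exactly as $\Spantilde{\Cat}$ does, whereas $\Relc{\Cat}$ composes by pullback followed by image. Since a pullback is in particular a weak pullback, and since $F$ sends $\sim$-equivalent spans to isomorphic images by the previous steps, the image of the weak-pullback composite coincides with the pullback-then-image composite; hence $F$ preserves composition and so does its inverse $G$. Combining the hom-poset isomorphisms with functoriality yields $\SpanS{\Cat} \cong \Relc{\Cat}$. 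As an alternative, one could instead verify that $\Relc{\Cat}$ is a tame cartesian bicategory with enough maps whose category of maps is $\Cat$ and whose surjective maps are precisely the regular epis, and then invoke Theorem~\ref{thm:ReconstructTame}, mirroring the proof of Lemma~\ref{lem:CospantildeSplit}.
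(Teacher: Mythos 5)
Your proof is correct and follows essentially the same route as the paper: an identity-on-objects image functor, the observation that every span is $\sim$-equivalent to its image, and a comparison of the two orderings. The only real difference is that you establish order-reflection directly via the diagonal fill-in of the regular epi--mono factorisation system, whereas the paper deduces it from faithfulness (every span equals its image) together with Lemma~\ref{lem:FaithfulReflectOrder}; both are sound.
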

\begin{proof}
  There is a natural morphism of cartesian bicategories $F \from \Span{\Cat} \to \Relc{\Cat}$ that sends a span to its image.
  Whenever
  \[
  \begin{tikzcd}[sep = tiny]
    & \arrow[swap]{dl}{f} A \arrow{dr}{g} & \\
    X & & Y \\
    & \arrow{ul}{h} B \arrow[swap]{uu}{\pi} \arrow[swap]{ur}{k} & \\
  \end{tikzcd}
  \]
  is a commutative diagram with $\pi$ a regular epi then the top and the bottom span will have the same image. Therefore $F$ is compatible with the
  ordering on $\SpanS{\Cat}$, so it induces a morphism $F' \from \SpanS{\Cat} \to \Relc{\Cat}$.
  It is identity on objects, easily seen to be full and furthermore faithful because in $\SpanS{\Cat}$, every span is equal to its image.
\end{proof}

It is known that surjective maps in $\Relc{\Cat}$ are precisely regular epis in $\Cat$, see~\cite[Theorem 3.5]{carboni1987cartesian}. Using Proposition~\ref{prop:ACiffEpisSplit}, we have the following.

\begin{cor}
$\Spantilde{\Cat} \cong \Relc{\Cat}$ iff regular epis split in $\Cat$.
\end{cor}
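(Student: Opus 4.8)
The plan is to reduce both directions of the equivalence to two identifications of the $\SpanS{}$ construction: the preceding lemma, which gives $\Relc{\Cat} \cong \SpanS{\Cat}$ for $\mc{S}$ the class of regular epis, and Lemma~\ref{lem:CospantildeSplit}, which gives $\Spantilde{\Cat} \cong \SpanS{\Cat}$ for $\mc{S}$ the class of split epis. The whole corollary then turns on comparing these two systems of covers, and the bridge between them is the axiom of choice.

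For the implication from right to left, I would assume that every regular epi in $\Cat$ splits. Since a split epi is always a regular epi (it is a split coequaliser, hence the coequaliser of the idempotent it induces together with the identity), this hypothesis forces the class of regular epis and the class of split epis to coincide. The two $\SpanS{}$ constructions recalled above are then literally built from the same system of covers, so chaining the two identifications yields
\[
  \Relc{\Cat} \cong \SpanS{\Cat} \cong \Spantilde{\Cat}\text{,}
\]
as desired.

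For the converse I would transport the axiom of choice across the hypothesised isomorphism. By Corollary~\ref{cor:spanAC}, $\Spantilde{\Cat}$ always satisfies~\eqref{eq:AC}; hence if $\Spantilde{\Cat} \cong \Relc{\Cat}$, then $\Relc{\Cat}$ satisfies~\eqref{eq:AC} as well. As $\Relc{\Cat}$ has enough maps, Proposition~\ref{prop:ACiffEpisSplit} applies and tells us that its surjective maps split; and since surjective maps in $\Relc{\Cat}$ are exactly the regular epis of $\Cat$ by~\cite[Theorem~3.5]{carboni1987cartesian}, this is precisely the assertion that regular epis split in $\Cat$.

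The main obstacle I anticipate is bookkeeping rather than conceptual. One must confirm that $\Relc{\Cat}$ genuinely has enough maps so that Proposition~\ref{prop:ACiffEpisSplit} is applicable, which follows from its functional completeness, and one must be careful that the inclusion of split epis into regular epis is invoked only at the level of classes of morphisms of $\Cat$, independently of the weak-pullback composition used on the relational side. Both points are routine, so the essential content of the proof is the single transport-of-\eqref{eq:AC} step combined with the identification of surjective maps with regular epis.
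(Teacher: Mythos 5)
Your proposal is correct and follows essentially the same route the paper intends: the paper derives the corollary from the preceding lemma ($\Relc{\Cat} \cong \SpanS{\Cat}$ for regular epis), Lemma~\ref{lem:CospantildeSplit}, Proposition~\ref{prop:ACiffEpisSplit}, and the identification of surjective maps in $\Relc{\Cat}$ with regular epis in $\Cat$, exactly as you do. Your write-up merely makes explicit the bookkeeping (enough maps via functional completeness, split epis being regular) that the paper leaves implicit.
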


\bibliographystyle{plainurl}
\bibliography{Bibliography}

\end{document}